%% file: paper.tex
\documentclass[a4paper,UKenglish,cleveref, autoref, thm-restate]{lipics-v2021}
\nolinenumbers
\allowdisplaybreaks[2]

\pdfoutput=1 %
\hideLIPIcs  %

\usepackage{bcprules}
\usepackage{amsmath}
\usepackage{amssymb}
\usepackage{amsthm}
\usepackage{mathtools}
\usepackage{stmaryrd}
\usepackage{multicol}
\usepackage{listings}
\usepackage{booktabs}
\usepackage[dvipsnames]{xcolor}
\usepackage{tikz}
\usetikzlibrary{arrows.meta,calc,positioning}

\lstdefinelanguage{Rust}{
  keywords={
    as, break, const, continue, crate, else, enum, extern,
    false, fn, for, if, impl, in, let, loop, match, mod, move,
    mut, pub, ref, return, self, Self, static, struct, super,
    trait, true, type, unsafe, use, where, while, async, await,
    dyn
  },
  sensitive=true,
  morecomment=[l]{//},
  morecomment=[s]{/*}{*/},
  morestring=[b]"
}

\input{local}

\title{Prophecy-Based Automated Verification of Message-Passing Programs} %

\author{Takashi Nagatomi}{The University of Tokyo, Japan}{}{}{}
\author{Musashi Katsura}{The University of Tokyo, Japan}{}{}{}
\author{Naoki Kobayashi}{The University of Tokyo, Japan}{}{https://orcid.org/0000-0002-0537-0604}{JSPS KAKENHI Grant Numbers JP20H05703, JP26H02486}
\author{Yusuke Matsushita}{Kyoto University, Japan \and MPI-SWS, Germany}{}{https://orcid.org/0000-0002-5208-3106}{Hakubi Project at Kyoto University, JSPS KAKENHI Grant Number JP24KJ0133}
\author{Ken Sakayori}{The University of Tokyo, Japan}{}{https://orcid.org/0000-0003-3238-9279}{JSPS KAKENHI Grant Numbers JP24K20731, JP26H02486}

\authorrunning{T.\ Nagatomi, M.\ Katsura, N.\ Kobayashi, Y.\ Matsushita and K.\ Sakayori } %

\Copyright{Jane Open Access and Joan R. Public} %

\ccsdesc[500]{Theory of computation~Logic and verification}

\keywords{Program verification, message-passing concurrent programs,
constraint Horn clauses} %

\EventEditors{Ana Sokolova and Patrick Totzke}
\EventNoEds{2}
\EventLongTitle{37th International Conference on Concurrency Theory (CONCUR 2026)}
\EventShortTitle{CONCUR 2026}
\EventAcronym{CONCUR}
\EventYear{2026}
\EventDate{September 1--4, 2026}
\EventLocation{Liverpool, UK}
\EventLogo{}
\SeriesVolume{391}
\ArticleNo{5}

\begin{document}

\maketitle

\begin{abstract}
  We propose a fully automated method for verifying functional correctness of message-passing concurrent programs by reducing verification problems to constrained Horn clause (CHC) solving. Inspired by RustHorn’s prophecy-based technique, we represent each sender channel by a list of values to be sent over the channel in the future, which enables modular encoding of sender and receiver threads in CHCs. To capture causal dependencies between different channels, we further attach timestamps to messages. We prove that the resulting reduction is sound and complete: a program is free from assertion failures if and only if the corresponding system of CHCs is satisfiable. We have also implemented a prototype verifier for Rust-like programs and experimentally confirmed the effectiveness of the approach.
\end{abstract}

\input{sections/Introduction}
\input{sections/TargetLanguage3}

\input{sections/Translation3}

\input{sections/ImplementationAndEvaluation.tex}

\input{sections/Discussion.tex}
\input{sections/Related.tex}

\input{sections/Conc.tex}
\bibliography{reference}

\clearpage
\appendix
\input{sections/TranslationAppx}

\input{sections/ProofOfProperties}

\input{sections/EvaluationAppx.tex}

\end{document}

%% file: sections/Introduction.tex
\section{Introduction}
\label{sec:intro}
We propose a fully automated method for verifying functional
correctness of message-passing concurrent programs. Fully
automated verification of such programs is challenging because their
behavior is non-deterministic and depends on complex interactions through message
exchanges. Although type-based approaches to automated analysis and
verification of message-passing programs have been proposed~\cite{kobayashi2003type,huttel2016foundations}, they are
usually conservative and mainly focus on abstract properties such as
deadlock freedom, information-flow security, and protocol
conformance. In contrast, automatically verifying functional
correctness properties such as ``the number of messages received is
\(n\)'' and ``the result of the whole computation is \(2n\)'' (where
\(n\) may be a program parameter) remains challenging,
despite recent progress such as the refinement-session-type-based approach of
Toby and Ankush~\cite{10.1007/978-3-032-22723-2_11}.

Inspired by RustHorn’s prophecy-based technique~\cite{matsushita2021rusthorn} for reducing
verification of sequential programs with mutable references to CHC
solving, that is, satisfiability checking of constrained Horn clauses,
we propose a prophecy-based reduction from verification of
message-passing concurrent programs to CHC solving~\cite{Bjorner15,Angelis}.

\begin{figure}[tbp]
\begin{verbatim}
fn msg_count(n: usize) {
    let (s, r) = channel();
    for i in 0..n { // spawn n senders
        let s_clone = s.clone();
        thread::spawn(move || {
            s_clone.send(1).unwrap(); // each thread sends 1 to the channel
        });
    }
    let mut c = 0; // number of messages received
    loop {  // repeatedly receive and count messages
        let _v = r.recv().unwrap();
        c += 1;  // increments the message counter
        assert!(c <= n); // error if more than n messages are received
    }
}
\end{verbatim}
\caption{A Rust-style Message-Passing Concurrent Program}
\label{fig:rust-program}
\end{figure}

To illustrate the
idea, consider the Rust-style concurrent program in Figure~\ref{fig:rust-program}.
The function \texttt{msg\_count} takes a natural number \(n\) and
creates a sender channel \texttt{s} and a corresponding receiver channel \texttt{r}.
It then spawns \(n\) threads, each of which sends \(1\) through the channel.
It next repeatedly receives
messages, counts them, and asserts that the number of messages
received so far is at most \(n\).  Suppose we wish to verify that the
assertion never fails.
We convert this problem to the satisfiability problem for the following constrained
Horn clauses (CHCs).
\begin{align*}
 & \bot \Leftarrow \entry(\TRUE, n)\land n\ge 0 .\\
  &  \entry(b, n) \Leftarrow \forloop(b, 0, n, l, l).\\
  &\forloop(b, n, n, \NIL, r) \Leftarrow \rloop(b, 0, n, r).\\
 & \forloop( b_1\lor b_2, i, n, s, r) \\
  &\qquad \Leftarrow i<n \land \MERGE(s_1,s_2,s)\land \send(b_2, s_2)\land   \forloop(b_1, i+1, n, s_1, r).\\
  & \send(\FALSE, s) \Leftarrow s=[1]. \\ %
  & \rloop(b, c, n,r) \Leftarrow r=v::r' \land c+1\le n \land \rloop(b, c+1, n,r').\\
  & \rloop(\TRUE, c, n,r) \Leftarrow r=v::r' \land c+1> n.
\end{align*}
The system of CHCs above is constructed so that it is satisfiable, i.e.,
there exists an interpretation of the predicates that makes all the clauses valid,
if and only if, for every non-negative integer \(n\), \(\texttt{msg\_count}(n)\)
never causes an assertion failure.
Thus, the verification problem can be reduced to CHC satisfiability and handled
by a CHC solver~\cite{DBLP:journals/fmsd/KomuravelliGC16,hojjat2018eldarica,DBLP:conf/aplas/Champion0S18,katsura2025automated}.

Intuitively, the predicate \(\entry(b, n)\) means that a call to
\(\texttt{msg\_count}(n)\) may terminate, with \(b=\TRUE\) indicating that
the call may terminate with an assertion failure.
Thus, the first clause states that a call to \(\texttt{msg\_count}(n)\) never
terminates with an assertion failure.
The predicate \(\forloop(b, i,n,s,r)\) means that, from the program state at the beginning
of the \texttt{for}-loop with loop index \(i\), the remaining execution may terminate,
with \(b\) indicating the error status.
Similarly, the predicate \(\rloop(b, c,n,r)\) means that, from the program state at the
beginning of the second loop with counter value \(c\), the remaining execution may terminate, with \(b\) indicating the error status.

The main issue is how to represent channels in CHCs.
Inspired by RustHorn's prophecy-based technique, we represent a sender channel
\(s\) as a list of values to be sent \emph{in the future}, and a receiver channel
\(r\) as a list of values that may be received \emph{in the future}.
Thus, we model the behavior of the program as follows.
When creating a channel, the program \emph{guesses} a list \(l\) of values to be sent
through the channel.
The second clause reflects this behavior: it says that if the execution from the beginning of the \texttt{for}-loop 
may terminate with error flag \(b\) by exchanging a list \(l\) of values through the channel,
then \(\texttt{msg\_count}(n)\) may terminate with error flag \(b\).
Cloning of the sender channel is represented by the list-merge relation
\(\MERGE(s_1,s_2,s)\) in the fourth clause, which means that the list \(s\) is
obtained by merging \(s_1\) and \(s_2\) in some order.
This models the creation of a clone of channel \(s\): the two resulting sender channels
are represented by \(s_1\) and \(s_2\), and the total list of values to be sent is obtained
by merging them.
The fifth clause, \(\send(\FALSE, s) \Leftarrow s=[1] \), models the send operation.
The equation \(s=[1]\) ensures that the prophecy \(s\), representing the list of values
to be sent, matches the actual list of messages being sent, namely \([1]\).
The last two clauses, in contrast, model the receive operation in the program.
Receiving a value \(v\) from a receiver channel \(r\) is modeled by the equation
\(r=v::r'\), which assumes that \(v\) is at the head of the receiver queue.
Note that there is no clause for the case where \(r\) is an empty list; this means that
no message is available, and hence the execution is blocked.

The crux of the prophecy-based representation of channels is that it enables
modular modeling of sender and receiver threads: it suffices that the sender
and receiver agree, when a channel is created (as in the second clause above),
 on the values to be sent, and afterwards, the sender locally checks at each
send operation that the prophecy is correct, while the receiver locally assumes
that messages arrive as described by the prophecy.
One may be tempted to model a channel by the current contents of its message queue,
rather than by the list of values to be sent in the future.
In that case, however, the message queue would have to be treated as global state:
since it is updated by both send and receive operations, sender and receiver threads
could no longer be modeled as separate predicates in the resulting CHCs.

The idea above is sufficient to ensure soundness: if the corresponding system of CHCs
is satisfiable, then the program does not cause any assertion failure.
However, it is not sufficient for completeness.
This is because the prophecy-based representation of channels does not capture
the causal dependency between different channels. To address this issue,
we attach a \emph{timestamp} to each value to be sent and represent
a channel as a list of pairs consisting of a timestamp and a value: for example,
\([(1, v_1); (3, v_2)]\) means that the value \(v_1\) is to be sent at time \(1\),
and \(v_2\) is to be sent at time \(3\). We show that the reduction to CHCs
extended with timestamps is sound and complete:
a program does not cause any assertion failure if and only if
the corresponding system of CHCs is satisfiable.

The contributions of this paper are summarized as follows.
\begin{itemize}
\item Prophecy- and timestamp-based reduction from the safety verification of
  message-passing concurrent programs to CHC solving.
  We prove that the reduction is sound and complete.
  We also discuss a few variations of the reduction, which capture
  subtle variations of message-passing semantics.
\item Implementation and experiments.
  We have implemented a prototype verification tool for Rust-like programs
  based on the above idea, and conducted experiments to evaluate its effectiveness.
  While the experiments confirm the effectiveness of our method,
they also suggest the need for further improvement of the backend CHC solvers,
which would also benefit the community of CHC-based verification.
\end{itemize}

The rest of this paper is structured as follows.
Section~\ref{sec:target-language} introduces a language with message-passing concurrency.
Section~\ref{sec:Translation} formalizes our reduction from the safety verification of message-passing concurrent programs to CHC solving, and Section~\ref{sec:implementation-evaluation} reports experimental results.
Section~\ref{sec:discussion} discusses the flexibility of our approach under variations of asynchronous message-passing semantics.
Section~\ref{sec:rel} discusses related work, and Section~\ref{sec:conc} concludes.

%% file: sections/TargetLanguage3.tex
\section{Target Language}
\label{sec:target-language}
This section defines the target language, a
message-passing concurrent language
augmented with first-order recursive functions.
The communication channels obey the multi-producer, single-consumer (MPSC) principle as in Rust.

\textbf{Notation.}
We write $\seq{a}$ to denote a sequence $a_1, \ldots, a_n$.
This notation is also extended to binary relations in a pointwise manner.
For example, we  write $\seq{x}:\seq{\TYPE}$ for a sequence of type bindings $x_1:\TYPE_1, \ldots, x_n:\TYPE_n$.

\subsection{Syntax and Types}
\label{sec:target-syntax}

\subparagraph*{Syntax}
In our target language,
a program consists of function definitions, each of which
 only performs a single ``instruction''.
Control flow is encoded as function calls, and every function ends with a call to a continuation function.
This design is chosen to simplify the translation to CHCs.

The syntax is given as follows:
\begin{align*}
  \mbox{ (statements) } M := &\  \FAIL \mid () \mid f(\seq{a}) \mid \ifstmt{a}{f_1(\seq{a})}{f_2(\seq{a})} \\
    \mid &\ \sendstmt{a}{x}{f(\seq{a})} \mid \ \recvstmt{x}{y}{f(\seq{a})} \mid \newstmt{x}{y}{f(\seq{a})}  \\
    \mid &\ \dupstmt{x}{y}{f(\seq{a})} \mid \spawnstmt{f_1(\seq{a_1})}{f_2(\seq{a_2})} \\
    \mbox{ (arithmetic exp.) }a  ::= &\ n  \mid x \mid a_1\ \OP\ a_2 \\
  \mbox{ (function defs.) }D  ::= &\ \{\,f_1(\seq{x}_1\COL{}\seq{b}_1)\ =\ M_1,\ldots,f_n(\seq{x}_n\COL{}\seq{b}_n)\ =\ M_n\,\} \\
  \mbox{ (program) } P ::= &\  (D, f(\seq{a}))\\
  \mbox{ (argument types) }\bt  ::=&\ \INT \mid \ReceiverType \mid \SenderType \\
  \mbox{ (types) }  \TYPE  ::=& \ \bt \mid \ut \mid (\bt,\ldots,\bt) \rightarrow \ut
\end{align*}
The definition of \emph{arithmetic expressions} and the constructs in the first line should be mostly self-explanatory.
We briefly note a few points.
Here \( \OP \) is a meta-variable for binary integer operators.
Comparison and logical operators, such as \( = \) or \( \land \), are also regarded as integer operators by regarding \( \{0, 1 \}\) as Boolean values, and we may write \( \TRUE \) and \( \FALSE \) in place of \( 1 \) and \( 0 \), respectively.
The statement \( \FAIL \) represents abnormal program termination.
The second line shows basic channel-related operations: sending
the value of \(a\) via the channel \( x \), receiving an integer from the channel \( y \) and channel creation.
Channel creation \( \newstmt{x}{y}{f(\seq{a})} \) creates the two endpoints of the channel, where \( x \) is the sender channel and \( y \) is the corresponding
receiver channel, and proceeds by calling \( f \). %
The operator \( \mathtt{clone} \) duplicates (i.e.~creates an alias of) a sender channel.
The statement \( \spawnstmt{f_1(\seq{a_1})}{f_2(\seq{a_2})} \) creates a new thread to execute \( f_1(\seq{a_1}) \), while the current thread continues with \( f_2(\seq{a_2}) \).
A \emph{program} is a pair of function definitions and the main function.
The type \( \ut \) denotes the unit type; note that the return type of a function is restricted to the unit type.
The types \( \ReceiverType \) and \( \SenderType \) represent types of receiver and sender
channels, respectively.

\bigskip
\begin{example}
  \label{ex:running-ex}
  \newcommand*{\mcnt}[1]{\mathit{msg\_count}(#1)}
  \newcommand{\fnsymb}[1]{f_{#1}}
  \newcommand*{\fn}[2]{\fnsymb{#1}(#2)}
  \newcommand*{\for}[1]{\mathit{for}(#1)}
  \newcommand*{\loopfn}[1]{\mathit{loop}(#1)}
  \newcommand*{\assert}[1]{\mathit{assert}(#1)}
  \newcommand*{\fail}[1]{\mathit{fail}(#1)}
  The program in the introduction can be written in our target language as \( (D, \mcnt n)  \), where the function definitions in \( D \) are given as follows:
\begin{align*}
  &\mcnt {n \COL{} \INT} = %
  \newstmt s r{\for{0, n, s, r}}\\
  &\for{i \COL{} \INT, n \COL{}\INT, s \COL{} \SenderType, r \COL{} \ReceiverType} = \ifstmt {i < n}{\fn 4 {i, n, s, r}}{\loopfn {0, n, r}}  \\
  &\fn 4 {i \COL{} \INT, n \COL{}\INT, s \COL{} \SenderType, r \COL{} \ReceiverType} = \dupstmt {s'} {s} {\fn 5 {i, n, s, r, s'}} \\
  &\fn 5 {i \COL{} \INT, n \COL{}\INT, s \COL{} \SenderType, r \COL{} \ReceiverType, s' \COL{} \SenderType} = \spawnstmt {\fn 6 {s'}}{\for{i + 1, n, s, r}}\\
  &\loopfn{c \COL{} \INT, n \COL{} \INT, r \COL{} \ReceiverType} = \recvstmt {\_v} r {\assert {c + 1, n, r}}\\
  &\assert{c\COL{} \INT, n \COL{} \INT, r \COL{} \ReceiverType} = \ \ifstmt{c \le n}{\loopfn {c, n, r}}{\fail{c, n, r}} \\
  &\fail{c\COL{} \INT, n \COL{} \INT, r \COL{} \ReceiverType} = \FAIL \\
  &\fn 6 {s \COL{} \SenderType} = \sendstmt 1 s {()} \\
\end{align*}
We just prepare a function for each program point whose arguments are variables that are alive at that point.
  Roughly speaking, \( \fnsymb i\) corresponds to the \( i \)-th line of \autoref{fig:rust-program}. We note that the assertions can be encoded using \(\FAIL\).
\qed
\end{example}

\subparagraph*{Typing}
We consider only well-typed programs, as defined by 
the typing rules in \autoref{fig:target-typing-rules}.
A \emph{type environment}, written \( \tenv = x_1 : \TYPE_1, \ldots, x_n : \TYPE_n \), is a finite map from variables to types.
We treat channel types as linear types, and
say that \( \tenv \) is non-linear if it contains no bindings of the form \( x\COL\ReceiverType\) or \( x\COL\SenderType\).
We write \( \tenv_1 + \tenv_2 \) for the union of the two environments, which is only defined if the linear parts of the domains of $\tenv_1$ and $\tenv_2$ are disjoint.
The typing rules are fairly standard, except for the linear (or affine, more precisely)
treatment of channel types:  for example, the statement
\(\spawnstmt{f_1(x)}{f_2(x)}\) is not allowed if \(x\) has a channel type.
Weakening of a variable with a channel type is only allowed in the rules for \( () \) and \( \FAIL \).
\begin{figure}[tbp]
\begin{multicols}{2}
\infrule
  {\NONLIN{\tenv}}
  {\tenv, x\COL{} \bt \vdash x : \bt}
\infrule
  {\NONLIN{\tenv}}
  {\tenv \vdash n : \INT}
\infrule
  {\tenv \vdash a_1 : \INT \andalso \tenv \vdash a_2 : \INT}
  {\tenv \vdash a_1\ \OP\ a_2 : \INT}
\infrule{}{\tenv \vdash \FAIL : \ut}
\infrule{}{\tenv \vdash () : \ut}
\infrule
  {\tenv_0(f) = (\bt_1,\ldots,\bt_n) \rightarrow \ut \andalso \NONLIN{\tenv_0} \\ \tenv_i \vdash a_i : \bt_i\ \mbox {(for each i $\in \{\,1,\ldots,n\,\}$)}}
  {\tenv_0 + \tenv_1 + \ldots + \tenv_n \vdash f(a_1,\ldots,a_n) : \ut}
\infrule
  {\tenv_0 \vdash a : \INT\\ \tenv \vdash f_i(\seq{a}_i) : \ut \mbox{ (for $i = 1, 2$) }}
  {\tenv_0 + \tenv \vdash \ifstmt{a}{f_1(\seq{a}_1)}{f_2(\seq{a}_2)} : \ut}
\infrule
  {\tenv_i \vdash f_i(\seq{a}_i) : \ut \mbox{ (for $i = 1, 2$) }}
  {\tenv_1 + \tenv_2 \vdash \spawnstmt{f_1(\seq{a}_1)}{f_2(\seq{a}_2)} : \ut}
\infrule
  {\tenv + x\COL{}\SenderType + y\COL{}\ReceiverType \vdash f(\seq{a}) : \ut}
  {\tenv \vdash \newstmt{x}{y}{f(\seq{a})} : \ut}
\infrule
  {\tenv_0 \vdash a : \INT \andalso \tenv(x) = \SenderType \andalso \tenv \vdash f(\seq{a}) : \ut}
  {\tenv_0 + \tenv \vdash \sendstmt{a}{x}{f(\seq{a})} : \ut}
\infrule
  {\tenv(y) = \ReceiverType \andalso \tenv, x\COL{}\INT \vdash f(\seq{a}) : \ut}
  {\tenv \vdash \recvstmt{x}{y}{f(\seq{a})} : \ut}
\infrule
  {\tenv(y) = \SenderType \andalso \tenv + x\COL{}\SenderType \vdash f(\seq{a}) : \ut}
  {\tenv \vdash \dupstmt{x}{y}{f(\seq{a})} : \ut}
\infrule
  {\tenv = \{\, f_1 \COL{} (\seq{b}_1) \rightarrow \ut,\ldots,f_n \COL{} (\seq{b}_n) \rightarrow \ut \,\} \\
  \tenv, \seq{x}_i \COL{} \seq{b}_i \vdash M_i : \ut \mbox{ (for each $i \in \{\,1,\ldots,n\,\}$) }}
  {\vdash \SET{f_1(\seq{x}_1\COL{}\seq{b}_1) = M_1,\ldots,f_n(\seq{x}_n\COL{}\seq{b}_n) = M_n} : \tenv}
\infrule
  {\vdash D : \tenv \andalso \tenv \vdash f(\seq{a}) : \ut}
  {\vdash (D, f(\seq{a})) : \ut}
\end{multicols}
\caption{Typing Rules }
\label{fig:target-typing-rules}
\end{figure}

\begin{remark}
  Example~\ref{ex:running-ex} slightly abuses the syntax: some functions drop channels even though their bodies are neither \( () \) nor \( \FAIL \).
  To be more precise, we could spawn a function \( \mathit{drop}(\seq x)=() \) to discard channels;
  however, we omit such explicit drop functions throughout the paper for readability.
  \qed
\end{remark}

\subsection{Operational Semantics}
\label{sec:target-semantics}

The operational semantics is defined by the transition relation
\(C \stepC{D} C' \) on configurations, where a configuration \(C\) is
either \(\FAIL\), representing failure, or a pair \((\Threads, \Qu)\)
consisting of a \emph{thread pool}
and a \emph{channel context}.
A thread pool \(\Threads\) is a finite multiset of function calls of the form
\(f(\seq{a})\). A channel context \(\Qu\in
\Var \rightharpoonup  2^\Var \times \mathbb{Z}^*  \)
maps each receiver channel name (which is represented as a variable)
to a pair consisting of the set of corresponding sender channel names
and a sequence of integers representing the message queue.

\input{sections/os3}

The transition relation \( C \stepC{D} C' \) is defined by the rules
shown in \autoref{fig:target-operational-semantics}.
In the figure, \([\seq{a}/\seq{x}]a'\) denotes the expression
obtained from \(a'\) by substituting \(\seq{a}\) for \(\seq{x}\), and
$a \eval v$ denotes the big-step evaluation relation for simple expressions; its definition is standard and thus omitted.
We briefly explain some important rules.
The rule \rn{R-Send} appends the sent value to the end of the corresponding message queue in the channel context.
This reflects the non-blocking nature of asynchronous communication, where
a sender does not wait for the receiver.
The rule \rn{R-Recv} dequeues a value from the corresponding message queue in the channel context and passes it to the continuation call.
Note that the message queue must be non-empty for \rn{R-Recv} to apply.
The rule \rn{R-Clone} creates a fresh variable \( s \) as an alias for
the sender channel \([\seq{a}/\seq{x}]z\) and relates it to the receiver channel associated with \([\seq{a}/\seq{x}]z\).

%% file: sections/os3.tex
\begin{figure}[t]
  \typicallabel{R-Clone}
  \begin{multicols}{2}
\infrule[R-Fail]
  {f(\seq{x}) = \FAIL \in D}
  {(\Threadsplus{f(\seq{a})}, \Qu) \stepC{D} \FAIL}
\infrule[R-End]
  {f(\seq{x}) = \Vunit \in D}
  {(\Threadsplus{f(\seq{a})}, \Qu) \stepC{D} (\Threads,\Qu)}
  \end{multicols}
  \rulesp
  \infrule[R-Func]
        {f(\seq{x}%
          ) = g(\seq{a}') \in D}
  {(\Threadsplus{f(\seq{a})}, \Qu) \stepC{D} (\Threadsplus{g([\seq{a}/\seq{x}]\seq{a}')}, \Qu)}
\rulesp  
\infrule[R-IfT]
        {f(\seq{x}%
          ) = \ifstmt{a_0}{g_1(\seq{a}_1)}{g_2(\seq{a}_2})\in D \andalso  [\seq{a}/\seq{x}] a_0 \eval n \andalso n \ne 0}
  {(\Threadsplus{f(\seq{a})}, \Qu) \stepC{D} (\Threadsplus{g_1([\seq{a}/\seq{x}]\seq{a}_1)}, \Qu)}
\rulesp  
\infrule[R-IfF]
        {f(\seq{x}%
          ) = \ifstmt{a_0}{g_1(\seq{a}_1)}{g_2(\seq{a}_2})\in D  \andalso [\seq{a}/\seq{x}] a_0 \eval 0} %
  {(\Threadsplus{f(\seq{a})}, \Qu) \stepC{D} (\Threadsplus{g_2([\seq{a}/\seq{x}]\seq{a}_2)}, \Qu)}
\rulesp  
\infrule[R-Send]
        {f(\seq{x}%
          ) = \sendstmt{a_0}{y}{g(\seq{a'})}\in D \andalso \Qu(r) = (S, \qc) \andalso
      [\seq{a}/\seq{x}]y \in S \andalso 
  [\seq{a}/\seq{x}]{a_0} \eval n}
  {(\Threadsplus{f(\seq{a})}, \Qu) \stepC{D} (\Threadsplus{g([\seq{a}/\seq{x}]\seq{a}')}, \Qu[r \mapsto (S, \qc \cdot n)])}
\rulesp  
  \infrule[R-Recv]
          {f(\seq{x}%
            ) = \recvstmt{y}{z}{g(\seq{a}')} \in D \andalso 
            r =[\seq{a}/\seq{x}]z \andalso \Qu(r) = (S, n \cdot \qc)}
  {(\Threadsplus{f(\seq{a})}, \Qu) \stepC{D} (\Threadsplus{g([n/y,\seq{a}/\seq{x}]\seq{a}')}, \Qu[r \mapsto (S, \qc)])}
\rulesp  
\infrule[R-New]
        {f(\seq{x}%
          ) = \newstmt{y}{z}{g(\seq{a}')} \in D \andalso s, r\  \text{fresh}}
  {(\Threadsplus{f(\seq{a})}, \Qu) \stepC{D} (\Threadsplus{g([s/y, r/z,\seq{a}/\seq{x}]\seq{a}')}, \Qu[r \mapsto (\SET{s}, \eseq)])}
\rulesp  
\infrule[R-Clone]
        {f(\seq{x}%
          ) = \dupstmt{y}{z}{g(\seq{a}')} \in D  \quad 
  \Qu(r) = (S, \qc) \quad [\seq a/ \seq x]z \in S \quad s \ \text{fresh}}
{(\Threadsplus{f(\seq{a})}, \Qu)  \stepC{D} (\Threadsplus{g([s/y,\seq a/ \seq x]\seq{a}')}, \Qu[r \mapsto (S \cup \SET{s}, \qc)]}
\rulesp
  \infrule[R-Spawn]
          {f(\seq{x}%
            ) = \spawnstmt{g_1(\seq{a}_1)}{g_2(\seq{a}_2)} \in D \andalso j  \ \text{fresh}}
  {(\Threadsplus{f(\seq{a})}, \Qu) \stepC{D} (\Threadsplus{g_1([\seq{a}/\seq{x}]\seq{a}_1), g_2([\seq{a}/\seq{x}]\seq{a}_2)}, \Qu)}
\caption{Operational Semantics}
\label{fig:target-operational-semantics}
\end{figure}

%% file: sections/Translation3.tex
\section{Reduction to CHC Solving}
\label{sec:Translation}
This section describes our translation from the target language to CHCs.

\subsection{Preliminary: CHCs}
\label{subsec:chc-formalization}
We briefly review constrained Horn clauses (CHCs) with lists and pairs.
The class of CHCs used in this paper is a subclass of CHCs over the combined theory of integer arithmetic and algebraic data types (ADTs).

The sets of \emph{sorts}, \emph{terms}, and \emph{constrained formulas} are defined as follows:
\begin{align*}
  \mbox{ (sorts) } \basesort  ::= &\ \INT \mid \PairTy \INT \INT \mid \ListTy {\INT} \mid \ListTy {(\PairTy \INT \INT)} \\
  \mbox{ (terms) }\term  ::= &\ x \mid n \mid \term_1\ \OP\ \term_2 \mid \NIL \mid \term_1:: \term_2 \mid (\term_1, \term_2) \\
   \mbox { (constraints) } \Constraint ::= &\ \top \mid \bot \mid \term_1 = \term_2 \mid \term_1 < \term_2 \mid \Constraint \lor \Constraint \mid \Constraint \land \Constraint \mid \lnot \Constraint %
\end{align*}
Here, the metavariable \( \OP \) ranges over the same binary arithmetic operations as those used in the definition of the target language.
As usual, \( \NIL \) is the nil list, \( \term_1 :: \term_2 \) is the cons operation, and \( (\term_1, \term_2) \) is the pairing.
The predicate \( < \) is used for integer comparison, and the predicate \( = \) is used for equality between terms of the same sort.

A \emph{constrained Horn clause} (CHC) \( \clause \) is a formula of the form
\begin{gather*}
  \forall \seq{x}:\seq{\basesort}.\ \Head \Leftarrow P_1(\seq{\term}_1) \land \ldots \land P_n(\seq{\term}_n) \land \Constraint
\end{gather*}
where \( \Head \), called the \emph{head}, is either \( \bot \) or \( P_0(\seq{\term}_0) \).
Each \( P_i \) is a \emph{predicate variable} that has an associated type; we write \( P : (\basesort_1, \ldots, \basesort_n ) \) if \( P \) is an \( n \)-ary predicate whose \( i \)-th argument is of sort \( \basesort_i \).
We stipulate that all the CHCs are well-sorted and closed, i.e.~\( \seq{x} \) covers all the free variables appearing in \( \seq{\term}_i \) and \( \Constraint \).
For readability, we may omit the universal quantifiers of a CHC.

A \emph{system of CHCs} is a pair \( (\CHCSet, \sortenv) \) where \( \CHCSet = \{ \clause_1, \ldots, \clause_m \} \) is a finite set of CHCs, regarded as their conjunction, and \( \sortenv = \{P_1, \ldots, P_n\} \) is the set of predicate variables appearing in \( \CHCSet \).
We often write \( \CHCSet \) for \( (\CHCSet, \Delta) \) when \( \Delta \) is clear from the context.
A \emph{solution} of \( (\CHCSet, \{ P_1, \ldots, P_n \}) \) is an interpretation for \( \{P_1, \ldots, P_n \} \) under which \( \CHCSet \) becomes true.
If such a solution exists, we say that \( \CHCSet \) is \emph{satisfiable}.

\begin{example}
  \label{ex:CHC}
  Consider a system of CHCs \((\CHCSet, \set{\forloop\COL(\INT, \INT, \INT, \ListTy{\INT},\ListTy{\INT}), \rloop\COL (\INT, \INT,\INT,\ListTy{\INT})})\), where \(\CHCSet\) consists of the following clauses.
  \begin{align*}
     & \bot \Leftarrow \forloop(\TRUE, 0, n, l, l)\land n\ge 0.\qquad
  \forloop(b, n, n, \NIL, r) \Leftarrow \rloop(b, 0, n, r).\\
  & \forloop(b, i, n, 1::s_1, r) \Leftarrow \forloop(b, i+1, n, s_1, r)\land i<n.\\
  & \rloop(b, c, n,r) \Leftarrow \rloop(b, c+1, n,r')\land r=v::r' \land c+1\le n.\\
  & \rloop(\TRUE, c, n,r) \Leftarrow r=v::r' \land c+1> n.
  \end{align*}
  This is a simplified version of the CHCs in Section~\ref{sec:intro}. It is satisfiable, with the following interpretation as a witness:
  \begin{align*}
    \{&  \forloop \mapsto \set{(\TRUE, i,n,s,r) \mid (0\le i\le n\land |r|>n \land |s|=n-i)\lor n<0},\\
     & \rloop \mapsto \set{(\TRUE, c, n, r) \mid c+|r|>n}\}.
  \end{align*}
  Here, \(|r|\) denotes the length of list \(r\). To see that the interpretation above satisfies the first clause,
  notice that 
  \(\forloop(\TRUE, 0, n, l, l)\) is equivalent to \((0\le 0\le n\land |l|>n \land |l|=n-0)\lor n<0\) under the interpretation,
  which can be simplified to \(n<0\).
\end{example}

\subsection{Translation from Target Language to CHCs}
\label{subsec:translation}
Now we formalize the translation into CHCs.
We first define the translation of types, then describe how programs are translated into systems of CHCs, and finally give the translation rules for individual statements.

\subparagraph{Translation of types}
The encoding \( \transmap{-} \) from argument types to sorts is defined by
\begin{align*}
  \transmap \INT \defeq \INT \qquad \transmap {\ReceiverType} \defeq \ListTy{(\PairTy{\INT}{\INT})} \qquad \transmap{\SenderType} \defeq  \ListTy{(\PairTy{\INT}{\INT})}.
\end{align*}
Channel types are translated into lists of pairs because each channel is represented as a list of timestamped future values to be received or sent on the channel, as explained in the introduction.
Functions are translated into predicates. Accordingly, function types are translated as
\begin{align*}
  \transmap{(\bt_1,\ldots,\bt_n) \rightarrow \ut} \defeq (\INT, \INT, \transmap {\bt_1}, \ldots, \transmap{\bt_n}).
\end{align*}
Note that two additional integer parameters are inserted during the translation.
The first integer argument will be used to track whether a program failure has occurred, and the second will be used to pass timing information, as we shall see below.

\subparagraph*{Translation of programs}
Now we explain how a well-typed program \( (D, f(\seq a)) \) is translated into a system of CHCs.
The core idea is to introduce a predicate \( F_i :  \transmap {\seq{\bt}_i \to \ut} \) for every function \( f_i : {\seq{\bt}_i \to \ut} \) in \( D \), that roughly expresses the following property:
\begin{quote}
  \( F_i(b, t, \seq x_{\INT}, \seq x_{\ct}) \) holds if a call to the function \( f_i \) at time \( t \) with integer values \( \seq{x}_{\INT} \) and channels \( \seq{x}_{\ct} \)\footnote{Here, we assume that integer arguments precede list arguments, although this is not the case in general.}
  satisfies the following conditions.

  \begin{itemize}
    \item it behaves as prophesied by the lists \( \seq x_{\ct} \), that is, the channels send (or receive) the values in the list in the order they appear, following their declared timestamps; and
    \item the flag \( b \) correctly records whether the execution eventually reaches \(\FAIL\).
  \end{itemize}
\end{quote}
Here, \( b = \TRUE \) means that \(\FAIL\) is reachable, whereas \( b = \FALSE \) covers all other cases, including normal termination, divergence, and blocking while waiting for a message.

Based on this idea, we translate each function definition \( f_i(\seq x_i) = M_i\) into a clause \( F_i(b, t, \seq x_i) \Leftarrow \transmap{M_i}^b_t \).\footnotemark
\footnotetext{Strictly speaking,  \( \transmap{M_i}^b_t \) may be a disjunction \( \varphi_{i, 1} \lor \ldots \lor \varphi_{i, m_i} \). In that case, \( F_i(b, t, \seq x_i) \Leftarrow \transmap{M_i}^b_t \) is split into clauses \(  F_i(b, t, \seq x_i) \Leftarrow \varphi_{i,j}\).}
Here we first discuss the two simplest cases, \( M = \FAIL \) and \( M = () \), deferring the full translation rules to later in this subsection.
If \( f_i(\seq x_i) = \FAIL \in D \), then we add the clause
\[
  F_i(b, t, \seq x_i) \Leftarrow b= \TRUE \land \seq s_i = \seq \NIL
\]
where \( \seq s_i \) are variables of type \( \SenderType \) in \( \seq{x}_i \).
(We shall write \( \seq{s}_i = \OutChannels{\seq x_i} \) to denote this condition in the sequel.)
Since the execution of \( f_i \) can reach \(\FAIL\), the failure flag must be \( \TRUE \). Moreover, since no further send operations will be executed, the prophecies for output channels must be empty lists.
In contrast, we impose no such restriction on input channels, because their prophecies represent messages that are or will become available on the channels, rather than messages that are actually received by this execution.
Similarly, if \( f_i(\seq x_i) = () \in D \), then we add the clause
\[
  F_i(b, t , \seq x_i) \Leftarrow b = \FALSE \land \seq s_i = \seq \NIL \qquad \text{where } \seq{s}_i = \OutChannels{\seq x_i}.
\]
The difference is that the failure flag \( b \) is set to \( \FALSE \) because \( () \) means that a thread terminated without any error.

In addition to the clauses \(  F_i(b, t, \seq x_i) \Leftarrow \transmap{M_i}^b_t \),
we add auxiliary clauses for discarding threads that are irrelevant to the safety property.
For each function \( f_i \) in \( D \), we add
\[
  F_i(\FALSE, t, \seq{x}_i) \Leftarrow \seq{s}_i = \seq \NIL \qquad \text{where } \seq{s}_i = \OutChannels{\seq x_i}.
\]
Informally, this clause allows us to ignore the future behavior of a thread from any program point \(f_i\)
at any time \(t\), provided that the thread is assumed not to perform any further sends and not to reach \(\FAIL\).
The reason is that the safety property of interest asks only whether some thread can reach \(\FAIL\);
once such a failure occurs, the whole computation aborts, and the subsequent behavor of the other threads becomes irrelevant.
The condition \(\seq{s}_i = \seq \NIL\) is needed to ensure that the discarded thread contributes no further messages.

Besides the clauses corresponding to function definitions, we add a few more clauses.
We add the clause
\(\bot \Leftarrow F(\TRUE, t, \seq a)\), which states that executing the main function \( f(\seq a) \) never lead to \(\FAIL\).
Finally, we also add clauses for auxiliary predicates, including \( \MERGE \), used in the introduction,
and \(\SORTED \) defined as follows. %
  \begin{align*}
    &\SORTED(\NIL) \Leftarrow \top \qquad \SORTED([(t, v)]) \Leftarrow \top. \\
    &\SORTED((t_1, v_1) :: (t_2, v_2) :: xs) \Leftarrow t_1 < t_2 \land \SORTED((t_2, v_2) :: xs).
  \end{align*}

We write  \( \transmap{(D, f(\seq{a}))} \) for the system of CHCs translated from \( (D, f(\seq a)) \) according to the above procedure.

\begin{figure}[t]
  \begin{align*}
    &\transmap{\sendstmt{a}{x}{f(\seq{a})}}^b_t \defeq \exists t_s \, x' . F(b, t_s, [x' / x]\seq a) \land x = (t_s, a) :: x' \land t \le t_s \\[1ex]
  &\transmap{\recvstmt{x}{y}{f(\seq{a})}}^b_t \defeq \exists t_r \, x \, y' . F(b, t_r, [y'/y]\seq{a}) \land y = (t_s, x) :: y' \land t_s < t_r \land t \le t_r \\[1ex]
    &\transmap{\ifstmt{a}{f_1(\seq{a}_1)}{f_2(\seq{a}_2)}}^b_t \defeq (a \ne 0 \land F_1(b, t, \seq{a}_1)) \lor (a = 0 \land F_2(b, t, \seq{a}_2))\\[1ex]
  &\transmap{\spawnstmt{f_1(\seq{a}_1)}{f_2(\seq{a}_2)}}^b_t \defeq \exists b_1 \, b_2. F_1(b_1, t, \seq a_1) \land F_2(b_2, t, \seq a_2) \land b = (b_1 \lor b_2) \\[1ex]
  &\transmap{\newstmt{x}{y}{f(\seq{a})}}^b_t  \defeq \exists x \, y . F(b, t, \seq a) \land x = y \land \SORTED(x)\\[1ex]
    &\transmap{\dupstmt{x}{y}{f(\seq{a})}}^b_t \defeq  \exists x \, y' . F(b, t, [y'/y]\seq a)
 \land \MERGE(x, y', y)
  \end{align*}
  \caption{Translation of statements. We assume that existentially quantified variables are chosen so as to avoid clashes with free variables in the translated statement.}
\label{fig:translation}
\end{figure}

\subparagraph*{Translation of statements}
We complete the definition of \( F_i(b, t, \seq x_i) \Leftarrow \transmap{M_i}^b_t \) by defining translation \( \transmap{-}^b_t \) from statements (except for \( () \) and \( \FAIL \)) in \autoref{fig:translation}.
Note that %
\( \transmap{-}^b_t \) is parameterized by a failure flag and timestamp.
The timestamp \( t \) is not a global clock:
it records only the most recent send or receive, so as to capture causal dependencies between messages.
We explain several key cases.
In the encoding of a send operation, we check that the prophecy is consistent with the send operation:
the first element of the list must be \( (t_s, a) \), where \( a \) denotes the value being sent and \( t_s \) is the timestamp of that send.
The timestamp \( t_s \) is passed to the predicate \( F \), since it represents the time at which the send operation is performed.
We require \(t \le t_s \), because \(t\) records the most recent send or receive in the current thread.
(Note that only send and receive operations update the timestamp.)
Since the subsequent execution of \( f(\seq a) \) must follow the remaining prophecy, we require \( F(b, t_s, [x'/ x] \seq a) \).
Similarly, the rule for receiving checks that the prophecy for the receiver side is consistent with the receive operation.
The timestamp \( t_r \) represents the time at which the message is received.
We require \(t_r>t_s\), where \(t_s\) is the time at which the message was sent.
This ensures that a message is not received before it is sent, as discussed in~\autoref{ex:dependency}.
In the rule for channel creation, \(x\) and \(y\) represent the sender-side and receiver-side prophecies for future messages, respectively.
The condition \(x=y\) ensures that they coincide, and \(\SORTED(x)\) ensures that the timestamps in \(x\) are in ascending order.
The rule for \texttt{clone} expresses that the prophecy \(y\) is split into two parts: one for the cloned channel \(x\), and
the other for the remaining use of \(y\).
This split is represented by the predicate \( \MERGE(x, y', y) \), which holds if and only if \( y \) is an interleaving of \( x \) and \( y' \); for example \( \MERGE([(1, 2); (5, 4)], [(2, 3)], [(1, 2); (2, 3); (5, 4)]) \) holds.
The predicate \( \MERGE \) is definable using CHCs.

\begin{example}
  \label{ex:running-ex:CHC}
  \newcommand*{\Fn}[2]{F_{#1}(#2)}
  \newcommand*{\For}[1]{\mathit{For}(#1)}
  \newcommand*{\Loop}[1]{\mathit{Loop}(#1)}
  \newcommand*{\Assert}[1]{\mathit{Assert}(#1)}
  \newcommand*{\Fail}[1]{\mathit{Fail}(#1)}
  Recall the program in \autoref{ex:running-ex}.
  We have already seen in the introduction how it is translated into CHCs without a timestamp.
  Below we show the CHCs that are obtained by using the translation rules; we only show the important clauses and omit some clauses such as \( F_i(\FALSE, t, \seq{x}_i) \Leftarrow \seq{c}_i = \seq \NIL \) or the clause for \( \SORTED \).
  \begin{align*}
    &\bot  \Leftarrow \mathit{MsgCount}(\TRUE, t, n) \\
    &\mathit{MsgCount}(b, t, n) \Leftarrow \For {b, t, 0, s r} \land  r =  s \land \SORTED(s) \\
    &\For {b, t, i, n, s, r} \Leftarrow i < n \land \Fn 4 {b, t, i, n, s, r} \\
    &\For {b, t, i, n, s, r} \Leftarrow i \ge n \land \Loop {b, t, 0, n, r} \land s = \NIL \\
    &\Fn 4 {b, t, i, n, s, r} \Leftarrow \MERGE(s_1, s_2, s) \land \Fn 5 {b, t, i, n, s_1, r, s_2}\\
    &\Fn 5 {b_1 \lor b_2, t, i, n, s, r, s'} \Leftarrow \Fn 6 {b_1, t, s'} \land \For {b_2, t, i + 1, n, s, r}\\
    &\Loop {b, t, c, n, r} \Leftarrow r = (t_s, \_v) :: r' \land t_s < t_r \land t \le t_r  \land \Assert {b, t_r, c + 1, n, r'} \\
    &\Assert {b, t, c, n, r} \Leftarrow c \le n \land \Loop{b, t, c, n, r} \\
    &\Assert {b, t, c, n, r} \Leftarrow c > n \land \Fail{b, t, c, n, r} \\ %
    &\Fail{b, t, c, n, r} \Leftarrow b = \TRUE \\
    &\Fn 6 {\FALSE, t, s} \Leftarrow s = [(t_s, 1)] \land t \le t_s \tag*{\qed}
  \end{align*}
\end{example}

\begin{example}
  \label{ex:dependency}
  \definecolor{myRed}{RGB}{255,0,51}
  \newcommand*{\tpart}[1]{\textcolor{myRed}{#1}}
  \newcommand{\fnsymb}[1]{f_{#1}}
  \newcommand*{\fn}[2]{\fnsymb{#1}(#2)}
  \newcommand*{\Fn}[2]{F_{#1}(#2)}
  To illustrate the need for timestamps we consider the following simple program (on the left-hand side) and the corresponding CHCs (on the right-hand side).
  {\setlength{\mathindent}{0pt}
  \begin{minipage}{.4\linewidth}
    \begin{align*}
    &\fn 0 {} = \newstmt s r {\fn 1 {s, r}} \\
    &\fn 1 {s,r} = \recvstmt x r {\fn 2 s} \\
    &\fn 2 {s} = \sendstmt 0 s {\fn 3 {}} \\
    &\fn 3 {} = \FAIL \\
  \end{align*}
\end{minipage}
\begin{minipage}{.5\linewidth}
  \begin{align*}
    &\Fn 0 {b, \tpart t} \Leftarrow \Fn 1 {b, \tpart t, s, r} \land r = s \tpart{{} \land \SORTED(s)} \\
    &\Fn 1 {b, \tpart t, s, r} \Leftarrow r = [(\tpart {t_s},x)]  \land \Fn 2 {b, \tpart {t_r}, s} \tpart{{} \land t_s < t_r \land t \le t_r} \\
    &\Fn 2 {b, \tpart t, s} \Leftarrow s = [(\tpart {t_s},0)] \land \Fn 3 {b, \tpart{t_s}} \tpart{{} \land t \le t_s} \\
    &\Fn 3 {b, \tpart t} \Leftarrow b = \TRUE \\
    &\bot \Leftarrow \Fn 0 {\TRUE, \tpart t}
  \end{align*}
  \end{minipage}
} %
\par

The program on the left-hand side never fails because the execution is blocked by the receive from \( r \).
However, if we ignore the timestamps, that is, omit the red parts of the CHCs, then the resulting system of CHCs on the right-hand side becomes unsatisfiable, which means that such a translation is incomplete.
Indeed, by repeatedly applying the clauses backwards, we obtain
  \[
    \bot \Leftarrow \Fn 0 {\TRUE} \Leftarrow \Fn 1{\TRUE, [0], [0]} \Leftarrow \Fn 2 {\TRUE, [0]} \Leftarrow \Fn 3 {\TRUE} \Leftarrow \top.
    \]
    Thus, \(\bot\) is derivable.
    The prophecy \( [0] \) above records that the message \(0\) would be sent on \(s\),
    but does not capture the causal dependency between the receive and send operations.
  On the other hand, with timestamps, a similar derivation would contain the step
  \[
    \Fn 1 {\TRUE, \tpart {t_0}, [(\tpart {t_s}, 0)], [(\tpart {t_s}, 0)]} \Leftarrow \Fn 2 {\TRUE, \tpart {t_r}, [(\tpart {t_s}, 0)]} \text{ with constraints } \tpart{t_s < t_r \land t_r < t_s \land t_0 \le t_r}.
  \]
  Here, the constraint \( \tpart{t_r < t_s} \) expresses the sequential causality between
  \( \fnsymb  1 \) and \( \fnsymb 2\),
  whereas \( \tpart{t_s < t_r} \) comes from the causal dependency induced by the channel.
  Since these constraints are contradictory, we cannot derive \( \bot \).
  \qed
\end{example}

\subsection{Properties of the Translation}
\label{subsec:translation-properties}

Our translation is sound and complete in the sense that a program is free from failure if and only if
the corresponding system of CHCs is satisfiable.
We write \( C \neverStepC{D} C' \) if there is no reduction sequence from \( C \) to \( C' \).
\begin{theorem}[Soundness and Completeness]
\label{thm:soundness}
Let $\vdash (D, f(\seq{a})) : \ut$.
Then \(  \transmap{(D, f(\seq a))}\) is a well-sorted system of CHCs that is satisfiable if and only if $(\SET{f(\seq{a})}, \emptyset) \neverStepC{D} \FAIL$.
\qed
\end{theorem}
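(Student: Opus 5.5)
Well-sortedness is a routine induction on the typing derivation of $D$. The typing judgement gives each variable an argument type $\bt$. This fixes the sort $\transmap{\bt}$ of the corresponding CHC variable, and every clause in \autoref{fig:translation} equates terms of matching sorts. The real content is the equivalence. Since CHCs have a least model, $\transmap{(D,f(\seq a))}$ is satisfiable iff $F(\TRUE,t,\seq a)$ is not in the least model for any $t$. So it suffices to show that $F(\TRUE,t,\seq a)$ is derivable for some $t$ iff $(\SET{f(\seq a)},\emptyset)\stepC{D}^*\FAIL$. I prove the two directions separately.

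\textbf{Completeness} (a failing run gives a derivation). Fix a run $C_0\stepC{D}C_1\stepC{D}\cdots\stepC{D}C_n=\FAIL$. Assign global time $k$ to the $k$-th step. Define the \emph{actual prophecies} from the run. For each sender name $s$, take the list of pairs $(k,v)$ such that step $k$ sends $v$ through $s$, or through a clone created from $s$ after that point. For each receiver, take the full list of timestamped sends into its channel, in sending order. This list is sorted because global times strictly increase. I then build derivations for every thread by backward induction on the run. The invariant: if a thread is at $g(\seq v)$ in $C_k$, its last send or receive happened at time $t$, and its channel arguments carry the actual prophecies restricted to the suffix after $k$, then $g(\TRUE/\FALSE,t,\ldots)$ is derivable, with flag $\TRUE$ exactly for the thread that performs $\FAIL$. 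Threads still alive at the end are closed off by the discard clause $F_i(\FALSE,t,\seq x)\Leftarrow\seq s=\seq\NIL$. This is where the restriction ``sent after $k$'' matters: it makes their remaining sender prophecies empty. Spawn combines the flags by disjunction, and clone is justified by $\MERGE$ of the two sub-prophecies. A receive at time $k$ of a message sent at $k'$ satisfies $k'<k$, and $t\le k$ holds automatically. One subtlety is the receiver prophecy: messages that are never received stay in the list harmlessly, since receivers impose no emptiness condition.

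\textbf{Soundness} (a derivation gives a failing run). This is the main obstacle. From a derivation of $F(\TRUE,t,\seq a)$, unfold it into a tree. Each node is an instance of a clause, labelled by one instruction of one thread and carrying the timestamp argument. I must schedule these nodes into a global interleaving. I plan to order the ``event'' nodes (sends and receives) by their timestamps. Ties are allowed only among non-communication steps and among sends of distinct threads with equal times. Within one thread, nodes are executed in tree order, which is consistent with the order $t\le t_s$, $t\le t_r$. At a spawn node, both children inherit the parent's time. The key lemma: if the nodes are scheduled in nondecreasing timestamp order, with ties between a send and a receive broken sends-first, then every receive finds its message at the head of the queue. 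This needs three facts, each used in turn: a receive with stamp $t_r$ consumes the head $(t_s,v)$ with $t_s<t_r$, so that send has already fired; by $\SORTED$ and $\MERGE$, each channel's sends occur in list order; and the receiver consumes the list in order. Hence FIFO order gives exactly the head. Infinite or blocked branches do not arise, because a derivation is finite; threads discarded by the auxiliary clause are simply never scheduled further, which is fine since they send nothing. The $\TRUE$ flag propagates along a path to a $\FAIL$ leaf, and that leaf is reached in the schedule. The delicate points I expect to spend effort on are these. First, freshness and name bookkeeping: I must map sender list variables to runtime names $S$ in $\Qu$, which I will do via an invariant relating $\Qu(r)=(S,\qc)$ to the merge of the pending sender prophecies plus the queue $\qc$. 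Second, I must show that the scheduled sequence respects each rule's side conditions.
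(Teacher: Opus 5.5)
Your proposal is correct and is essentially the paper's proof, with one wrong auxiliary claim that should be dropped.

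\textbf{Same route as the paper.} For the direction from a failing run to a refutation, the paper likewise stamps each event with its trace position and uses future-send prophecies (Lemma~\ref{lem:efail-construct} plus the simulation lemmas). For the converse, it likewise schedules the refutation tree in nondecreasing timestamp order (Lemma~\ref{lem:linearization-lemma}). It also maintains essentially your invariant: the receiver prophecy is the sorted merge of the queue and the pending sender prophecies (Definition~\ref{def:well-formedness-of-extended-configurations}). The only real difference is presentational: you work directly with derivation trees and the plain semantics, whereas the paper goes through extended configurations and SLD steps.

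\textbf{The claim about ties is false.} Because $t\le t_s$ and $t\le t_r$ are non-strict, a single thread may receive at time $5$ and then send at time $5$. In that case your global sends-first tie-break contradicts tree order, so the schedule you describe need not exist.

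\textbf{The fix.} Break ties arbitrarily subject to tree order, for example by sorting nodes by (time, depth). This is safe for two reasons. First, the only cross-thread constraint is the strict inequality $t_s<t_r$, so a matching send is always scheduled before its receive. Second, $\SORTED$ makes the stamps on any one channel distinct, so sends on a channel are ordered by their stamps regardless of ties.
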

The proof is given in \autoref{sec:proof-of-properties}.
Here we provide a brief overview of the argument.
The core argument is a correspondence between (i) an execution leading to \(\FAIL\) and
(ii) a derivation of contradiction from the translated CHCs.
A subtlety is that such a derivation generally has a tree structure rather than a sequence,
because the clause for \texttt{spawn} is non-linear.
Recovering an execution therefore requires linearizing this tree into
a thread schedule while preserving causality. The timestamp
information is precisely what makes this possible.

%% file: sections/ImplementationAndEvaluation.tex
\section{Implementation and Experiments}
\label{sec:implementation-evaluation}
We implemented a verification tool based on our reduction and conducted preliminary experiments on small benchmarks to examine the effectiveness of our approach.
Our tool takes as input a program written in the language defined in \autoref{sec:target-language}.
Then it generates CHCs, %
either with or without timestamps, by applying our reduction described in \autoref{sec:Translation}.
The tool first passes the CHCs with timestamps to a portfolio CHC solver that runs \spacer{}~\cite{DBLP:journals/fmsd/KomuravelliGC16}, \eldarica{}~\cite{hojjat2018eldarica}, and \catalia{}~\cite{katsura2025automated} in parallel.\footnotemark
\footnotetext{The versions of the solvers used in our experiments are as follows: \spacer{} 4.15.5, \eldarica{} 2.2.1, and \catalia{} with commit hash 227e6b4ae6870cd795a7dbc4a863c45dcb900945.}
If one of the solvers returns \texttt{sat} (resp.~\texttt{unsat}), the tool reports \texttt{safe} (resp. \texttt{unsafe}).
If all solvers time out, the tool instead passes the CHCs without timestamps, as an approximate translation, to the same portfolio solver.
If any solver reports that the approximate CHCs are satisfiable, the tool returns \texttt{safe}; otherwise, it returns \texttt{unknown}.

We now describe the experiments.
We ran the tool on our benchmark suite, which consists of \autoref{ex:running-ex}, \autoref{ex:dependency}, and the programs in \autoref{appx:evaluation-examples}. \autoref{ex:dependency} and \texttt{ack} are the examples for which the results differ depending on whether timestamps are included. Instances whose names have the suffix ``-e'' are obtained by modifying the corresponding instances without the suffix so that an assertion fails.
All the benchmark instances were run on a commodity laptop (2.8GHz Intel Core i7 with 16GB RAM).
The timeout for the portfolio solver was set to 120 seconds.

\begin{table}[h]
  \centering
  \begin{minipage}[t]{0.48\linewidth}
    \centering
    \captionof{table}{Results for safe cases}
    \label{tab:evaluation-results}
    \begin{tabular}{lc}
      \hline
      \textit{Instance} & \textit{Solving Time} \\
      \hline
      \autoref{ex:running-ex} & 19.5 \\
      \autoref{ex:dependency} & $< 0.1$ \\
      ack & 0.2 \\
      multi-sends & 50.4($+120$) \\
      client-server & \timeout \\
      calc-server & 0.8 \\
      \hline
    \end{tabular}
  \end{minipage}
  \hfill
  \begin{minipage}[t]{0.48\linewidth}
    \centering
    \captionof{table}{Results for unsafe cases}
    \label{tab:evaluation-results-fail}
    \begin{tabular}{lc}
      \hline
      \textit{Instance} & \textit{Solving Time} \\
      \hline
      \autoref{ex:running-ex}-e & $< 0.1$ \\
      ack-e & 0.1 \\
      multi-sends-e & $< 0.1$ \\
      client-server-e & $< 0.1$ \\
      calc-server-e & 0.1 \\
      \hline
    \end{tabular}
  \end{minipage}
\end{table}

\autoref{tab:evaluation-results} and \autoref{tab:evaluation-results-fail} show the experimental results. \autoref{tab:evaluation-results} lists the safe-case benchmarks, and \autoref{tab:evaluation-results-fail} lists the unsafe-case benchmarks.
In both tables, the ``Solving time'' column shows the time taken by the portfolio solver to return a result.
Here, \texttt{timeout} means that the portfolio solver timed out on both the CHCs with timestamps and those without timestamps after 120 seconds, and ``($+120$)'' indicates that it timed out on the CHCs with timestamps.
\autoref{tab:evaluation-results} shows that, except for \texttt{client-server}, all benchmarks in the safe set were verified
to be safe within the time limit.  \autoref{tab:evaluation-results-fail} shows that
unsafety was detected in all buggy benchmark programs.

The \texttt{client-server} instance illustrates a limitation of the current backend CHC solvers
rather than a fundamental limitation of our reduction.
The program \texttt{client-server} is shown in \autoref{fig:client-server}. In the function \texttt{main}, a spawned thread acts as a server that adds $1$ to the received value and sends back the result. Another thread repeatedly sends $1$, receives the response,
and repeats this $n$ times, asserting that the sum of the received values is equal to $2n$.
The reason this instance is not verified is that the backend CHC solvers fail to discover the invariants stating that the values received
through \texttt{r1} and \texttt{r2} are always \( 1 \) and \( 2 \), respectively.
More precisely, \eldarica{} and \spacer{} cannot synthesize predicates involving recursively defined predicates such as ``the list only contains 1''.
On the other hand, in principle, \catalia{} can synthesize such predicates as long as they are expressible as arithmetic formulas over integer values obtained by applying a certain class of catamorphisms to lists.
In practice, however, \catalia{} fails to synthesize a suitable predicate because it relies on heuristics to infer the catamorphism to use.
Thus, this example suggests the need for further improvement of backend CHC solvers, especially in their support for synthesizing recursive predicates.

\begin{figure}[tbp]
\begin{minipage}[t]{.5\linewidth}
\begin{verbatim}
fn main(n: usize) {
    let (s1, r1) = channel();
    let (s2, r2) = channel();
    spawn(move || {
        loop {// #[invariant(r2.recv = 1)]
            let v = r2.recv().unwrap();
            s1.send(v + 1).unwrap();
        }
    });
\end{verbatim}
\end{minipage}
\qquad
\begin{minipage}[t]{.4\linewidth}
\begin{verbatim}
let mut sum = 0;
    for i in 0..n {
        s2.send(1).unwrap();
        let v = r1.recv().unwrap();
        sum += v;
    }
    assert!(sum == n * 2);
} // end of main
\end{verbatim}
\end{minipage}
\caption{Client-server program. (A single program is shown over two columns.)}
\label{fig:client-server}
\end{figure}

A possible way to address this limitation is to make use of user-provided annotations.
We have confirmed that if we manually provide an appropriate catamorphism---namely, one that counts the numbers of \( 1 \)s and \( 2 \)s appearing in a list---then \catalia{} can verify that \texttt{client-server} is safe.\footnote{The catamorphism we gave was \texttt{foldr f (0, 0)} where \texttt{f = \textbackslash x (n1, n2) -> case x of \{1 -> (n1 + 1, n2); 2 ->  (n1, n2 + 1); \_ -> (n1, n2)\}}. }
Directly providing a catamorphism as a hint would impose a burden on users who are not familiar with the backend CHC solvers.
A more practical approach would be to extend our tool so that annotations in the source program, such as the comment in \autoref{fig:client-server}, are translated into hints that help \catalia{} synthesize the required catamorphism.

%% file: sections/Discussion.tex
\section{Discussion}
\subsection{Variations of Message-Passing Semantics}
\label{sec:discussion}
Our prophecy-based approach is not tied to the particular message-passing semantics considered so far: it can also be adapted to other message-passing semantics with only modest changes to the translation.
The semantics of the target language in \autoref{fig:target-operational-semantics} is based on Rust's standard mpsc library.
In this semantics, messages are enqueued in the FIFO buffer owned by the corresponding receiver in the order in which they are sent, and hence the send order is preserved.
In settings such as distributed computing, however, it is also natural to consider the following two variants:
\begin{enumerate}
  \item a semantics in which the order of messages sent by each individual sender is preserved but no ordering is imposed between messages sent by different senders associated with the same receiver; and
  \item a semantics in which the order of messages sent is not preserved at all, as in the unordered buffering modeled by the asynchronous \(\pi\)-calculus~\cite{DBLP:conf/birthday/BeauxisPV08}.
\end{enumerate}

The two variants can be captured by modifying how message buffers are represented.
Preserving the order of messages only within each sender channel suggests maintaining a separate FIFO queue for each sender channel.
In our setting, however, sender channels can be cloned, and this cloning structure must also be reflected in the buffer.
This leads to a tree-structured buffer: each leaf corresponds to a sender channel, and when a sender is cloned, the corresponding leaf is replaced by an internal
node with two children for the original sender and the newly created clone.
The internal node stores the queue of messages sent before the cloning, while the leaves store the messages sent after the cloning.

To model unordered buffering, we instead use multisets: send operations add elements to the multiset, and receive operations remove elements from it nondeterministically.

We illustrate the differences among these three semantics using the example in \autoref{fig:semantics-difference-example}.
With the original semantics in \autoref{fig:target-operational-semantics}, after all send operations, the buffer state is the FIFO queue $[0, 1, 2]$.
Therefore, the values are received in the order $0, 1, 2$.
If we instead use the tree-structured buffer, the buffer state is $\mathsf{Node}([0], s_1 \mapsto [1], s_2 \mapsto [2])$, where the root queue $[0]$ stores the messages sent before cloning, and each leaf $s_1 \mapsto [1]$ and $s_2 \mapsto [2]$ represents the sender and the messages sent via the sender after cloning.
Thus, the value $0$ is received first, after which either $1$ or $2$ may be received next.
Hence the possible receive orders are $0, 1, 2$ and $0, 2, 1$.
Finally, with multiset buffers, the buffer state after the sends is the multiset \( \{0,1,2 \} \).
Since a multiset does not preserve order, the values may be received in any order.

\begin{figure}[tbp]
\begin{verbatim}
fn main() {
    let (s1, r) = channel();
    s1.send(0).unwrap();
    let s2 = s1.clone();
    s1.send(1).unwrap();
    s2.send(2).unwrap();
    let v1 = r.recv().unwrap();
    let v2 = r.recv().unwrap();
}
\end{verbatim}
\caption{Example program illustrating the differences among the three semantics.}
\label{fig:semantics-difference-example}
\end{figure}

We next describe how the translation changes for these two variants.
The tree structure is already reflected by the use of \( \MERGE \) at \texttt{clone} operations: each clone splits a sender prophecy into two branches.
Thus, the only change needed is to omit the constraint \( \SORTED (x) \), which otherwise imposes a global FIFO order on all messages sent to the receiver.
For the multiset semantics, we need to represent prophecies using multisets rather than lists. Accordingly, the auxiliary predicates must be modified. The predicate $\MERGE(x, y', y)$ is replaced by $\Union(x, y', y)$, expressing that the union of $x$ and $y'$ is $y$. Since $\SORTED(x)$ has no meaning for multisets, it is omitted.
The soundness and completeness proofs require only local changes. %

As another variant of the semantics, we can also handle bounded buffers by adding receive-time information to prophecies.
Assume that the buffer size is bounded by $k$. 
In this case, the sender and receiver prophecies are extended to contain receive-time timestamps in addition to send-time timestamps, and the $\SORTED$ predicate additionally imposes the constraint that, for each $i$, the send-time timestamp of the $(i+k)$-th message is later than the receive-time timestamp of the $i$-th message. 
This constraint states that no further value is sent when there are already $k$ unreceived values accumulated in the buffer, which is exactly the buffer-size constraint.

\subsection{Deadlock Freedom}

Deadlock freedom is a classical property of concurrent programs. Our approach can handle deadlock freedom by adjusting our translation to CHCs.

We illustrate the translation to ensure deadlock freedom through the example in \autoref{fig:deadlock-freedom-example}. The example program in Figure 7 is a typical example of a deadlock: the two threads each wait to receive a value from the other.

\begin{figure}[tbp]
\newcommand{\fnsymb}[1]{f_{#1}}
\newcommand*{\fn}[2]{\fnsymb{#1}(#2)}
\newcommand*{\Fn}[2]{F_{#1}(#2)}
\newcommand*{\gn}[2]{g_{#1}(#2)}
\newcommand*{\Gn}[2]{G_{#1}(#2)}
\begin{minipage}{.45\linewidth}
  {\small\begin{align*}
    &\fn 0 {} = \newstmt {s_1} {r_1} {\fn 1 {s_1, r_1}} \\
    &\fn 1 {s_1, r_1} = \newstmt {s_2} {r_2} {\fn 2 {\seq s, \seq r}} \\
    &\fn 2 {\seq s, \seq r} = \spawnstmt {\gn 0 {s_1, r_2}} {\fn 3 {r_1, s_2}} \\
    &\gn 0 {s_1, r_2} = \recvstmt {v} {r_2} {\gn 1 {s_1}} \\
    &\gn 1 {s_1} = \sendstmt {0} {s_1} {()} \\
    &\fn 3 {r_1, s_2} = \recvstmt {v} {r_1} {\fn 4 {s_2}} \\
    &\fn 4 {s_2} = \sendstmt {0} {s_2} {()}
  \end{align*}}
\end{minipage}
\begin{minipage}{.45\linewidth}
{
\small
\begin{align*}
  &\bot \Leftarrow \Fn 0 {B, t}\\
  &\Fn 0 {b, t} \Leftarrow \Fn 1 {b, t, s_1, s_1} \land \SORTED(s_1)\\
  &\Fn 1 {b, t, s_1, r_1} \Leftarrow \Fn 2 {b, t, \seq s, r_1, s_2} \land \SORTED(s_2) \\
  &\Fn 2 {b, t, r, \seq s, \seq r} \Leftarrow 
  \begin{aligned}[t]
    &\Gn 0 {b_1, t, s_1, r_2} \land \Fn 3 {b_2, t, r_1, s_2} \\
    &\land b = b_1 \combine b_2
  \end{aligned} \\
  &\Gn 0 {b, t, s_1, r_2} \Leftarrow
  \begin{aligned}[t]
    &\Gn 1 {b, t_r, s_1} \land r_2 = (t_s, v) :: r'_2 \\
    &\land t_s < t_r \land t \le t_r
  \end{aligned} \\
  &\Gn 0 {B, t, \NIL, r_2} \Leftarrow r_2 = \NIL\\
  &\Gn 1 {T, t, s_1} \Leftarrow s_1 = [(t_s, 0)] \land t \le t_s\\
  &\Fn 3 {b, t, r_1, s_2} \Leftarrow
  \begin{aligned}[t]
    &\Fn 4 {b, t_r, s_2} \land r_1 = (t_s, v) :: r'_1 \\
    &\land t_s < t_r \land t \le t_r
  \end{aligned} \\
  &\Fn 3 {B, t, r_1, \NIL} \Leftarrow r_1 = \NIL\\
  &\Fn 4 {T, t, s_2} \Leftarrow s_2 = [(t_s, 0)] \land t \le t_s
\end{align*}
}
\end{minipage}
\caption{Example program that causes a deadlock and its translation.}
\label{fig:deadlock-freedom-example}
\end{figure}

In the translation described in \autoref{sec:Translation}, the flag \( b \) is a Boolean flag that indicates whether the program will fail. 
To check deadlock freedom, we extend this flag to three-valued status: $\FailStatus$, $\Terminated$, and $\Blocked$. The statuses $\FailStatus$ and $\Terminated$ represent reaching \(\FAIL\) and \(()\), respectively.
The status $\Blocked$ indicates that a thread is blocked at a subsequent $\RECV$ statement because no value is sent by the corresponding sender. 
In the CHCs, each status is denoted by its initial letter.

We make the following changes to the translation from programs to CHCs. First, the translation rule for \(\RECV\) statements is extended with a case in which an empty receiver prophecy causes the flag to be set to \(\Blocked\).
In the translation shown in \autoref{fig:deadlock-freedom-example}, the following CHCs correspond to this extension.
\begin{align*}
  &G_0(B, t, \NIL, r_2) \Leftarrow r_2 = \NIL \quad F_3(B, t, r_1, \NIL) \Leftarrow r_1 = \NIL
\end{align*}
Next, at a $\SPAWN$ statement, the statuses of two threads are combined so that the overall computation is reported as \(\Blocked\) when both remaining threads are either terminated or blocked and at least one of them is blocked.
In the CHCs, the operator \(\combine\) represents this composition and is defined as follows.
\begin{align*}
  T \combine T = T \quad  D \combine D = D \combine T = T \combine D = D \quad F \combine \_ = \_ \combine F = F
\end{align*}
We modify the goal clause so that it derives \(\FALSE\) whenever the status of the whole program is \(\Blocked\).
Consequently, the system of CHCs is unsatisfiable if the program can deadlock.
In the example in Figure 7, both threads have the status \(\Blocked\) at the \(\SPAWN\) point. The status of the entire program is thus \(\Blocked\), signaling that a deadlock can occur.

%% file: sections/Related.tex
\section{Related Work}
\label{sec:rel}

Type-based approaches have been popular for the automated verification
and analysis of message-passing concurrent programs: see
\cite{kobayashi2003type,huttel2016foundations} for surveys.  While
such approaches are often lightweight, they are generally not complete
and may yield false alarms.  Toby and
Ankush~\cite{10.1007/978-3-032-22723-2_11} recently proposed a type
inference algorithm for refinement binary session types, which can
automatically verify the absence of assertion failures.  However,
binary session types are designed for one-to-one communication and
do not directly support multiple senders.
Moreover, their inference
method is template-based and does not enjoy relative completeness. By
contrast, our translation is relatively complete, assuming that the
underlying CHC solver is complete; this assumption is, of course,
idealized, since CHC solving is undecidable in general.

Model checking~\cite{ClarkeModelChecking,HandbookMC,PrinciplesMC} has
also been widely used for the automated verification of concurrent
systems. However, standard finite-state or pushdown model checking is
not directly applicable to systems with 
unbounded numbers of threads and unbounded message queues. There are
decidable classes of message-passing concurrent systems, such as lossy
channel systems~\cite{DBLP:conf/lics/AbdullaJ93}, but our target
language is Turing-complete and therefore cannot be captured by such
models. Bounded model checking is often used for infinite state systems,
but it is unsound in general.

Our use of prophecies~\cite{abadi1991existence} to enable modular modeling of sender and
receiver processes is related in spirit to thread-modular verification for
shared-memory multi-threaded
programs~\cite{DBLP:conf/esop/FlanaganFQ02,DBLP:conf/cav/HenzingerJMQ03,10.1007/978-3-540-74061-2_14}. In
thread-modular verification, suitable assumptions about inter-thread
interactions are inferred and then used to verify each thread
modularly. Although shared-memory and message-passing concurrency
differ significantly, our prophecies about future messages play a role
analogous to such assumptions.
Of particular note is the introduction of prophecies into the Iris separation logic~\cite{jung2015iris} by Jung et al.~\cite{jung2019future}, used for modular verification of logical atomicity (a variant of linearizability) of concurrent data structures.
Although their mechanism supports various types of concurrency, unlike our work, theirs is not designed for automation and does not support partial resolution of prophecies, which our work uses to partially determine the contents of the prophesied list.
We also note that prophecies were first introduced to prove refinement between state machines~\cite{abadi1991existence}.
Our use of prophecies for reducing a stateful program to a system of CHCs might be relevant to this in spirit.

Our verification technique was inspired by the prophecy-based
technique of RustHorn by Matsushita et al.~\cite{matsushita2021rusthorn}, which has also been applied to semi-automated Rust verification tools such as Creusot~\cite{denis2022creusot}, Verus~\cite{lattuada2023verus} and Thrust~\cite{ogawa2025thrust},
and has been justified semantically in the Iris separation logic~\cite{matsushita2022rusthornbelt,matsushita2025nola,travis2026verusbelt}.
They considered
\emph{sequential} programs with mutable references, representing a
borrowed mutable reference as a pair of its current value and the
future value at the point where the borrow is released. 
Our technique was inspired by theirs.
Both our work and RustHorn both prophesy future interactions between entities (channel senders and receivers, or borrowers and lenders) for automated
modular functional reasoning.
However, our work is novel in that it supports multi-shot, multi-sender channels, where a list of multiple sent values is prophesied, and the send/receive events can interleave in a complex way depending on thread scheduling.
Roughly speaking, Rust-style borrows are like a one-shot, single-sender channel.

Automated verification based on CHCs~\cite{Bjorner15} or, more generally, fixpoint
logic~\cite{DBLP:conf/esop/0001TW18,DBLP:journals/pacmpl/BurnOR18} has attracted considerable attention in recent years. Various
verification problems can be encoded and solved using common backend
solvers for CHCs~\cite{DBLP:journals/fmsd/KomuravelliGC16,hojjat2018eldarica,katsura2025automated,DBLP:conf/aplas/Champion0S18}
or fixpoint logics~\cite{DBLP:conf/sas/0001NIU19,DBLP:journals/pacmpl/KobayashiTST23,DBLP:journals/pacmpl/UnnoTGK23}.
Most such verification techniques have primarily been developed for sequential programs, and, to the
best of our knowledge, the CHC-based verification of message-passing
concurrent programs has not been studied before.

%% file: sections/Conc.tex
\section{Conclusion}
\label{sec:conc}

We proposed a novel method for verifying the functional correctness of
message-passing concurrent programs. Inspired by RustHorn's
prophecy-based technique, we represented each communication channel as
a list of future messages to be sent on that channel, thereby reducing
the verification problem to CHC solving while preserving soundness and
completeness. We also implemented a prototype tool to demonstrate the
usefulness of our method.

%% file: sections/TranslationAppx.tex
\section{Supplementary Materials for Section~\ref{sec:Translation}}
\label{appx:Translation}

\subsection{Definition of \texorpdfstring{\( \SORTED \)}{Sorted} and \texorpdfstring{\( \MERGE \)}{Merge}}
The clauses for the predicates \( \SORTED \) and \( \MERGE \), which are used in the translation, are given as follows:
\begin{align*}
  &\SORTED(\NIL) \Leftarrow \top \\
  &\SORTED([(t, v)]) \Leftarrow \top \\
  &\SORTED((t_1, v_1) :: (t_2, v_2) :: xs) \Leftarrow t_1 < t_2 \land \SORTED((t_2, v_2) :: xs) \\
  \\
  &\MERGE(\NIL, \NIL, \NIL) \Leftarrow \TRUE\\
  &\MERGE(p :: xs, ys, p :: zs) \Leftarrow \MERGE(xs, ys, zs)\\
  &\MERGE(xs, p :: ys, p :: zs) \Leftarrow \MERGE(xs, ys, zs)\
\end{align*}

\subsection{Examples}
The following example shows a program with causal dependencies between channels that are used across different threads.
\begin{example}
  \label{ex:dependency-threads}
  \newcommand{\fnsymb}[1]{f_{#1}}
  \newcommand*{\fn}[2]{\fnsymb{#1}(#2)}
  \newcommand*{\Fn}[2]{F_{#1}(#2)}
  \newcommand*{\gn}[1]{g(#1)}
  \newcommand*{\Gn}[1]{G(#1)}
  \newcommand*{\hn}[2]{h_{#1}(#2)}
  \newcommand*{\Hn}[2]{H_{#1}(#2)}

  \newcommand*{\ack}{\mathit{ack}}
  The following program\footnote{
    The example relaxes the syntax and typing rules of the language for readability; for example, we drop variables before reaching \( () \).}
  sends messages across three threads \( f \), \( g \) and \( h \) as illustrated in the left-hand side of \autoref{fig:message-seq} and checks whether the values received by \( r \) are \( 10 \) and then \( 20 \).
  The corresponding CHCs are given side-by-side.

  \begin{figure}
  \centering
  \input{figures/message_seq}
  \caption{Message sequence diagrams for~\autoref{ex:dependency-threads}: the left shows a correct sequence, the right an incorrect one where the dotted lines are communications that cannot happen.}
  \label{fig:message-seq}
\end{figure}

  \par\noindent
  {\setlength{\mathindent}{0pt}
    \begin{minipage}{.45\linewidth}
  {\small
    \begin{align*}
    \\
    &\fn 0 {} = \newstmt {s_1} r {\fn 1 {s_1, r}} \\
    &\phantom{\SORTED}\\
    &\fn 1 {s_1, r} = \newstmt {\ack_s}{\ack_r}  {\fn 2 {s_1, r, \seq \ack}} \\
    &\phantom{\SORTED}\\
    &\fn 2 {s_1, r, \seq \ack} = \dupstmt {s_2} {s_1} {\fn 3 {\seq s, r, \seq \ack}} \\
    &\fn 3 {\seq s, r, \seq \ack} = \spawnstmt {\gn {s_1}}  {\fn 4 {s_2, r, \seq \ack}} \\
    &\gn {s_1} = \sendstmt {10} {s_1} {()}\\
    &\fn 4 {s_2, r, \seq \ack} = \spawnstmt {\hn 1 {s_2, \ack_r}}  {\fn 5 {r, \ack_s}} \\
    &\hn 1 {s_2, \ack_r} = \recvstmt {\_} {\ack_r} {\hn 2 {s_2}}\\
    \\
    &\hn 2 {s_2} = \sendstmt {20} {s_2} {()}\\
    &\fn 5 {r, \ack_s} = \recvstmt {v_1} r {\fn 6 {r, \ack_s, v_1}} \\
    \\
    &\fn 6 {r,\ack_s, v_1} = \sendstmt 0 {\ack_s} {\fn 7 {r, v_1}} \\
      \\
    &\fn 7 {r, v_1} =  \recvstmt {v_2} r {\fn 8 {v_1, v_2}}\\
    \\
    &\fn 8 {v_1, v_2} = \mathtt{assert}(v_1= 10 \mathrel{\&\&} v_2 = 20)
  \end{align*}
} %
\end{minipage}
\begin{minipage}{.45\linewidth}
{\small
  \begin{align*}
    &\bot \Leftarrow  \Fn 0 {\TRUE, t}\\
    &\Fn 0 {b, t} \Leftarrow
      \begin{aligned}[t]
      &\Fn 1 {b, t, s, r} \land r = s \\
      &\land \SORTED(s)
    \end{aligned}\\
    &\Fn 1 {b, t, s, r} \Leftarrow
      \begin{aligned}[t]
      &\Fn 2 {b, t,s, r, \seq \ack} \land \ack_r = \ack_s \\
      &\land \SORTED(\ack_s)
      \end{aligned}\\
    &\Fn 2 {b, t, s, r, \seq \ack} \Leftarrow \MERGE(s_1, s_2, s) \land \Fn 3 {b, t, \seq s, r, \seq \ack}\\
    &\Fn 3 {b_1 \lor b_2,  t, \seq s, r, \seq \ack} \Leftarrow \Gn {b_1, t, s_1} \land \Fn 4 {b_2, t, s_2, r, \seq \ack} \\
    &\Gn {\FALSE, t, s_1} \Leftarrow s_1 = [(t_{s_1}, 10)] \land t \le t_s \\
    &\Fn 4 {b_1 \lor b_2, t, s_2, r, \seq \ack} \Leftarrow \Hn 1 {b_1, t, s_2, \ack_r} \land  \Fn 5 {b_2, t, r, \ack_s} \\
    &\Hn 1 {b, t, s_2, \ack_r} \Leftarrow
      \begin{aligned}[t]
        &\ack_r = [(t_{\ack_s}, \scalebox{0.5}[1]{\text{\_}})]  \land \Hn 2 {b, \textcolor{red}{t_{\ack_r}}, s_2, \ack_r'} \\
        &\land t_{\ack_s} < t_{\ack_r} \land t \le t_{\ack_r}
      \end{aligned}\\
    &\Hn 2 {\FALSE, \textcolor{red}{t}, s_2} \Leftarrow s_2 = [(\textcolor{red}{t_{s_2}}, 20)] \land \textcolor{red}{t \le t_{s_2}} \\
    &\Fn 5 {b, t, r, \ack_s} \Leftarrow
      \begin{aligned}[t]
        &r = (t_{s_1}, v_1) :: r' \land \Fn 6 {b, t_r, r', \ack_s, v_1} \\
        &\land t_{s_1} < t_{r} \land t \le t_r
      \end{aligned}\\
    &\Fn 6 {b, t, r,\ack_s, v_1} \Leftarrow
      \begin{aligned}[t]
        &\ack_s = [(t_{\ack_s}, 0)] \land {\Fn 7 {r, v_1}} \\
        &\land t \le t_{\ack_s}
      \end{aligned} \\
    &\Fn 7 {b, t, r, v_1} \Leftarrow
      \begin{aligned}[t]
        &r = [(t_{s_2}, v_2)] \land \Fn 8 {b, t_r, v_1, v_2} \\
        &\land  t_{s_2} < t_r \land t \le t_r
      \end{aligned}\\
    &\Fn 8 {b, t, v_1, v_2} \Leftarrow b \neq (v_1= 10 \land v_2 = 20)
  \end{align*}
   } %
 \end{minipage}
 }%

 The assertion never fails because in order for \( h \) to send the value \( 20 \), (1) the value \( 10 \) must be sent before a message is sent using \( \ack_s \); (2) \( \ack_r \) must wait for \( \ack_s \); and (3) a message must be received via \( \ack_r \) for \( h \) to send \( 20 \).
 The red part of the CHCs imposes an order \( t_{\ack_r} \le t_{s_2}\) which captures the last causal dependency; the other two dependencies \(t_{s_1} < t_{\ack_s} \) and \( t_{\ack_s} < t_{\ack_r}\) can similarly be read from the CHCs.

 If there were no timestamps, then \( s \mapsto [20, 10] \) would become a valid prophecy.
 This prophecy does not capture the correct behavior as it corresponds to a situation where the send using \( s_2 \) happened before the send using \( s_1 \) as, for example, in the right-hand side of \autoref{fig:message-seq}.
 \qed
\end{example}

%% file: figures/message_seq.tex
\begin{tikzpicture}[every node/.style={font=\small}, thread/.style={thick}]

\node (main) {f};
\node (t1) [right=of main] {g};
\node (t2) [right=of t1] {h};

\draw[thread] (main.south) -- ++(0,-4);
\draw[thread] (t1.south) -- ++(0,-4);
\draw[thread] (t2.south) -- ++(0,-4);

\coordinate (m1) at ($(main.south)+(0,-1.5)$);
\coordinate (m2) at ($(main.south)+(0,-2)$);
\coordinate (m3) at ($(main.south)+(0,-3.5)$);

\coordinate (t11) at ($(t1.south)+(0,-0.5)$);
\coordinate (t21) at ($(t2.south)+(0,-2.5)$);
\coordinate (t22) at ($(t2.south)+(0,-3)$);

\draw[<-] (m1) -- node[above] {10} (t11);
\node[left] at (m1) {\textcolor{gray}{\( v_1\)} \( r \)};
\node[right] at (t11) {\( s_1 \)};

\draw[->] (m2) -- (t21);
\node[left] at (m2) {\( \mathit{ack}_s \)};
\node[right] at (t21) {\( \mathit{ack}_r \)};

\draw[<-] (m3) -- node[below left] {20} (t22);
\node[left] at (m3) {\textcolor{gray}{\( v_2\)} \( r \)};
\node[right] at (t22) {\( s_2 \)};

\end{tikzpicture}
\qquad
\begin{tikzpicture}[every node/.style={font=\small}, thread/.style={thick}, invalid/.style={dashed},]

\node (main) {f};
\node (t1) [right=of main] {g};
\node (t2) [right=of t1] {h};

\draw[thread] (main.south) -- ++(0,-4);
\draw[thread] (t1.south) -- ++(0,-4);
\draw[thread] (t2.south) -- ++(0,-4);

\coordinate (m1) at ($(main.south)+(0,-0.5)$);
\coordinate (m2) at ($(main.south)+(0,-2.5)$);
\coordinate (m3) at ($(main.south)+(0,-3.5)$);

\coordinate (t21) at ($(t2.south)+(0,-1.5)$);
\coordinate (t22) at ($(t2.south)+(0,-2)$);

\coordinate (t11) at ($(t1.south)+(0,-3)$);

\draw[->,invalid] (m1) -- (t21);
\node[left] at (m1) {\( \mathit{ack}_s \)};
\node[right] at (t21) {\( \mathit{ack}_r \)};

\draw[<-,invalid] (m2) -- node[above right] {20} (t22);
\node[left] at (m2) {\textcolor{gray}{\( v_1\)} \( r \)};
\node[right] at (t22) {\( s_2 \)};

\draw[<-] (m3) -- node[below] {10} (t11);
\node[left] at (m3) {\textcolor{gray}{\( v_2\)} \( r \)};
\node[right] at (t11) {\( s_1 \)};

\end{tikzpicture}

%% file: sections/ProofOfProperties.tex
\section{Proof of Soundness and Completeness}
\label{sec:proof-of-properties}
As briefly explained, the proof of soundness and completeness is given by a mutual construction between (i) a resolution proof of the unsatisfiability and (ii) a reduction sequence of the program leading to an assertion failure.
In this section, we first define what a resolution proof is, and then define yet another operational semantics that helps us relate a proof to a reduction sequence.
Then we give the actual constructions.

\subsection{Constrained SLD Resolution}
It is well known that SLD resolution is refutationally complete for Horn clauses, and we use SLD resolution steps as the proof-side counterpart of an execution trace.
Here we review constrained SLD resolution.
A goal \( G = \Goal{\seq{\Atom}} \) is a sequence of ground atomic formulas. We write $\Goal{\seq{\Atom}} \SLDstep{\CHCSystem} \Goal{\seq{\Atom'}}$ to denote that the goal $\Goal{\seq{\Atom}}$ takes one resolution step to goal $\Goal{\seq{\Atom'}}$ under the entire system of CHCs $\CHCSystem = (\CHCSet, \sortenv)$. The resolution rule is defined as follows.

\infrule[Constrained SLD Resolution]
  {\Head \leftarrow \Atom'_1 \land \ldots \land \Atom'_m \land \Constraint \in \CHCSet \andalso \exists \subst.\ \Atom_i = \Head \subst \andalso \models \Constraint \subst}
  {\Goal{\Atom_1, \ldots, \Atom_i, \ldots, \Atom_n} \SLDstep{\CHCSystem} \Goal{\Atom_1, \ldots, \Atom'_1 \subst, \ldots, \Atom'_m \subst, \ldots, \Atom_n}}

\subsection{Extended Configuration}
As a mediator between configurations of the operational semantics (defined in Figure~\ref{fig:target-operational-semantics}) and goals of SLD resolution, we introduce extended configurations.
The idea is to let the configurations carry the timestamps, prophecies, and the failure flag, and only allow them to transition if the guesses are correct.

\begin{remark}[Abuse of notation]
  In what follows, we use \( b \) and \( t \) not only to denote integer variables, but also, by abuse of notation, to denote integer constants.
  We may also use \( \SORTED \) as an interpreted predicate.
  \qed
\end{remark}

\emph{Extended configurations} are given by the following grammar.
\begin{align*}
  &\mbox{(extended configurations) }  \EConf ::= (\EThreads, \EQu, \TSMap, \ProphecyMap, \FailureMap) \mid \FAIL\\
  &\mbox {(thread pools with identifiers) } \EThreads ::= \emptyset \mid \EThreads \cup \{ (i, f(\seq a)) \} \\
  &\mbox{(extended channel context) } \EQu ::= \emptyset \mid \EQu \uplus \SET{ r \mapsto (S, \Eqc, t) }\\
  &\mbox{(extended message queue) }  \Eqc ::= [] \mid (t, n) :: \Eqc \\
  &\mbox{(map from thread ids to timestamps) } \TSMap  ::= \SET{ i \mapsto t_i \mid i \in \TID}\\
  &\mbox{(map from senders and receivers to their prophecies) } \ProphecyMap  ::= \emptyset \mid \ProphecyMap \uplus \SET{ x \mapsto \Eqc }\\
  &\mbox{(map from thread ids to thread outcomes) }\FailureMap  ::= \SET{ i \mapsto b_i \mid i \in \TID, b_i = \TRUE \lor b_i = \FALSE }
  \end{align*}
  Now each thread is given a thread id \( i \), which is an element of a denumerable set \( \TID \), and the message queue contains not only the value that was sent but also the time it was sent.
  For technical convenience, the extended channel context also holds the time for the most recent send to a receiver \( r \), which is the third element of \( \EQu(r)\).
  The map \( U \) holds the last time each thread \( i \) performed a send or receive operation.
  The maps \( \ProphecyMap \) and \( \FailureMap \) guess the future of each thread \( i \).

\subparagraph*{Operational semantics}
The operational semantics of extended configurations is defined in \autoref{fig:operational-semantics-extended}.
Here $\MergeT$ is a function that merges its arguments while preserving the order within each argument list and arranging the result in ascending order of timestamps.
Note that we have premises akin to the constraints in the translated CHCs.
For example, in \rn{EStep-New} we require the new prophecies to be sorted.

\begin{figure}[tpb]
  \fbox{\( (\thread, \EQu, \TSMap, \ProphecyMap, \FailureMap) \stepC{D} (\thread', \EQu', \TSMap', \ProphecyMap', \FailureMap')\)}
\infrule[EStep-If-True]
  {f(\seq{x}:\seq{\bt}) = \ifstmt{a_0}{g_1(\seq{a}_1)}{g_2(\seq{a}_2}) \andalso [\seq{a}/\seq{x}]a_0 \eval n \andalso n \ne 0}
  {((i, f(\seq{a})), \EQu, \TSMap, \ProphecyMap, \FailureMap) \stepC{D} ((i, [\seq{a}/\seq{x}]g_1(\seq{a}_1)), \EQu, \TSMap, \ProphecyMap, \FailureMap)}
\infrule[EStep-If-False]
  {f(\seq{x}:\seq{\bt}) = \ifstmt{a_0}{g_1(\seq{a}_1)}{g_2(\seq{a}_2}) \andalso [\seq{a}/\seq{x}]a_0 \eval n \andalso n = 0}
  {((i, f(\seq{a})), \EQu, \TSMap, \ProphecyMap, \FailureMap) \stepC{D} ((i, [\seq{a}/\seq{x}]g_2(\seq{a})), \EQu, \TSMap, \ProphecyMap, \FailureMap)}
\infrule[EStep-Func]
  {f(\seq{x}:\seq{\bt}) = g(\seq{a}') \in D}
  {((i, f(\seq{a})), \EQu, \TSMap, \ProphecyMap, \FailureMap) \stepC{D} ((i, [\seq{a}/\seq{x}]g(\seq{a}')), \EQu, \TSMap, \ProphecyMap, \FailureMap)}
\infrule[EStep-New]
  {f(\seq{x}:\seq{\bt}) = \newstmt{s}{r}{g(\seq{a}')} \in D \andalso \SORTED(\ProphecyMap(s')) \andalso \ProphecyMap(s') = \ProphecyMap(r') \andalso s', r' \FRESH}
  {((i, f(\seq{a})), \EQu, \TSMap, \ProphecyMap \setminus \SET{s', r'}, \FailureMap) \stepC{D} ((i, [s'/s, r'/r][\seq{a}/\seq{x}]g(\seq{a}')), \EQu[ r' \mapsto (\SET{s'},  \NIL, -\infty) ], \TSMap, \ProphecyMap, \FailureMap)}
\infrule[EStep-Send]
  {f(\seq{x}:\seq{\bt}) = \sendstmt{a_0}{s}{g(\seq{a'})}\in D \andalso [\seq{a}/\seq{x}]a_0 \eval n \andalso \EQu(r) = (S, \Eqc, t_S) \\
  s' = [\seq{a}/\seq{x}] s \andalso s' \in S \andalso t' \ge \TSMap(i) \andalso t' > t_S  \andalso t' < \MinTimestamp{\ProphecyMap(s'')} \text{ for every } s'' \in S}
  {((i, f(\seq{a})), \EQu, \TSMap, \ProphecyMap[ s' \mapsto (t', n) :: \ProphecyMap(s') ], \FailureMap) \\\stepC{D} ((i, [\seq{a}/\seq{x}]g(\seq{a}')), \EQu[ r \mapsto (S, \Eqc \cdot [(t', n)], t') ], \TSMap[ i \mapsto t' ], \ProphecyMap, \FailureMap)}
  \infrule[EStep-Recv]
  {f(\seq{x}:\seq{\bt}) = \recvstmt{y}{r}{g(\seq{a}')} \in D \andalso t'' \ge \TSMap(i) \andalso t'' > t'  \\ r' = [\seq{a}/\seq{x}]r \andalso \EQu(r') = (S, (t', n)::\Eqc, t_S)}
  {((i, f(\seq{a})), \EQu, \TSMap, \ProphecyMap[ r' \mapsto (t', n) :: \ProphecyMap(r') ], \FailureMap) \\\stepC{D} ((i, [n/y][\seq{a}/\seq{x}]g(\seq{a}')), \EQu[ r' \mapsto (S, \Eqc, t_S) ], \TSMap[ i \mapsto t'' ], \ProphecyMap, \FailureMap)}
\infrule[EStep-Dup]
  {f(\seq{x}:\seq{\bt}) = \dupstmt{y}{s}{g(\seq{a}')} \in D \andalso s'' \FRESH \andalso s' = [\seq{a}/\seq{x}]s \andalso s' \in S \andalso \EQu(r) = (S, \Eqc, t_S)}
  {((i, f(\seq{a})), \EQu, \TSMap, \ProphecyMap[ s' \mapsto \MergeT(\ProphecyMap(s'), \ProphecyMap(s'')) ] \setminus \SET{s''}, \FailureMap)\\
  \stepC{D} ((i, [s''/y][\seq{a}/\seq{x}]g(\seq{a}')), \EQu[ r \mapsto (S \cup \SET{s''}, \Eqc, t_S) ], \TSMap, \ProphecyMap, \FailureMap)}
\fbox{\( (\EThreads, \EQu, \TSMap, \ProphecyMap, \FailureMap) \stepC{D} (\Threads', \EQu', \TSMap', \ProphecyMap', \FailureMap')\)}
  \infrule[EStep-Conc]
  {(\thread, \EQu, \TSMap, \ProphecyMap, \FailureMap) \stepC{D} (\thread', \EQu', \TSMap', \ProphecyMap', \FailureMap')}{(\Threads \cup \{ \thread \}, \EQu, \TSMap, \ProphecyMap, \FailureMap) \stepC{D} (\Threads \cup \{ \thread'\}, \EQu', \TSMap', \ProphecyMap', \FailureMap') }
\infrule[EStep-Spawn]
  {f(\seq{x}:\seq{\bt}) = \spawnstmt{g_1(\seq{a}_1)}{g_2(\seq{a}_2)} \in D \andalso j \FRESH\ }
  {(\Threads \cup \SET{(i, f(\seq{a})}), \EQu, \TSMap, \ProphecyMap, \FailureMap[ i \mapsto (\FailureMap(i) \lor \FailureMap(j)) ] \setminus \SET{j}) \\\stepC{D} (\Threads \cup \SET{(j, [\seq{a}/\seq{x}]g_1(\seq{a}_1)), (i, [\seq{a}/\seq{x}]g_2(\seq{a}_2))}, \EQu, \TSMap[ j \mapsto \TSMap(i) ], \ProphecyMap, \FailureMap)}
\infrule[EStep-Fail]
  {f(\seq{x}:\seq{\bt}) = \FAIL \in D \andalso \ProphecyMap = \SET{ x \mapsto \NIL \mid x \in \OutChannels{\EQu}}\\
  \FailureMap = \SET{ i \mapsto \TRUE } \uplus \SET{ j \mapsto \FALSE \mid (j, g(\seq{x}:\seq{\bt})) \in \Threads, g(\seq{x}:\seq{\bt}) \ne \FAIL} \\
  \uplus\ \SET{j \mapsto b \mid (j, g(\seq{x}:\seq{\bt})) \in \Threads, g(\seq{x}:\seq{\bt}) = \FAIL \in D, b = \TRUE \lor b = \FALSE}}
  {(\Threads \cup \SET{(i, f(\seq{a}))}, \EQu, \TSMap, \ProphecyMap, \FailureMap) \stepC{D} \FAIL}
  \caption{Operational Semantics of Extended Configuration}
\label{fig:operational-semantics-extended}
\end{figure}

We explain how the content of \( \EConf \) is meant to evolve and what invariants will be maintained.
\begin{definition}
  \label{def:well-formedness-of-extended-configurations}
  An extended configuration $\EConf = (\Threads, \EQu, \TSMap, \ProphecyMap, \FailureMap)$ is well-formed if the following conditions hold:
\begin{enumerate}
  \item For each sender $s$, $s$ occurs in $\Threads$ if and only if $s \in \DOM{ \ProphecyMap }$ and there exists a unique receiver $r$ with $\EQu(r) = (S, \Eqc, t_S)$ and $s \in S$.
  \item For each receiver $r$, $r$ occurs in $\Threads$ if and only if $r \in \DOM{\ProphecyMap}$ and $r \in \DOM{ \EQu }$.
  \item For each receiver $r$ such that $\EQu(r) = (\SET{s_1, \ldots, s_n}, \Eqc, t_S)$, the followings hold:
  \begin{align*}
    &\SORTED(\Eqc), \\
    &\SORTED(\ProphecyMap(s_i)) \mbox{ for each } i = 1, \ldots, n,\\
    &\SORTED(\MergeT(\Eqc, \ProphecyMap(s_1), \ldots, \ProphecyMap(s_n))),\\
    &\ProphecyMap(r) =  \MergeT(\Eqc, \ProphecyMap(s_1), \ldots, \ProphecyMap(s_n)), \\
    &\MaxTimestamp{ \Eqc } \le t_S, \mbox{ and }\\
    &t_S < \MinTimestamp{ \ProphecyMap(s_i) } \mbox{ for each } i = 1, \ldots, n.
  \end{align*}
  \item For each thread identifier $i$, $i$ occurs in $\Threads$ if and only if $i \in \DOM{\TSMap}$ and $i \in \DOM{\FailureMap}$.
\end{enumerate}
\qed
\end{definition}
Well-formedness is preserved by reduction.
\begin{lemma}[Preservation of well-formedness]
  \label{lem:well-formedness-invariant}
  If \( \EConf_1 \) is well-formed and \( \EConf_1 \stepC{D}\EConf_2 \), then so is \( \EConf_2 \).
\end{lemma}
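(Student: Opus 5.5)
The proof goes by case analysis on the rule deriving \( \EConf_1 \stepC{D} \EConf_2 \). Since \rn{EStep-Conc} only lifts a single-thread step to a thread pool without changing \( \EQu, \TSMap, \ProphecyMap, \FailureMap \) beyond what the inner step does, it suffices to treat the thread-local rules together with \rn{EStep-Spawn}. \rn{EStep-Fail} needs no work, because \( \FAIL \) is trivially well-formed. For each rule we check conditions 1--4 of \autoref{def:well-formedness-of-extended-configurations} one at a time. For every channel not touched by the step, nothing changes, so the obligations are local to the receiver \( r \) involved (or to the thread ids \( i, j \)).

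The structural conditions 1, 2 and 4 are bookkeeping. The rules for \( \mathtt{if} \) and function calls only rearrange variables; typing linearity (\autoref{fig:target-typing-rules}) guarantees that channel names are neither duplicated nor lost, except where they are dropped at \( () \). We read the omitted drop rules as removing the corresponding entries, or we take the domains up to the names occurring in \( \Threads \).

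\begin{itemize}
\item \rn{EStep-New} adds \( s', r' \) to \( \DOM{\ProphecyMap} \) and \( r' \) to \( \DOM{\EQu} \) simultaneously. Uniqueness of the receiver for \( s' \) follows from freshness.
\item \rn{EStep-Dup} adds the fresh \( s'' \) to both \( S \) and \( \DOM{\ProphecyMap} \).
\item \rn{EStep-Spawn} adds \( j \) to both \( \DOM{\TSMap} \) and \( \DOM{\FailureMap} \).
\end{itemize}

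The real content is condition 3, checked for each rule that affects the receiver \( r \).

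\begin{itemize}
\item \textbf{New.} The queue is empty and \( t_S=-\infty \). The premises give \( \SORTED(\ProphecyMap(s')) \) and \( \ProphecyMap(r')=\ProphecyMap(s')=\MergeT(\NIL,\ProphecyMap(s')) \), and \( -\infty < \MinTimestamp{\cdot} \) holds trivially.
\item \textbf{Send.} Write the old prophecy as \( (t',n)::p \) with \( p \) the new \( \ProphecyMap(s') \). Sortedness of \( p \) is inherited from its tail. The new queue \( \Eqc\cdot[(t',n)] \) is sorted because \( t' > t_S \ge \MaxTimestamp{\Eqc} \), and \( \MaxTimestamp \) of it equals \( t' \), the new \( t_S \). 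The new lower bound \( t' < \MinTimestamp{\ProphecyMap(s'')} \) for all \( s''\in S \) is exactly a premise; for \( s'' = s' \) it follows from sortedness of the old prophecy. For the equation \( \ProphecyMap(r)=\MergeT(\dots) \), the element \( (t',n) \) is strictly smaller than every prophecy timestamp and larger than every queued timestamp. Hence moving it from the head of \( s' \)'s list to the tail of the queue does not change the \( \MergeT \) result.
\item \textbf{Recv.} The head of the queue is removed and \( \ProphecyMap(r') \) loses its head. We must show the head of \( \MergeT(\Eqc,\dots) \) is the head of \( \Eqc \). This holds because every queue timestamp is \( \le t_S < \) every prophecy timestamp, and the tail of a sorted merge is again the merge of the tails.
\item \textbf{Dup.} The prophecies of \( s' \) and \( s'' \) are merged into the old \( \ProphecyMap(s') \). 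We need associativity/commutativity of \( \MergeT \) on lists with strictly sorted distinct timestamps, and that the sub-lists of a sorted list are sorted. The bound \( t_S < \MinTimestamp \) for the split lists follows since each split is a sub-list.
\end{itemize}

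The main obstacle is making \( \MergeT \) well-behaved. I would first prove a small lemma: if lists \( l_1,\dots,l_k \) are sorted and \( \MergeT(l_1,\dots,l_k) \) is sorted (so all timestamps are distinct), then \( \MergeT \) is the unique sorted interleaving. It is then invariant under permutation and regrouping of its arguments, and commutes with removing a globally minimal element. The Send, Recv and Dup cases then reduce to this lemma plus the timestamp inequalities listed above.
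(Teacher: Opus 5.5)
Your proposal is correct and follows essentially the same route as the paper. Both proceed by case analysis on the reduction rule, and the only substantive work is condition~3 for \rn{EStep-New}, \rn{EStep-Send}, \rn{EStep-Recv} and \rn{EStep-Dup}, discharged by the same facts: $t' > t_S \ge$ every queued timestamp, $t' <$ every prophecy timestamp, and sub-lists of sorted lists are sorted. Your auxiliary lemma (that $\MergeT$ of sorted lists with pairwise distinct timestamps is the unique sorted interleaving, hence invariant under regrouping and removal of a global minimum) makes explicit what the paper uses tacitly in its Send and Dup calculations; likewise your treatment of \rn{EStep-Recv}, \rn{EStep-Spawn} and \rn{EStep-Conc}, and your appeal to typing linearity for conditions~1, 2 and~4, fill in steps the paper dismisses as similar or trivial.
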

\begin{proof}
We proceed by case analysis on the rule used for \( \EConf_1 \Longrightarrow_D \EConf_2 \).
We only prove the interesting cases.
In all the cases below, we let  \( \EConf_1 = (\EThreads,\EQu,\TSMap,\ProphecyMap,\FailureMap) \).
\begin{itemize}
  \item Case \rn{EStep-New}

    The step must be of the form
   \begin{align*}
     &((i,f(\vec a)), \EQu, \TSMap, \ProphecyMap' \setminus \{ s',r' \}, \FailureMap) \\
     &\qquad \stepC{D}
       ((i,[s'/s,r'/r][\seq a/\seq x]g(\vec a')), \EQu[r'\mapsto(\{s'\},[], -\infty)], \TSMap,\ProphecyMap',\FailureMap),
   \end{align*}
    where \(s'\) and \(r'\) are fresh.
    We now check the third condition of well-formedness.
    Since \( r' \) part is the only update part of  \( \EQu \), it suffices to check that
    \begin{enumerate}
      \item \( \ProphecyMap'(r')=\MergeT([],\ProphecyMap'(s'))=\ProphecyMap'(s') \)
      \item \( \SORTED(\ProphecyMap'(s')) \) and
      \item   \( -\infty < \MinTimestamp{\ProphecyMap(s')} \).
    \end{enumerate}
    The first two conditions are exactly the premise of \rn{E-New}, and the third condition is trivial.
    This also ensures the second condition of well-formedness, saying that \( \EQu(r') \) and \(M(r')\) are defined.

    Freshness gives the required uniqueness of the receiver associated with \(s'\) (i.e.~the first condition of well-formedness).
    The fourth condition of well-formedness also holds because \( (\TSMap, \FailureMap) \) are unchanged and no threads are generated or killed.
    Hence \( \EConf_2 \) is well-formed.

  \item Case \rn{EStep-Send}

    We must have
    \begin{align*}
      &((i, f(\seq{a})), \EQu, \TSMap, \ProphecyMap'[ s' \mapsto (t', n) :: \ProphecyMap'(s') ], \FailureMap) \\
      &\qquad \stepC{D} ((i, [\seq{a}/\seq{x}]g(\seq{a}')), \EQu[ r \mapsto (S, \Eqc \cdot [(t', n)], t') ], \TSMap[ i \mapsto t' ], \ProphecyMap', \FailureMap)
    \end{align*}
    with
    \begin{align}
     & \EQu(r) = (S, \Eqc, t_S) \nonumber  \\
     &s' = [\seq{a}/\seq{x}] s \andalso s' \in S \nonumber \\
     &t' \ge \TSMap(i) \andalso t' > t_S   \label{eq:lem:well-formedness-invariant:send:ts} \\
     &t' < \MinTimestamp{\ProphecyMap(s_i)} \text{ for every } 1 \le i \le m \label{eq:lem:well-formedness-invariant:send:min}
    \end{align}
    where \(   S= \{s', s_1, \ldots, s_m\} \).

    It suffices to check the third condition of the well-formedness rule against \( r \) for \( \EConf_2 \); the rest follows from that of \( \EConf_1 \).

    Since \(\EConf_1\) is well-formed, we have
    \[
    \ProphecyMap(r)  = \MergeT(\Eqc, (t', n)::\ProphecyMap'(s'), \ProphecyMap'(s_1),\ldots,\ProphecyMap'(s_m)).
    \]
    Thus, we have
    \begin{align*}
      \ProphecyMap'(r)
      &= \MergeT(\Eqc, (t', n)::\ProphecyMap'(s'), \ProphecyMap'(s_1),\ldots,\ProphecyMap'(s_m))  \\
      &= \MergeT(\Eqc \cdot [(t',n)], \ProphecyMap'(s'), \ProphecyMap'(s_1),\ldots,\ProphecyMap'(s_m)) \tag{by \eqref{eq:lem:well-formedness-invariant:send:min}}.
    \end{align*}

    The sortedness of \( \Eqc \cdot[(t',n)] \) follows from the sortedness of \( \Eqc \) and the following inequality
    \begin{align*}
      \MaxTimestamp{\Eqc} &\leq t_S \tag{by well-formedness of \( \EConf_1\)} \\
      &< t' \tag{by \eqref{eq:lem:well-formedness-invariant:send:ts}}
    \end{align*}
    and the sortedness of \(\ProphecyMap'(s')\) follows the sortedness of \(\ProphecyMap(s')=(t',n)::\ProphecyMap'(s')\).

  \item Case \rn{EStep-Recv}

    Similar to the previous case.

  \item Case \rn{EStep-Dup}
    In this case, we must have
    \begin{align*}
      &((i, f(\seq{a})), \EQu, \TSMap, \ProphecyMap'[ s' \mapsto \MergeT(\ProphecyMap'(s'), \ProphecyMap'(s'')) ] \setminus \SET{s''}, \FailureMap) \\
        &\qquad \stepC{D} ((i, [s''/y][\seq{a}/\seq{x}]g(\seq{a}')), \EQu[ r \mapsto (S \cup \SET{s''}, \Eqc, t_S) ], \TSMap, \ProphecyMap', \FailureMap)
    \end{align*}
    with
    \begin{align*}
      & \ProphecyMap = \ProphecyMap'[ s' \mapsto \MergeT(\ProphecyMap'(s'), \ProphecyMap'(s'')) ] \setminus \SET{s''} \\
      &\EQu(r) = (S, \Eqc, t_S) \\
      &s' = [\seq{a}/\seq{x}]s \text{ and } s' \in S.
    \end{align*}

    The first condition of the well-formedness holds because \( s'' \) is the only added sender, and this is related to \( r \) as the channel context is updated to \( \EQu[ r \mapsto (S \cup \SET{s''}, \Eqc, t_S) ] \).

    Now we check the third condition of the well-formedness.
    The channels that we need to consider are \( r \), \( s' \) and \( s'' \).
    Suppose that  \(  S=\{s',s_1, \ldots, s_m\} \).
    Since \(\EConf_1\) is well-formed, we have
    \[
    \ProphecyMap(r) = \MergeT(\Eqc,\ProphecyMap(s'), \ProphecyMap'(s_1),\ldots,\ProphecyMap'(s_m)).
    \]
    Therefore
    \begin{align*}
      \ProphecyMap'(r)
      &= \ProphecyMap(r)\\
      &= \MergeT(\Eqc,\ProphecyMap(s'), \ProphecyMap(s_1),\ldots,\ProphecyMap(s_m))  \tag{by the above equation}\\
      &= \MergeT(\Eqc,\ProphecyMap'(s'),\ProphecyMap'(s''), \ProphecyMap'(s_1),\ldots,\ProphecyMap'(s_m)) \tag{since \( \ProphecyMap(s') \) is the merge of \( \ProphecyMap'(s') \) and \( \ProphecyMap'(s'') \)}.
    \end{align*}
    Sortedness of \( \ProphecyMap'(s') \) and \( \ProphecyMap'(s'') \) follows from sortedness of \(\ProphecyMap(s')\) and the fact that merge preserves the order of the two
    sublists.
    The remaining conditions to check are that for \( t' \), namely the conditions on the last time the send was performed.
    The only non-trivial conditions to check are
    \begin{align*}
      t_S< \MinTimestamp{\ProphecyMap'(s')}  \quad \text{ and } \quad t_S < \MinTimestamp{\ProphecyMap'(s'')}.
    \end{align*}
    These inequalities hold because
    \begin{align*}
      t &<  \MinTimestamp{\ProphecyMap(s')} \tag{by the well-formedness of \( \EConf_1 \)} \\
      &\le \MinTimestamp{\ProphecyMap'(s')} \tag{since \( \ProphecyMap'(s') \) is a sublist of \( \ProphecyMap(s') \)}
    \end{align*}
    and similarly for \( \ProphecyMap'(s'')\).

    The preservation of the second and the fourth conditions of well-formedness is trivial.
\end{itemize}
\end{proof}

\subparagraph*{Translation}
Since extended configurations carry information about the failure flags, timestamps and prophecies, we can translate them into a sequence of ground atomic formulas, which we show to correspond to a goal in a resolution proof.
The translation from extended configurations to atomic formulas is defined as follows.
\begin{align*}
  &\transmap {((i, f(\seq{a})), \EQu, \TSMap, \ProphecyMap, \FailureMap)} \defeq F(\FailureMap(i), \TSMap(i), \subst_\ProphecyMap \seq{a}) \\
  &\transmap{\FAIL} \defeq \emptyset \\
  &\transmap{(\SET{ \thread_1, \ldots, \thread_n }, \EQu, \TSMap, \ProphecyMap, \FailureMap)} \defeq \Atom_1, \ldots, \Atom_n \qquad \text{ where } \Atom_i = \transmap{(\thread_i, \EQu, \TSMap, \ProphecyMap, \FailureMap)}
\end{align*}

Here \( \subst_\ProphecyMap \) is the map \( \ProphecyMap \) regarded as a substitution, namely, a map that associates \( x \) with the constant representing the list \( \ProphecyMap(x) \).
Since we will consider well-formed configurations for the rest of this section, \( \subst_\ProphecyMap \) will close all the list variables.
Therefore, the translation of a well-formed configuration is a sequence of ground atomic formulas.

In what follows, we shall also write \( \transmap {D} \) for the system of CHCs obtained by translating the set of function definitions \( D \).
In other words, \( \transmap{D} \) is \( \transmap{(D, f(\seq a))}\) except for the goal clause corresponding to \( f(\seq a) \).

\subsection{Proof of Soundness}
We now prove the only if direction of \autoref{thm:soundness}.
We first show that reduction steps can be simulated by the resolution steps.
The following lemma considers all the cases except for the case where the configuration reduces to \( \FAIL \).
\begin{lemma}[Simulation of extended reduction]
  \label{lem:simulation-without-estep-fail}
  Let $D$ be a set of function definitions such that $\transmap{D} = \CHCSystem$. Let $\EConf_1$ and $\EConf_2$ be extended configurations such that $\EConf_1 \stepC{D} \EConf_2$, \( \EConf_1 \) is well-formed and \( \EConf_2 \neq \FAIL \).
  If $\transmap{\EConf_i} =  \seq{A}_i$ (for $i = 1, 2$), then $\Goal{\seq{A}_1} \SLDstep{\CHCSystem} \Goal{\seq{A}_2}$.
\end{lemma}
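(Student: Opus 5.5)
The plan is a case analysis on the rule deriving $\EConf_1 \stepC{D} \EConf_2$. Since $\EConf_2 \neq \FAIL$, \rn{EStep-Fail} is excluded. Every other step is an instance of \rn{EStep-Conc} (or \rn{EStep-Spawn}) acting on a single thread $(i, f(\seq a))$ and leaving the other threads syntactically unchanged. The goal $\transmap{\EConf_1}$ therefore differs from $\transmap{\EConf_2}$ only at the atom $A = F(\FailureMap(i), \TSMap(i), \subst_\ProphecyMap \seq a)$, which is replaced by the atoms of the successor thread(s). We resolve $A$ against the clause $F(b,t,\seq x) \Leftarrow \transmap{M}^b_t$ generated from the definition $f(\seq x) = M$; for a disjunctive body, we resolve against the appropriate split clause.

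One preliminary point needs attention. The other threads' atoms must be literally unchanged even though $\ProphecyMap$, $\TSMap$ and $\FailureMap$ change. By the linear typing discipline and well-formedness (conditions 1, 2 and 4), each channel name and each thread id occurs in exactly one thread. The updated entries of $\ProphecyMap$ and $\TSMap$ ($s'$, $r'$, fresh names, thread $i$, fresh $j$) therefore concern only the active thread, and the other atoms are the same. We state this as a small observation before the case split.

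Then, for each rule, we choose the substitution $\subst$ for the clause variables explicitly and check that $\models \Constraint\subst$ holds.
\begin{itemize}
\item \rn{EStep-If-True}, \rn{EStep-If-False} and \rn{EStep-Func} are immediate: take $b \mapsto \FailureMap(i)$, $t \mapsto \TSMap(i)$ and $\seq x \mapsto \subst_\ProphecyMap\seq a$. The guard $a\neq 0$ or $a = 0$ holds by the premise on evaluation.
\item For \rn{EStep-Send}, we instantiate $t_s \mapsto t'$ and $x' \mapsto \ProphecyMap'(s')$. The constraint $x = (t_s,a)::x'$ is exactly the form of the pre-state prophecy $\ProphecyMap(s') = (t',n)::\ProphecyMap'(s')$, and $t\le t_s$ is the premise $t' \ge \TSMap(i)$. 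The new atom is $F(\FailureMap(i), t', \ldots)$, which matches $\TSMap[i\mapsto t']$.
\item \rn{EStep-Recv} is analogous. The pre-state prophecy of $r'$ is $(t',n)::\ProphecyMap'(r')$, and we use $t_r \mapsto t''$, together with $t''>t'$ and $t''\ge \TSMap(i)$.
\item For \rn{EStep-New}, we instantiate the existential $x,y$ with $\ProphecyMap'(s') = \ProphecyMap'(r')$. The constraints $x=y$ and $\SORTED(x)$ are the premises. The auxiliary atom $\SORTED(\ldots)$ is discharged by further resolution steps using the $\SORTED$ clauses, and we read $\SLDstep{\CHCSystem}$ up to these auxiliary steps, or, following the remark, treat $\SORTED$ as interpreted.
\item For \rn{EStep-Dup}, we likewise instantiate $x \mapsto \ProphecyMap'(s'')$ and $y'\mapsto\ProphecyMap'(s')$. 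We then use the fact that $\MergeT(\ProphecyMap'(s'),\ProphecyMap'(s''))$ is an interleaving, so that $\MERGE(\ProphecyMap'(s''),\ProphecyMap'(s'),\ProphecyMap(s'))$ is derivable from the $\MERGE$ clauses; this needs a short induction on list length.
\item For \rn{EStep-Spawn}, we take $b_1 \mapsto \FailureMap(j)$ and $b_2\mapsto\FailureMap(i)$, with $b \mapsto \FailureMap(j)\lor\FailureMap(i)$, which is the pre-state flag of thread $i$. Both new atoms carry the timestamp $\TSMap(i)$, and $\TSMap[j\mapsto\TSMap(i)]$ gives exactly that.
\end{itemize}

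The main obstacle I expect is bookkeeping rather than mathematics. There are two issues. First, the pre-state maps in the rules are written in the "updated" form ($\ProphecyMap[s'\mapsto\ldots]$, $\FailureMap[i\mapsto\ldots]\setminus\{j\}$), so we must carefully match $\subst_{\ProphecyMap}$ on each side with the clause variables. Second, the auxiliary predicates $\SORTED$ and $\MERGE$ introduce extra goal atoms, so a single resolution step, taken literally, leaves them in the goal. I would handle the second issue first by proving two small lemmas: if $\SORTED(l)$ holds as an interpreted predicate, and likewise if $l$ is an order-preserving interleaving of $l_1,l_2$, then the corresponding atom resolves to the empty goal. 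With these lemmas, the statement holds with $\SLDstep{\CHCSystem}$ read as one step on the function atom followed by eliminating the auxiliary atoms.
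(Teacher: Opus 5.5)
Your proposal is correct and follows essentially the same route as the paper: a case analysis on the reduction rule, resolving the active thread's atom $F(\FailureMap(i),\TSMap(i),\subst_\ProphecyMap\seq a)$ against the clause generated from its definition, with the substitution read off from the rule's premises. Like the paper's \rn{EStep-Conc} and \rn{EStep-Spawn} cases, you argue that the other atoms are unchanged because only the active thread's channels and timestamp are updated; your explicit discharge of the auxiliary $\SORTED$ and $\MERGE$ atoms by extra resolution steps is more careful than the paper, which treats $\SORTED$ as an interpreted predicate and simply calls the \rn{EStep-Dup} case similar.
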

\begin{proof}

  We proceed by case analysis on the evaluation rule. 
  
  \begin{itemize}
    \item Case \rn{EStep-If-True}. 

      In this case, we must have
      \begin{align}
      \EConf'_1 &= ((i, f(\seq{a})), \EQu, \TSMap, \ProphecyMap, \FailureMap) \nonumber \\
      \EConf'_2 &= ((i, [\seq{a}/\seq{x}]g_1(\seq{a}_1)), \EQu, \TSMap, \ProphecyMap, \FailureMap) \nonumber \\
      f(\seq{x}:\seq{\bt}) &= \ifstmt{a_0}{g_1(\seq{a}_1)}{g_2(\seq{a}_2}) \in D \nonumber \\
      [\seq{a}/\seq{x}]a_0 &\eval n \text{ with } n \ne 0. \label{eq:lem:simulation:if-true}
      \end{align}
      The configuration $\EConf'_1$ is translated into $F(\FailureMap(i), \TSMap(i), \seq{a} \subst_\ProphecyMap)$ and the configuration $\EConf'_2$ is translated into $G_1(\FailureMap(i), \TSMap(i), \seq{a}_1 \subst_\ProphecyMap)$.
      Moreover, $\CHCSystem$ contains a CHC
      \[
      F(b, t, \seq{x}) \Leftarrow a_0 \ne 0 \land G_1(b, t, \seq{a}_1).
      \]
      We can apply this clause to the resolution step.
      We are left to show \( \models [\seq{a}/\seq{x}]a_0 \neq 0 \), and this follows from \eqref{eq:lem:simulation:if-true}.

    \item Cases \rn{EStep-If-False} and \rn{EStep-Func}.

      Similar to the previous case.

\item Case \rn{EStep-New}.

  It suffices to show that $\ProphecyMap(r') =  \ProphecyMap(s')$ and $\SORTED(\ProphecyMap(s'))$. By the premise of \rn{EStep-New} this indeed holds.

\item Case \rn{EStep-Send}.

      It must be the case that
      \begin{align*}
        \EConf'_1 &= ((i, f(\seq{a})), \EQu, \TSMap,  \ProphecyMap',\FailureMap) \\
        \EConf'_2  &= ((i, [\seq{a}/\seq{x}]g(\seq{a}')), \EQu[ r \mapsto (S, \Eqc \cdot [(t', n)], t') ], \TSMap[ i \mapsto t' ], \ProphecyMap, \FailureMap)
      \end{align*}
      with
      \begin{align}
        \ProphecyMap' &= \ProphecyMap[s' \mapsto (t', n) :: \ProphecyMap(s')]  \label{eq:lem:simulation:send:ProphecyMap} \\
        \EQu(r) &= (S, \Eqc, t_S) \\
        &f(\seq{x}:\seq{\bt}) = \sendstmt{a_0}{s}{g(\seq{a'})}\in D \nonumber \\
        &[\seq{a}/\seq{x}]a_0 \eval n  \label{eq:lem:simulation:send:azero} \\
        s' &= [\seq{a}/\seq{x}] s \andalso s' \in S \label{eq:lem:simulation:send:sprime}\\
        \quad &t' \ge \TSMap(i) \andalso t' > t_S \label{eq:lem:simulation:send:time}
      \end{align}

      The configuration $\EConf'_1$ is translated into \( \Atom_1 = F(\FailureMap(i), \TSMap(i), \subst_{\ProphecyMap'}\seq{a})  \) and the configuration $\EConf'_2$ is translated into $\Atom_2 = G(\FailureMap(i), t', \subst_\ProphecyMap[\seq{a}/\seq{x}]\seq{a}') $. Moreover, $\CHCSystem$ contains a CHC
      \[
      F(b, t, \seq{x}) \Leftarrow G(b, t', [s''/s] \seq{a}') \land s = (t', a_0) :: s'' \land t < t'.
      \]
      We show that we can deduce \( \Atom_1 \SLDstep{\CHCSystem} \Atom_2 \) by applying this clause together with a substitution \( \subst = \subst_{\ProphecyMap'} [\TSMap(i)/t][\FailureMap(i)/b][\seq a/\seq x]\).
      That is, we show that the constraint
      \begin{align*}
         \subst_{\ProphecyMap'}[\seq a/\seq x] [U(i)/t](s = (t', a_0) :: s'' \land t < t')
      \end{align*}
      is valid and
      \[
      \subst_{\ProphecyMap'}[\FailureMap(i)/b][\seq a/\seq x]  G(b, t', [s''/s] \seq{a}') \iff G(\FailureMap(i), t', \subst_\ProphecyMap[\seq{a}/\seq{x}]\seq{a}')
      \]
      modulo the constraint.
      The former holds because the equality constraint is valid, as demonstrated as follows
      \begin{align*}
        &\subst_{\ProphecyMap'} [\seq a/\seq x](s = (t', a_0) :: s'') \\
        &\equiv \subst_{\ProphecyMap'}s' = (t',  \subst_{\ProphecyMap'}[\seq a/\seq x]a_0) ::  \subst_{\ProphecyMap'}[\seq a/\seq x] s' \tag{by \eqref{eq:lem:simulation:send:sprime}}\\
        &\equiv (t', n) :: M(s') = (t', \subst_{\ProphecyMap'}[\seq a/\seq x]a_0) :: M(s')\tag{by  \eqref{eq:lem:simulation:send:ProphecyMap} and \(s'' \notin \seq x\) }\\
        &\iff (t', n) :: M(s') = (t', n) :: M(s') \tag{by \eqref{eq:lem:simulation:send:azero}}
      \end{align*}
      and the inequality follows from~\eqref{eq:lem:simulation:send:time}.
      The latter also holds because
      \begin{align*}
      \subst_{\ProphecyMap'}[\seq a/\seq x][s''/s] \seq{a}'
      &\iff \subst_{\ProphecyMap'}[\seq{a}/\seq{x}][M(s')/s]\seq{a}' \tag{by the above equality}\\
      &\equiv \subst_\ProphecyMap  [\seq{a}/\seq{x}]\seq{a}' \tag{by \eqref{eq:lem:simulation:send:ProphecyMap} and \eqref{eq:lem:simulation:send:sprime}}.
      \end{align*}
    \item Cases \rn{EStep-Recv} and \rn{EStep-Dup}

      Similar to the previous case. %

\item Case \rn{EStep-Conc}.

      In this case, the configurations must be of the shape
      \begin{align*}
        \EConf_1 &= (\Threads \cup \SET{ \thread }, \EQu, \TSMap, \ProphecyMap, \FailureMap) \text{ and} \\
        \EConf_2 &= (\Threads \cup \SET{ \thread' }, \EQu', \TSMap', \ProphecyMap', \FailureMap').
      \end{align*}
      Since executing thread $T$ changes only the prophecies of that channels owned by $T$ in $\ProphecyMap$, and only the timestamp of $T$ in $\TSMap$, the configurations $(\Threads, \EQu, \TSMap, \ProphecyMap, \FailureMap)$ and $(\Threads, \EQu', \TSMap', \ProphecyMap', \FailureMap')$ are translated into the same goal. By the above cases, for atomic formulas $A$ translated from the configuration of $\thread$ and $A'$ translated from the configuration of $\thread'$, we have $\Goal{A} \SLDstep{\CHCSystem} \Goal{A'}$.

    \item Case \rn{EStep-Spawn}.

      Similar to the case \rn{EStep-Conc}.
\end{itemize}
\end{proof}

Now we prove the case for an assertion failure.
On the CHC side, this corresponds to discharging all the goals.
\begin{lemma}\label{lem:simulation-estep-fail}
  Let $D$ be a set of function definitions such that $\transmap{D} =  \CHCSystem$. Let $\EConf$ be an extended configuration such that $\EConf \stepC{D} \FAIL$.
  If $\transmap {\EConf} = \seq{A}_i$, then $\Goal{\seq{A}} \mSLDstep{\CHCSystem} \Goal{}$.
\end{lemma}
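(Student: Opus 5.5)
The step \(\EConf \stepC{D} \FAIL\) must be an instance of \rn{EStep-Fail}. So \(\EConf = (\Threads \cup \SET{(i, f(\seq a))}, \EQu, \TSMap, \ProphecyMap, \FailureMap)\) with \(f(\seq x) = \FAIL \in D\). Moreover, \(\ProphecyMap\) maps every sender channel to \(\NIL\), and \(\FailureMap\) assigns \(\TRUE\) to \(i\), \(\FALSE\) to every thread whose current function is not \(\FAIL\), and an arbitrary Boolean to the remaining \(\FAIL\)-threads. The translation \(\transmap{\EConf}\) is the sequence of atoms \(F_j(\FailureMap(j), \TSMap(j), \subst_\ProphecyMap \seq a_j)\), one for each thread \((j, f_j(\seq a_j))\). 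I will discharge these atoms one at a time. Every resolution step in the rule replaces a single atom and leaves the others untouched, so the steps compose into \(\Goal{\seq A} \mSLDstep{\CHCSystem} \Goal{}\) by induction on the number of threads. It therefore suffices to show that each atom resolves in one step to the empty goal using a body-free clause of \(\CHCSystem\).

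For the failing thread \(i\), I use the clause \(F(b, t, \seq x) \Leftarrow b = \TRUE \land \seq s = \seq\NIL\), where \(\seq s = \OutChannels{\seq x}\); this clause is generated from \(f(\seq x) = \FAIL\). I take the substitution \([\TRUE/b, \TSMap(i)/t, \subst_\ProphecyMap\seq a/\seq x]\). The head then matches the atom because \(\FailureMap(i) = \TRUE\). The constraint is valid because each sender-typed argument \(a_k\) of \(f\) is a sender channel occurring in \(\Threads\), and \(\ProphecyMap\) sends it to \(\NIL\).

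For any other thread \(j\) with \(\FailureMap(j) = \FALSE\), I use the auxiliary discarding clause \(F_j(\FALSE, t, \seq x_j) \Leftarrow \seq s_j = \seq \NIL\), which is added for every function. Its constraint is discharged by the same argument. For a thread \(j\) whose function is \(\FAIL\) and for which \(\FailureMap(j) = \TRUE\), I instead use the \(\FAIL\)-clause exactly as for thread \(i\).

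The only delicate point is justifying that the sender-typed arguments of each atom are really instantiated to \(\NIL\). Well-formedness is not among the lemma's hypotheses, so I will either note that the configurations of interest are well-formed, by Lemma~\ref{lem:well-formedness-invariant}, or read the premise \(\ProphecyMap = \SET{x \mapsto \NIL \mid x \in \OutChannels{\EQu}}\) directly. Under well-formedness, conditions 1 and 2 of Definition~\ref{def:well-formedness-of-extended-configurations} guarantee that every channel occurring in a thread lies in \(\DOM{\ProphecyMap}\), and that every sender belongs to some \(S\) in \(\EQu\). Hence \(\subst_\ProphecyMap\) closes the atoms and sends every sender argument to \(\NIL\). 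Typing ensures that an argument of sender type is a channel name rather than an arithmetic expression. Receiver arguments carry no constraint in either clause, so their possibly nonempty prophecies cause no difficulty.
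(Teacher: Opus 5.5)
Your proposal is correct and follows the paper's proof. The step must be \rn{EStep-Fail}, and each thread's atom is discharged separately: by the $\FAIL$-clause when $\FailureMap(k)=\TRUE$, and by the body-free clause with $b=\FALSE$ when $\FailureMap(k)=\FALSE$, with the sender constraints read off the premise $\ProphecyMap = \SET{x \mapsto \NIL \mid x \in \OutChannels{\EQu}}$ (your use of well-formedness to place every sender occurring in a thread inside some $S$ makes explicit what the paper leaves implicit).
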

\begin{proof}
  Let $\EConf = (\Threads, \EQu, \TSMap, \ProphecyMap, \FailureMap)$.
  Since the last rule applied must be \rn{EStep-Fail}, we must have
  \begin{align}
    \Threads &= \Threads' \cup \{ (i, f(\seq a) \} \text{ with } f(\seq{x}:\seq{\bt}) = \FAIL \in D \nonumber \\
    \ProphecyMap &= \SET{ x \mapsto \NIL \mid x \in \OutChannels{\EQu}} \label{eq:lem:simulation-step-fail:prophecy} \\
    \FailureMap& = \SET{ i \mapsto \TRUE }
                 \begin{aligned}[t]
                 &\uplus \SET{ j \mapsto \FALSE \mid (j, g(\seq{x}:\seq{\bt})) \in \Threads, g(\seq{x}:\seq{\bt}) \ne \FAIL} \\
                 &\uplus\ \SET{j \mapsto b \mid (j, g(\seq{x}:\seq{\bt})) \in \Threads, g(\seq{x}:\seq{\bt}) = \FAIL \in D, b = \TRUE \lor b = \FALSE}.  \label{eq:lem:simulation-step-fail:outcome}
                \end{aligned}
  \end{align}

  It suffices to show that $\Goal{ A_k } \mSLDstep{\CHCSystem} \Goal{}$ for each $A_k \in \seq{A}$, where each \( A_k \)
  corresponds to a thread \( (k, f_k(\seq a_k))\).
  For each $(k, f_k(\seq{a})) \in \Threads$, the configuration $((k, f_k(\seq{a}_k)), \EQu, \TSMap, \ProphecyMap, \FailureMap)$ is translated into $F_k(\FailureMap(k), \TSMap(k), \seq{a}_k \subst_\ProphecyMap)$.

  We distinguish two cases depending on whether $\FailureMap(k)$ is $\TRUE$ or $\FALSE$.

  If $\FailureMap(k) = \TRUE$, then by~\eqref{eq:lem:simulation-step-fail:outcome}, we have $f_k(\seq{x}:\seq{\bt}) = \FAIL \in D$. Hence $\CHCSystem$ contains the CHC
  \[
    F_k(b, t, \seq{x}) \Leftarrow b = \TRUE \land \seq{s} = \seq{\NIL}$ where $\seq{s} = \OutChannels{ \seq x_k }.
  \]
  By applying this clause we get $\Goal{F_k(\FailureMap(k), \TSMap(k), \seq{a} \subst_\ProphecyMap)} \SLDstep{\CHCSystem} \Goal{}$; the constraints are valid because of~\eqref{eq:lem:simulation-step-fail:prophecy}.

  Now we consider the case $\FailureMap(k) = \FALSE$.
  By construction, $\CHCSystem$ contains the CHC
  \[
    F_k(b, t, \seq{x}) \Leftarrow b = \FALSE \land \seq{s} = \seq{\NIL}$ where $\seq{s} = \OutChannels{ \seq x_k }.
  \]
    Again, by applying this clause and using~\eqref{eq:lem:simulation-step-fail:prophecy}, we conclude $\Goal{F_k(\FailureMap(k), \TSMap(k),  \subst_\ProphecyMap \seq{a})} \SLDstep{\CHCSystem} \Goal{}$.
\end{proof}

Although we have been working with a run of extended operational semantics, we can relate it to that of the normal operational semantics.
Since the values that are sent and received in the future can be obtained by actually running the program, we have the following:
\begin{lemma}
  \label{lem:efail-construct}
  Let $(D, f(\seq{a}))$ be a program and suppose that \( (\{ f(\seq a) \}, \emptyset) \mstepC{D} \FAIL \).
  Then we have $(\SET{(i, f(\seq{a}))}, \emptyset, \SET{i \mapsto 0}, \emptyset, \SET{i \mapsto \TRUE}) \mstepC{D} \FAIL$.
\end{lemma}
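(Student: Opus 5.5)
We fix a concrete failing run $(\SET{f(\seq a)},\emptyset) = C_0 \stepC{D} C_1 \stepC{D} \cdots \stepC{D} C_N = \FAIL$ of the ordinary semantics. From it we read off all the guessed components of the extended configurations, and then replay the run step by step in the extended semantics. The timestamps come from the global step index. The step $C_k \stepC{D} C_{k+1}$ gets time $k+1$ (so $t \ge 1 > 0$, consistent with the initial $\TSMap = \SET{i\mapsto 0}$). Every send or receive performed at step $k$ is tagged with $k+1$. Thread identifiers are assigned by tracking each thread along the run: the initial thread gets $i$, and each \rn{R-Spawn} gives the spawned thread a fresh $j$ while the parent keeps its id. Channel names are kept identical to those chosen by \rn{R-New}/\rn{R-Clone}.

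\textbf{Choice of prophecies.} For a sender name $s$ alive at position $k$, let $\ProphecyMap_k(s)$ be the list of pairs $(k'+1,n)$ over all steps $k' \ge k$ at which a thread sends $n$ via $s$ itself, in order. Merging with clones created after $k$ is handled by \rn{EStep-Dup}, so sends performed later via such a clone $s''$ must also be included in $\ProphecyMap_k(s)$ as long as $s''$ descends from $s$ by clones created at or after $k$. Concretely, $\ProphecyMap_k(s)$ collects all sends after $k$ via $s$ or any of its descendants created after $k$. For a receiver $r$, let $\ProphecyMap_k(r)$ be the list of messages still in $r$'s queue at position $k$ (tagged with their send times), followed by all later sends on any sender of $r$. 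This is exactly what $r$ may receive in the future, and the timestamps are increasing because they are global step indices. The failure map $\FailureMap_k$ sets a thread to $\TRUE$ if and only if, at step $N$, the failing thread descends from it via spawns. Concretely, the thread that executes \rn{R-Fail} is $\TRUE$, and at a spawn the parent's flag is the disjunction of the two children's flags. Every other thread is $\FALSE$. In particular the initial thread gets $\TRUE$, matching the statement. Finally $\TSMap_k(j)$ is the time of thread $j$'s last send/receive, or the value inherited at spawn, or $0$ initially. The third component of $\EQu_k(r)$ is the time of the last send to $r$, or $-\infty$.

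\textbf{Replay.} We then check, by induction on $k$, that $\EConf_k \stepC{D} \EConf_{k+1}$ using the corresponding extended rule, wrapped in \rn{EStep-Conc}. The map shapes in the rules are defined so that the pre-state prophecy equals the post-state one extended by the consumed head or merged split. For \rn{EStep-New}, the prophecies of $s'$ and $r'$ coincide because no message has yet been sent or queued, and they are sorted because the time tags are strictly increasing. For \rn{EStep-Send} at step $k$ with $t'=k+1$: we have $t' \ge \TSMap(i)$ and $t' > t_S$ since all earlier events have smaller indices, and $t' < \MinTimestamp{\ProphecyMap(s'')}$ since other senders' remaining sends are strictly later. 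For \rn{EStep-Recv}, $t''=k+1$ exceeds the send time of the dequeued message. For \rn{EStep-Dup}, $\MergeT$ by timestamp of the two descendants' future sends is exactly the prophecy for $s'$ before the clone. \rn{EStep-Spawn} matches the disjunction and copies the timestamp. At the last step, \rn{EStep-Fail} requires all sender prophecies to be $\NIL$, which holds since no sends follow. It also requires the failing thread to be $\TRUE$ and the other non-$\FAIL$ threads to be $\FALSE$, which holds by our choice.

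\textbf{Main obstacle.} The main obstacle is the bookkeeping for \rn{EStep-Dup} and \rn{EStep-Send}. We must define the sender prophecies relative to the clone tree so that the $\MergeT$ equation holds exactly, and we must justify the strict inequality against the minimum timestamp of every other sender. We plan to handle both uniformly by proving a single invariant: at position $k$, for each receiver $r$, the lists $\ProphecyMap_k(s)$ for $s$ among $r$'s senders partition the sends to $r$ after $k$, and every entry carries a tag greater than $k$. Both rule side conditions then follow directly from this invariant and from the injectivity of the step index.
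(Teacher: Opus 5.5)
Your proposal is essentially the paper's own construction: fix a failing run, timestamp events by trace position, take sender prophecies to be future sends, take receiver prophecies to be the timestamped queue followed by future sends, set failure flags by spawn-descent from the failing thread, and replay the run rule by rule. You are in fact more careful than the paper's sketch on three points the extended rules force: sender prophecies must include sends through later clones (for \rn{EStep-Dup}), a spawned thread inherits its parent's timestamp (for \rn{EStep-Spawn}), and a fresh channel's last-send time is $-\infty$ (for \rn{EStep-New}); the one step neither you nor the paper covers is \rn{R-End}, which has no extended counterpart and should simply not be replayed, leaving the finished thread in the pool with sender prophecies already $\NIL$ and flag $\FALSE$, so \rn{EStep-Fail} still applies at the end.
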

\begin{proof}[(Proof Sketch)]
\newcommand*{\forget}[1]{|#1|}
\newcommand*{\exttime}{\mathsf{time}}
\newcommand*{\annot}{\mathsf{annot}}
We briefly explain the construction.
Let
\[
  \pi \defeq
  C_0 \stepC{D} C_1 \stepC{D} \cdots \stepC{D} C_m=\FAIL
\]
be a failing trace.
We write \(C_k =(\thread_k, \Qu_k)\) for \(k < m\).
Suppose that threads in \(\pi\) have been annotated with identifiers, and write \(\widehat \Threads_k\) for the thread pool at position \(k\).

Let \(E(\pi)\) be the list of all send and receive events in \(\pi\), ordered by
their occurrence in the trace: \(  E(\pi)=e_1,\ldots,e_N \).
Here we will not be specific about the shape of events; for example, it could be defined as \( e ::= \mathsf{send}(i, k, s, n) \mid \mathsf{recv}(i, k, r, n,s) \) with the information of thread identifier, the position of the execution trace the event happened, the channels used for send/receive, and the value sent/received.
We also define \( \exttime(e_p) \defeq p \).

For each position \(k\), we define
\[
  \EConf_k^\pi \defeq (\widehat T_k,\EQu_k,\TSMap_k,\ProphecyMap_k,\FailureMap_k).
\]

The components are defined as follows.
First, for each thread identifier \( i \),
\[
  \TSMap_k(i) \defeq  \max \left( \{0\}\cup \{\,\exttime(e) \mid e \text{ is a send or receive event performed by } i
  \text{ before position } k\,\}\right).
\]
The map \(\EQu_k\) is obtained from \(Q_k\) by timestamping queued messages.
That is, if \( Q_k(r)=(S,n_1\cdots n_n), \) then \(  \EQu_k(r)=(S,(t_1,n_1)\cdots(t_n,n_n),t) \)
where \(t_j\) is the timestamp of the send event that produced the queued
message \(n_j\), and
\[
  t =
  \max\bigl(\{0\}\cup
  \{\,\exttime(e)\mid e \text{ is a send event to } r
       \text{ before position } k\,\}\bigr).
\]
Now we define \( \ProphecyMap_k \) for each sender \(s\),
\[
  \ProphecyMap_k(s) \defeq [(\exttime(e_1),n_1); \ldots ;(\exttime(e_\ell),n_\ell)],
\]
where \(e_1,\ldots,e_\ell\) are exactly the send events through \(s\) occurring at or after position \(k\), listed in trace order, and \(n_j\) is the value sent by \(e_j\).
Note that we have \( \SORTED(  \ProphecyMap_k(s)) \).
For each receiver \( r \), if \( \EQu_k(r)=(S, \Eqc ,t) \), define \( \ProphecyMap_k(r) \defeq \mathit{Merge}'(\Eqc,\{\ProphecyMap_k(s)\mid s\in S\}) \).
Finally, we define \( \FailureMap_k \).
For each thread identifier \(  i \), \(  \FailureMap_k(i) \defeq \TRUE \) iff the final failure in the suffix
\( C_k \stepC{D} \cdots \stepC{D} C_m \) is caused by thread \( i \) or by a descendant of \( i \) spawned after position
\( k \).
Otherwise, \( \FailureMap_k(i)= \FALSE \).

It is easy to check that \( \EConf_k^\pi  \stepC{D} \EConf_{k + 1}^\pi \) using the rule that corresponds to the rule used to derive \( C_k \stepC{D} C_{k + 1}\).
\end{proof}

\begin{theorem}[Contraposition of Soundness]\label{lem:contraposition-of-soundness}
  Let $(D, f(\seq{a}))$ be a program such that $\transmap{(D, f(\seq{a}))} = \CHCSystem$. If $(\SET{(f(\seq{a}))}, \emptyset) \mstepC{D} \FAIL$, then $\CHCSystem$ is unsatisfiable.
\end{theorem}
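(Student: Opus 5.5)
The plan is to chain the three lemmas just established. Given a failing run $(\SET{f(\seq{a})}, \emptyset) \mstepC{D} \FAIL$, \autoref{lem:efail-construct} yields an extended run
\[
\EConf_0 = (\SET{(i, f(\seq{a}))}, \emptyset, \SET{i \mapsto 0}, \emptyset, \SET{i \mapsto \TRUE}) \stepC{D} \EConf_1 \stepC{D} \cdots \stepC{D} \EConf_m = \FAIL .
\]
From this run we build an SLD refutation of the goal $\Goal{F(\TRUE, 0, \seq{a})}$ in $\transmap{D}$. Prepending one resolution step with the goal clause $\bot \Leftarrow F(\TRUE, t, \seq a)$ (instantiating $t := 0$) then gives a refutation of $\CHCSystem$. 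Refutations in constrained SLD resolution are sound: every step corresponds to an instance of a clause of $\CHCSystem$, so any interpretation satisfying $\CHCSystem$ would make every atom in every goal true, and hence make $\bot$ true. Therefore $\CHCSystem$ is unsatisfiable. Note that $\EConf_0$ is translated exactly to $F(\TRUE, 0, \seq a)$, since $\subst_\emptyset$ is the identity and $\seq a$ contains no channels.

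The refutation itself is obtained by induction along the run. First, observe that $\EConf_0$ is well-formed: it has no channels, and its single thread id lies in both $\DOM{\TSMap}$ and $\DOM{\FailureMap}$. \autoref{lem:well-formedness-invariant} then makes every $\EConf_k$ with $k<m$ well-formed, so each $\transmap{\EConf_k}$ is a ground goal. For $k < m-1$ we have $\EConf_{k+1} \neq \FAIL$, and \autoref{lem:simulation-without-estep-fail} gives $\Goal{\transmap{\EConf_k}} \SLDstep{\CHCSystem} \Goal{\transmap{\EConf_{k+1}}}$. For the last step $\EConf_{m-1} \stepC{D} \FAIL$, \autoref{lem:simulation-estep-fail} gives $\Goal{\transmap{\EConf_{m-1}}} \mSLDstep{\CHCSystem} \Goal{}$. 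Concatenating these steps yields $\Goal{F(\TRUE,0,\seq a)} \mSLDstep{\CHCSystem} \Goal{}$.

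The main obstacle is checking that the extended run produced by \autoref{lem:efail-construct} really starts from the stated initial configuration and satisfies the side conditions of every extended rule. This is only sketched in that lemma, so I would take it as given. One subtlety concerns \rn{EStep-Spawn} in \autoref{lem:simulation-without-estep-fail}, which is treated only as ``similar to \rn{EStep-Conc}''. There I would verify the flag bookkeeping $\FailureMap(i) = \FailureMap(i)\lor\FailureMap(j)$ against the spawn clause with $b = b_1 \lor b_2$. I would also check that the time maps agree: $\TSMap(j) = \TSMap(i)$ matches the shared $t$ in that clause.

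A further point needs checking. \autoref{lem:simulation-without-estep-fail} includes the ``$+$'' in $\mstepC{D}$ only implicitly, and the resolution steps act on multisets of atoms. I would argue that goal sequences may be treated up to permutation, since SLD selection is order-independent. With that, the induction composes cleanly.
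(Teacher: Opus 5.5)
Your proposal is correct and follows the same route as the paper. You lift the failing run to the extended semantics via \autoref{lem:efail-construct}, combine \autoref{lem:well-formedness-invariant}, \autoref{lem:simulation-without-estep-fail} and \autoref{lem:simulation-estep-fail} to get $\Goal{F(\TRUE, 0, \seq{a})} \mSLDstep{\CHCSystem} \Goal{}$, and close with the goal clause $\bot \Leftarrow F(\TRUE, t, \seq{a})$. The extra checks you add are details the paper leaves implicit: well-formedness of the initial configuration, groundness of the translated goals, order-independence of atom selection, and soundness of SLD refutation.
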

\begin{proof}
  Assume that $(\SET{(f(\seq{a}))}, \emptyset) \mstepC{D} \FAIL$. Then we obtain a corresponding execution trace in the extended semantics: $(\SET{(i, f(\seq{a}))}, \emptyset, \SET{i \mapsto 0}, \emptyset, \SET{i \mapsto \TRUE}) \mstepC{D} \FAIL$ by Lemma~\ref{lem:efail-construct}.
  The configuration $(\SET{(i, f(\seq{a}))}, \emptyset, \SET{i \mapsto 0}, \emptyset, \SET{i \mapsto \TRUE})$ is translated into $F(\TRUE, 0, \seq{a})$.
  By Lemmas \ref{lem:well-formedness-invariant}, \ref{lem:simulation-without-estep-fail} and \ref{lem:simulation-estep-fail}, $\Goal{F(\TRUE, 0, \seq{a})} \mSLDstep{\CHCSystem} \Goal{}$.
  Moreover, $\CHCSystem$ contains the CHC $\bot \Leftarrow F(\TRUE, t, \seq{a})$. Hence $\bot$ is derivable, and therefore $\CHCSystem$ is unsatisfiable.
\end{proof}

\subsection{Proof of Completeness}
The proof of the completeness is done by constructing a reduction sequence (for the extended configuration) that leads to an assertion failure from a proof of unsatisfiability.
Another invariant we shall use, apart from well-formedness, is that the configuration always includes a thread that is guessed to fail eventually.
\begin{definition}
  We say that an extended configuration $\EConf = ( \Threads, \EQu, \TSMap, \ProphecyMap, \FailureMap )$ \emph{may fail} if there exists a thread identifier $i$ that occurs in $\Threads$ such that $\FailureMap(i) = \TRUE$.
  \qed
\end{definition}

We first show that the proof of unsatisfiability can be ``scheduled'' according to the timestamps.
\begin{lemma}\label{lem:linearization-lemma}
  Let $D$ be a set of function definitions such that $\transmap {D} = \CHCSystem$.
  Suppose that \( \Goal{F(\TRUE, t_0, \seq{a})} \SLDstep{\CHCSystem}^* \Goal{} \), where $t_0$ is a constant.
  Then there exists goals $G_0, \ldots, G_m$ and an index $k$ such that
  \[
    G_0 = \Goal{F(\TRUE, t_0, \seq{a})}, \quad G_0 \SLDstep{\CHCSystem} G_1 \SLDstep{\CHCSystem} \cdots \SLDstep{\CHCSystem} G_m = \Goal{}
  \]
  and:
  \begin{enumerate}
    \item timestamps introduced in later steps are never smaller than those introduced in earlier steps. %
    \item $|G_j| \le |G_{j+1}|$ for each $j < k$, while $|G_j| > |G_{j+1}|$ for each $k \le j < m$.
  \end{enumerate}
\end{lemma}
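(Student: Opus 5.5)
The plan is to view the given refutation as a derivation tree and then re-linearize it. Start from the SLD refutation of $\Goal{F(\TRUE, t_0, \seq{a})}$ and erase the order in which atoms were selected. This leaves a finite derivation tree $\mathcal{T}$. Its nodes are ground atoms, each labelled by the clause instance (with its ground substitution) used to resolve it. The root is $F(\TRUE,t_0,\seq a)$, and every leaf is resolved by a body-free clause. Body-free clauses are the $\FAIL$ and $()$ clauses, the discarding clauses $F_i(\FALSE,t,\seq x_i)\Leftarrow \seq s_i=\seq\NIL$, and base cases of $\SORTED$/$\MERGE$. Because SLD resolution is independent of the selection rule, any linear extension of the tree order (parents before children) gives a valid refutation $G_0\SLDstep{\CHCSystem}\cdots\SLDstep{\CHCSystem}G_m=\Goal{}$ with the same clause instances. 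The goal is therefore to choose a good linear extension.

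First I would handle the auxiliary predicates. Every $\SORTED$ and $\MERGE$ subtree is resolved eagerly, immediately after the step that creates it. These steps introduce no timestamps, since all list arguments are already ground, so condition~1 is unaffected. Next I assign to each program-predicate node $F(b,t,\ldots)$ the key $t$, its timestamp argument. The observation is that the key is monotone along tree edges. For send and receive clauses, the child's timestamp $t_s$ or $t_r$ satisfies $t\le t_s$ or $t\le t_r$. For if, func, new, clone and spawn, the timestamp is copied unchanged. The ordering is a timestamp-driven scheduler. At every step, among the pending program atoms, resolve one of minimal key, and among those prefer one whose clause does not shrink the goal. Monotonicity makes this a linear extension of $\mathcal{T}$, and the sequence of "introduced" timestamps (the keys of newly created atoms) is nondecreasing. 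This gives condition~1. I would state and verify it as an invariant: every pending atom has key at least the key of the last resolved atom.

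Condition~2 needs a separate phase. The only clauses that shrink the goal are the body-free ones: the $\FAIL$ clause, the $()$ clause, and the discarding clause. Every other clause replaces one atom by at least one atom (spawn replaces it by two). The idea is that a leaf atom never needs to be resolved early. Resolving it introduces no new timestamps and nothing depends on it. So I postpone all leaf resolutions to the end, after which the goal sizes are nondecreasing up to an index $k$ and then strictly decrease by one per step. One check is needed first: a pending atom that will eventually be resolved by a leaf clause stays available. Resolution is by ground unification, so postponing it changes nothing.

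The main obstacle is the interaction between the two conditions. Postponing leaves is harmless to monotonicity, since leaves introduce no timestamps. The subtle part is that condition~1 must be read correctly for receive steps. A receive step uses a timestamp $t_s$ from the receiver prophecy and introduces a child key $t_r>t_s$. The constraint relating $t_s$ to the matching send lives in another subtree, and that send subtree may be scheduled later in my order. I would make sure that "timestamps introduced" means the keys of newly created program atoms, not the prophecy entries, so that only the local inequalities $t\le t'$ are needed. The genuine causal requirement, that a send precedes its receive, is then recovered from $t_s<t_r$ together with the min-key scheduling, presumably in the later lemma that maps this linearized refutation back to an extended reduction. A further point to check is ties among equal keys. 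Several sends in different threads may share key values only subject to $\SORTED$ strictness, so I would break ties arbitrarily and verify that no clause forces a strict decrease.
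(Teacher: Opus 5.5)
\paragraph{Gap: the scheduling key.} Your overall structure matches the paper's: you forget the selection order to get a derivation tree, take a suitable linear extension, and postpone body-free clauses. But the scheduling key is wrong, and with it condition~1 fails.

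You key each program atom by \emph{its own} timestamp argument and always resolve a pending atom of minimal key. Your invariant (every pending atom has key at least that of the last resolved atom) only shows that the timestamps of the \emph{resolved} atoms are nondecreasing. It does not show that the keys of newly created atoms are nondecreasing, and that claim is false.

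Take two pending atoms $A$ and $B$, both of key $0$. $B$'s thread first sends on some channel at time $5$ and then sends $v$ on $s$ at time $6$. $A$ receives $(6,v)$ on the receiver of $s$ with $t_r=7$. Whatever the tie-break, your scheduler resolves $A$ before $B$'s key-$5$ atom. The introduced timestamps therefore come out as $7,5,6$ or $5,7,6$, and the receive is scheduled before its matching send. Similarly, an atom of key $0$ sending at time $10$ is resolved before an atom of key $1$ sending at time $5$ to the same receiver.

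The receive and send cases of Lemma~\ref{lem:simulation-of-sld-step-without-fail} use condition~1 precisely to exclude these situations. The receive case argues that the message is already queued, and the send case needs $t'$ to lie below the other senders' prophecies. So your expectation that send-before-receive is ``recovered from $t_s<t_r$ together with the min-key scheduling'' fails for your scheduler.

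\paragraph{Repair.} Use the paper's ordering by \emph{introduced} timestamps. Give each non-leaf program node $v$ the key $\iota(v)$, the timestamp of the program atoms it creates: $t_s$ for a send, $t_r$ for a receive, and the unchanged $t$ otherwise. A program child $w$ of $v$ has timestamp $\iota(v)\le\iota(w)$, so $\iota$ is monotone along edges. Greedily resolving a pending atom of minimal $\iota$ then gives a linear extension in which introduced timestamps are nondecreasing; $\SORTED$/$\MERGE$ steps can go anywhere after their creation. Now a send introducing $t_s$ precedes every receive introducing $t_r>t_s$. Sends to one receiver are also ordered by their timestamps, which are distinct by strict sortedness.

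\paragraph{Smaller point: condition~2.} Resolving $\SORTED$/$\MERGE$ subtrees eagerly includes their body-free base cases, such as $\SORTED(\NIL)$ and $\MERGE(\NIL,\NIL,\NIL)$. These shrink the goal before index $k$ and so violate condition~2; your list of goal-shrinking clauses omits them. Postpone these base cases to the final phase as well, as the paper does with all body-free clauses. The recursive $\SORTED$/$\MERGE$ clauses are size-preserving and can stay in the first phase.
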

\begin{proof}
  The sequence of resolution steps \( \Goal{F(\TRUE, t_0, \seq{a})} \SLDstep{\CHCSystem}^* \Goal{} \) can be reconstructed as a derivation tree where each resolution step can be written as an application of the following inference rule  %
\infrule[]
  {A_1 \subst \quad \ldots \quad A_n \subst}
  {\Head \subst}
  whose side condition is $\Head \Leftarrow A_1  \land \ldots \land A_n  \land \Constraint \in \CHCSystem$ and $\models \Constraint \subst$.
  Written this way, it is easy to see  that the timestamps introduced along each branch of the derivation tree are nondecreasing because, for each clause
  \[
    F_i(b, t, \seq x) \Leftarrow F_{i1} (b_{i1},t_{i1}, \seq{x}_{i1}) \land \cdots \land F_{i n_i} (b_{in_i},t_{in_i}, \seq{x}_{in_i})\land \Constraint
  \]
  in \( \transmap{D} \) the timestamps \(t_{ij} \) are never smaller than the timestamps \( t \).
  We first observe that any resolution step using a clause whose body contains no predicate atoms can be postponed until all steps using clauses with predicate
  atoms have been performed.
  While we apply clauses with a predicate in the body, we may choose the nodes in order of increasing introduced timestamps, with ties broken arbitrarily.
  This yields the desired sequence of resolution steps.
\end{proof}

\begin{lemma}[Simulation of SLD resolution]
  \label{lem:simulation-of-sld-step-without-fail}
  Let $D$ be a set of function definitions such that $\transmap {D} = \CHCSystem$. Let $G$ and $G'$ be goals such that $G \SLDstep{\CHCSystem} G'$.
  Suppose that $G$ and $G'$ occur consecutively in the sequence of resolution steps $G_0 \mSLDstep{\CHCSystem} G_k$ considered in \autoref{lem:linearization-lemma}. Let $\EConf_1$ be a well-formed extended configuration such that $\transmap {\EConf_1} = G$ and $\EConf_1$ may fail.

  Then there exists an extended configuration $\EConf_2$ such that $\transmap {\EConf_2} =  G'$, $\EConf_2$ may fail, and $\EConf_1 \stepC{D} \EConf_2$.
\end{lemma}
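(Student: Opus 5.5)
We argue by case analysis on the clause used in the resolution step $G \SLDstep{\CHCSystem} G'$. Since $G$ and $G'$ lie in the prefix $G_0 \mSLDstep{\CHCSystem} G_k$ of the scheduled sequence from \autoref{lem:linearization-lemma}, the step satisfies $|G| \le |G'|$. Hence the selected clause is not one of the bodiless clauses $F_i(b,t,\seq x_i) \Leftarrow b=\TRUE \land \seq s_i = \seq\NIL$, $F_i(b,t,\seq x_i)\Leftarrow b=\FALSE\land \seq s_i=\seq\NIL$, or $F_i(\FALSE,t,\seq x_i)\Leftarrow \seq s_i=\seq\NIL$. It is therefore of the form $F(b,t,\seq x)\Leftarrow \transmap{M}^b_t$ for a definition $f(\seq x)=M\in D$ with $M\ne\FAIL,()$. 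The atom resolved is $\transmap{((i,f(\seq a)),\EQu,\TSMap,\ProphecyMap,\FailureMap)}=F(\FailureMap(i),\TSMap(i),\subst_\ProphecyMap\seq a)$ for some thread $i$ of $\EConf_1$. In each case we fire the extended rule matching the shape of $M$, wrapped in \rn{EStep-Conc}. The new prophecies, timestamps and flags are read off from the substitution $\subst$ and the witnesses for the existentially quantified variables of the clause.

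The individual cases go as follows.
\begin{itemize}
\item \textbf{If, Func.} These are immediate: $\models\Constraint\subst$ decides the branch, and $\transmap{\EConf_2}=G'$ holds by construction.
\item \textbf{New.} We extend $\ProphecyMap$ with fresh $s',r'$ mapped to the common list witnessing $x=y$. The constraint $\SORTED(x)$ in the body is itself an atom to be resolved. We treat the $\SORTED$ and $\MERGE$ atoms as interpreted, as the paper's remark permits: the linearization in \autoref{lem:linearization-lemma} only postpones bodiless clauses, so these auxiliary atoms can be discharged eagerly and regarded as ground facts that hold.
\item \textbf{Clone.} The $\MERGE$ witness gives $\ProphecyMap(s')$ as an interleaving of the two new lists. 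Since $\ProphecyMap(s')$ is sorted by well-formedness, every interleaving is exactly $\MergeT$ of its parts. This discharges the premise of \rn{EStep-Dup}.
\item \textbf{Spawn.} We split $\FailureMap(i)=b_1\lor b_2$ using the witnesses $b_1,b_2$ and give the fresh thread $j$ the timestamp $\TSMap(i)$, as in \rn{EStep-Spawn}.
\item \textbf{Send, Recv.} These carry the real content and are treated in the next paragraph.
\end{itemize}
In every case, \emph{may fail} is preserved because the $\TRUE$ flag of the resolved thread is either passed unchanged to its continuation, or, at a spawn, $b_1\lor b_2=\TRUE$ forces one child to carry $\TRUE$. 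Flags of the other threads are unchanged.

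The main obstacle is the side conditions of \rn{EStep-Send} and \rn{EStep-Recv} that are not visible in the local CHC constraint.

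\emph{Send.} The clause gives $\ProphecyMap(s')=(t',n)::\ell$ and $t'\ge\TSMap(i)$. The rule additionally needs $t'>t_S$ and $t'<\MinTimestamp{\ProphecyMap(s'')}$ for the other senders $s''\in S$. The second condition follows from well-formedness: $\ProphecyMap(r)=\MergeT(\Eqc,\ProphecyMap(s_1),\ldots)$ is sorted, and $(t',n)$ is the head of one merged component. The first also follows from well-formedness, since $t_S<\MinTimestamp{\ProphecyMap(s')}=t'$.

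\emph{Recv.} The clause says $\ProphecyMap(r')=(t_s,n)::\ell$ with $t_s<t_r$ and $t_r\ge\TSMap(i)$. The rule, however, requires the message $(t_s,n)$ to already be at the head of the \emph{actual} queue $\Eqc$, not merely in some sender's future prophecy. This is exactly where the scheduling property (1) of \autoref{lem:linearization-lemma} is used. Suppose $(t_s,n)$ were not yet in $\Eqc$. Then, by the well-formedness equation $\ProphecyMap(r')=\MergeT(\Eqc,\ldots)$ and the bound $t_S<\MinTimestamp{\ProphecyMap(s_i)}$, the queue $\Eqc$ must be empty, and $(t_s,n)$ is the head of some $\ProphecyMap(s_i)$ still owned by a pending thread. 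That thread's current timestamp is at most $t_s<t_r$. Its future send at $t_s$ would be introduced by a resolution step after the current one, which introduces $t_r$, contradicting the nondecreasing order of introduced timestamps. To make this precise, I would strengthen the invariant carried along the sequence: every timestamp occurring at the head of a pending sender prophecy is at least the maximum timestamp introduced so far. Ties need care, using the strict inequality $t_s<t_r$. With that invariant, the head of $\ProphecyMap(r')$ lies in $\Eqc$, \rn{EStep-Recv} applies, and $\transmap{\EConf_2}=G'$ follows as in the proof of \autoref{lem:simulation-without-estep-fail}. Well-formedness of $\EConf_2$ then follows from \autoref{lem:well-formedness-invariant}.
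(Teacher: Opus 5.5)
\textbf{Gap in the Send case.} Your case analysis and your construction of $\EConf_2$ from the substitution follow the paper. So does your Recv argument: if the head of $\ProphecyMap(r')$ is not in $\Eqc$, then $\Eqc=\NIL$, and the matching send would be introduced after $t_r$, contradicting property (1) of \autoref{lem:linearization-lemma}. The problem is the Send case. There you claim that the premise $t'<\MinTimestamp{\ProphecyMap(s'')}$ of \rn{EStep-Send}, for the \emph{other} senders $s''\in S$, follows from well-formedness because $\MergeT(\Eqc,\ProphecyMap(s_1),\ldots,\ProphecyMap(s_n))$ is sorted. It does not. $\MergeT$ already arranges its output by timestamp, so this condition only says that all timestamps on the channel are distinct. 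It says nothing about which sender's head is smallest.

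\textbf{Counterexample.} Take two threads holding clones $s_1,s_2$ with
\begin{itemize}
\item[] $\ProphecyMap(s_1)=[(5,a)]$ and $\ProphecyMap(s_2)=[(3,b)]$,
\item[] $\EQu(r)=(\{s_1,s_2\},\NIL,-\infty)$ and $\ProphecyMap(r)=[(3,b);(5,a)]$.
\end{itemize}
This configuration is well-formed. Yet a resolution step firing the send on $s_1$ cannot be matched by \rn{EStep-Send}, since $5<3$ fails. So the Send case is not local: it is precisely where the FIFO order among cloned senders is enforced. Recv is therefore not the only place the scheduling property is needed, contrary to your remark.

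\textbf{The missing step.} It is the one the paper uses, in four parts.
\begin{itemize}
\item[] The head $(t'',\cdot)$ of $\ProphecyMap(s'')$ must eventually be consumed by a send step. This is because the sequence ends in the empty goal, and the only clauses that remove atoms require every sender prophecy to be $\NIL$.
\item[] That later step introduces $t''$ after the current step introduces $t'$, so property (1) gives $t'\le t''$.
\item[] Distinctness of timestamps on the channel (the $\SORTED$ condition of well-formedness) upgrades this to $t'<t''$.
\item[] For the tail of $\ProphecyMap(s')$ itself, sortedness of $\ProphecyMap(s')$ suffices.
\end{itemize}
Your strengthened invariant from the Recv case already delivers this if you apply it to Send as well. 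That invariant says every head of a pending sender prophecy is at least the largest timestamp introduced so far. So the repair is short, but as written the Send case rests on an invalid inference.
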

\begin{proof}
  Let $\EConf_1 = ( \Threads, \EQu, \TSMap, \ProphecyMap, \FailureMap  )$.

  We perform a case analysis according to from which statement the CHC used in the resolution step was transformed.
\begin{itemize}
  \item Case \texttt{if then else}.

    In this case, the resolution step may use one of the following two CHCs:
    \[
      F(b, t, \seq{x}) \Leftarrow a_0 \ne 0 \land H_1(b, t, \seq{a}_1)
    \] and
    \[
      F(b, t, \seq{x}) \Leftarrow a_0 = 0 \land H_2(b, t, \seq{a}_2)
    \] 
    where $f(\seq{x}:\seq{\bt}) = \ifstmt{a_0}{h_1(\seq{a}_1)}{h_2( \seq{a}_2 )}$.

    Since $G \SLDstep{\CHCSystem} G'$, there exists a thread $(i, f(\seq{a})) \in \Threads$ such that $G$ contains the atom translated from $( i, f(\seq{a}) )$, and the constraints of one of the above CHCs are satisfied. We distinguish the two cases according to which CHC is used.
    
    In the former case, we have $[\seq{a}/\seq{x}]a_0 \ne 0$. By the semantics of the target language, we obtain $[\seq{a}/\seq{x}]a_0 \eval n$ and $n \ne 0$. Hence, all the premises of \rn{EStep-If-True} are satisfied. Therefore, $\EConf_1 \stepC{D} \EConf_2$ by \rn{EStep-If-True} where $\EConf_2$ is obtained from $\EConf_1$ by replacing the thread $(i, f(\seq{a}))$ with $(i, [\seq{a}/\seq{x}]h_1(\seq{a}_1))$ in $\Threads$. Since $[\seq{a}/\seq{x}]h_1(\seq{a}_1)$ is translated into $H_1(b, t, [\seq{a}/\seq{x}]\seq{a}_1)$, $\EConf_2$ is translated into $G'$.
    This transition preserves the may-fail property because it does not modify the \( (\EQu,  \TSMap ,  \ProphecyMap, \FailureMap) \) part of the configuration.

    In the latter case, we have $[\seq{a}/\seq{x}]a_0 = 0$. This case is proved in the same way, using \rn{EStep-If-False} instead of \rn{EStep-If-True}.

  \item Case \texttt{spawn}.

    In this case, the resolution step may use the CHC
    \[
    F(b, t, \seq{x}) \Leftarrow H_1(b_1, t, \seq{a}_1) \land H_2(b_2, t, \seq{a}_2) \land b = b_1 \lor b_2
    \]
    where $f(\seq{x}:\seq{\bt}) = \spawnstmt{ h_1( \seq{a}_1 ) }{ h_2( \seq{a}_2 ) }$.

    Since $G \SLDstep{\CHCSystem} G'$, there exists a thread $(i, f( \seq{a} )) \in \Threads$ such that $G$ contains the atom translated from $( i, f( \seq{a} ) )$, and a substitution $\subst$ that grounds the above CHC and satisfies its constraints. Therefore, we obtain $\FailureMap(i) = \subst(b_1) \lor \subst(b_2)$.

    Let 
    \[\EConf_2 = ( (\Threads' \cup \SET{ (i, [\seq{a}/\seq{x}]h_1( \seq{a}_1 )), (j, [\seq{a}/\seq{x}]h_2( \seq{a}_2 ) }, \EQu, \TSMap [ j \mapsto \TSMap(i) ], \ProphecyMap, \FailureMap[i \mapsto \subst(b_1), j \mapsto \subst(b_2)])\]
    where $j$ is a fresh thread identifier and $\Threads' = \Threads \setminus \SET{ (i, f(\seq{a})) }$.
    Then $\EConf_1 \stepC{D} \EConf_2$. Moreover, $\EConf_2$ is translated into $G'$ by \rn{EStep-Spawn}.
    The transition $\EConf_1 \stepC{D} \EConf_2$ preserves the may-fail property because if $\FailureMap(i) = \TRUE$, then either $\FailureMap(i) = \subst(b_1) = \TRUE$ or $\FailureMap(j) = \subst(b_2) = \TRUE$.

  \item Case \texttt{new}.

    In this case, the resolution step may use the CHC
    \[
    F(b, t, \seq{x}) \Leftarrow H(b, t, \seq{a}') \land y =  x \land \SORTED(x)
    \]
    where $f( \seq{x}: \seq{\bt} ) = \newstmt{ s }{ r }{ h(b, t, \seq{a}') }$.

    Since $G \SLDstep{\CHCSystem} G'$, there exists a thread $(i, f( \seq{a} )) \in \Threads$ such that $G$ contains the atom translated from $( i, f( \seq{a} ) )$ and a substitution $\subst$ that grounds the above CHC and satisfies its constraints. Therefore, we obtain
    \begin{equation}
      \subst(r) =  \subst(s) \text{ and } \SORTED( \subst(s) ). \label{eq:lem:simulation-of-sld-step-without-fail:new}
    \end{equation}
    Let
    \[
      \EConf_2 = ( \Threads' \cup \SET{ (i, [s'/s, r'/r][\seq{a}/\seq{x}]h(\seq{a}')) }, \EQu[r' \mapsto (\SET{ s' }, \NIL, -\infty)], \TSMap, \ProphecyMap[s' \mapsto \subst(s), r' \mapsto \subst(r)], \FailureMap )
    \]
    where $\Threads' = \Threads \setminus \SET{ (i, f(\seq{a}')) }$. Then $\EConf_1 \stepC{D} \EConf_2$ by \rn{EStep-New} together with \eqref{eq:lem:simulation-of-sld-step-without-fail:new}, and $\EConf_2$ is translated into $G'$.
    The may-fail property is also preserved as \( \ProphecyMap \) is not modified in this step.

  \item Case \texttt{send}.

    In this case, the resolution step may use the CHC
    \[
    F(b, t, \seq{x}) \Leftarrow H(b, t', [s''/s]\seq{a}') \land s = (t', a_0) :: s'' \land t \le t'
    \]
    where $f(\seq{x}: \seq{\bt}) = \sendstmt{ a_0 }{ s }{ h(\seq{a}') }$.
    Since $G \SLDstep{\CHCSystem} G'$, there exists a thread $(i, f(\seq{a})) \in \Threads$ such that $G$ contains the atom translated from $( i, f( \seq{a} ) )$ and a substitution $\subst$ that grounds the above CHC and satisfies its constraints.
    Therefore, we obtain
    \begin{align}
      &\TSMap(i) \le \subst(t') \label{eq:lem:simulation-of-sld-step-without-fail:send:time} \\
      &\ProphecyMap(s') = (\subst(t'), [\seq{a}/\seq{x}]a_0) :: \subst(s'') \label{eq:lem:simulation-of-sld-step-without-fail:send:prophecy}
    \end{align}
    where $s' = [\seq{a}/\seq{x}]s$.
    
    Since $\EConf_1$ is well-formed, there exists a unique receiver $r$ such that $\EQu(r) = (S, \Eqc, t_S)$ and $s' \in S$. Write $S = \SET{s', s_1, \ldots, s_n}$.

    Let 
    \[
      \EConf_2 = ( \Threads' \cup \SET{ (i, [\seq{a}/\seq{x}]h(\seq{a}_1)) }, \EQu[ r \mapsto (S, \Eqc \cdot [(\subst(t'), n)], \subst(t')) ], \TSMap[i \mapsto \subst(t')], \ProphecyMap[s' \mapsto \subst(s'')], \FailureMap )
    \]
    where $\Threads' = \Threads \setminus \SET{ (i, f(\seq{a})) }$ and $n = [\seq{a}/\seq{x}]a_0$.
    By the third condition of the well-formedness,
    \begin{align}
      &t_S < \MinTimestamp{\ProphecyMap(s')}  \label{eq:lem:simulation-of-sld-step-without-fail:send:time:iii} \\
      &\SORTED(\ProphecyMap(s'))
    \end{align}
    Using \eqref{eq:lem:simulation-of-sld-step-without-fail:send:prophecy} and \eqref{eq:lem:simulation-of-sld-step-without-fail:send:time:iii}, we obtain
    \begin{align}
      t_S < \subst(t') \label{eq:lem:simulation-of-sld-step-without-fail:send:time:iv}
    \end{align}
    By the first condition of \autoref{lem:linearization-lemma}, together with the uniqueness of timestamps within each channel guaranteed by the Sorted condition, we obtain
    \begin{align*}
      &t' < \MinTimestamp{\sigma(s'')} \text{ and} \\
      &t' < \MinTimestamp{M(s_i)}  \text{ for each $i = 1, \ldots, n$}
    \end{align*}
    These together with \eqref{eq:lem:simulation-of-sld-step-without-fail:send:time} and \eqref{eq:lem:simulation-of-sld-step-without-fail:send:time:iv} imply $\EConf_1 \stepC{D} \EConf_2$ by \rn{EStep-Send} and $\EConf_2$ is translated into $G'$.
    The may-fail property is trivially preserved.
  \item Case \texttt{recv}.

    In this case, the resolution step may use the CHC
    \[
    F(b, t, \seq{x}) \Leftarrow H(b, t'', [r''/r]\seq{a}') \land r = (t', v)::r'' \land t' < t'' \land t \le t''
    \]
    where $f(\seq{x}:\seq{\bt}) = \recvstmt{ v }{ r  }{ h(\seq{a}') }$. Since $G \SLDstep{\CHCSystem} G'$, there exists a thread $(i, f(\seq{a})) \in \Threads$ such that $G$ contains the atom translated from $( i, f(\seq{a}) )$ and a substitution $\subst$ that grounds the above CHC and satisfies its constraints.
    Therefore, we obtain
    \begin{align}
      &\ProphecyMap(r') = (\subst(t'), \subst(v)) :: \subst(r'') \nonumber \\
      &\subst(t') < \subst(t'') \label{eq:lem:simulation-of-sld-step-without-fail:recv:time} \\
      &\TSMap(i) \le \subst(t'') \label{eq:lem:simulation-of-sld-step-without-fail:recv:time:ii}
    \end{align}
      where $r' = [\seq{a}/\seq{x}]a_0$.

    Since $\EConf_1$ is well-formed, $r' \in \DOM{\EQu}$. Write $\EQu(r') = (S, \Eqc, t_S)$ and $S = \SET{s_1, \ldots, s_n}$. 
    By the third condition of well-formedness, we have the following:
    \[
    (\subst(t'), \subst(v)) :: \subst(r'') =  \MergeT( \Eqc, \ProphecyMap(s_1), \ldots, \ProphecyMap(s_n) )
    \]
    Assume, for contradiction, that $\Eqc = \NIL$. Then $t'$ would be introduced later than $t''$. However, since $\subst(t') < \subst(t'')$, this contradicts the first condition of \autoref{lem:linearization-lemma}. Therefore, we have $\Eqc = (\subst(t'), \subst(v)) :: \Eqc'$.

    Let
    \[
      \EConf_2 = ( \Threads' \cup \SET{ (i, [n/v][\seq{a}/\seq{x}]h(\seq{a}')) }, \EQu[r' \mapsto (S, \Eqc', t_S)], \TSMap[i \mapsto \subst(t'')], \ProphecyMap[r' \mapsto \subst(r'')], \FailureMap )
    \]
    where $\Threads' = \Threads \setminus \SET{ (i, f(\seq{a})) }$ and $n = \subst(v)$.

    Then $ \EConf_1 \stepC{D} \EConf_2 $ by \rn{EStep-Recv} because the premises on timing are satisfied by \eqref{eq:lem:simulation-of-sld-step-without-fail:recv:time} and \eqref{eq:lem:simulation-of-sld-step-without-fail:recv:time:ii}.
    Moreover, $\EConf_2$ is translated into $G'$, and the may-fail property is trivially preserved.

  \item Case \texttt{clone}.

    In this case, the resolution step may use the CHC
    \[
    F(b, t, \seq{x}) \Leftarrow H(b, t, [z/s]\seq{a}') \land \MERGE(z, y, s)
    \]
    where $f(\seq{x}:\seq{\bt}) = \dupstmt{y}{s}{h(\seq{a}')}$.
    Since $G \SLDstep{\CHCSystem} G'$, there exists a thread $(i, f(\seq{a})) \in \Threads$ such that $G$ contains the atom translated from $( i, f(\seq{a}) )$ and a substitution $\subst$ that grounds the above CHC and satisfies its constraints.
    Therefore, we obtain $\MERGE(\subst(z), \subst(y), \ProphecyMap(s'))$ where $s' = [\seq{a}/\seq{x}]s$.

    Since $\EConf_1$ is well-formed, there exists a unique receiver $r$ such that $\EQu(r) = (S, \Eqc, t_S)$ and $s' \in S$. Write $S = \SET{s', s_1, \ldots, s_n}$.

    Let
    \[
      \EConf_2 = ( \Threads' \cup \SET{ (i, [y'/y][\seq{a}/\seq{x}]h(\seq{a}')) }, \EQu[r \mapsto (S \cup \SET{ y' }, \Eqc, t_S)], \TSMap, \ProphecyMap[s' \mapsto \subst(z), y' \mapsto \subst(y)], \FailureMap )
    \]
    where $\Threads' = \Threads \setminus \SET{ (i, f(\seq{a})) }$ and $y'$ is a fresh sender variable.

    Then $\EConf_1 \stepC{D} \EConf_2$ by \rn{EStep-Dup}, $\EConf_2$ is translated into $G'$ and the may-fail property is preserved.
   \end{itemize}
\end{proof}

\begin{lemma}\label{lem:simulation-of-sld-step-with-fail}
  Let $D$ be a set of function definitions such that $\transmap{D} =  \CHCSystem$. Let $G_k$ be the goal such that $G_k \mSLDstep{\CHCSystem} \Goal{}$ in \autoref{lem:linearization-lemma}, i.e., the goal after which the resolution steps only decrease the size of the goal. Let $\EConf$ be a well-formed extended configuration such that \( \transmap \EConf = G_k \) and $\EConf$ may fail.
  Then $\EConf \stepC{D} \FAIL$ by \rn{EStep-Fail}.
\end{lemma}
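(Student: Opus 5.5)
From $G_k$ onward, every resolution step in the linearized derivation of \autoref{lem:linearization-lemma} strictly shrinks the goal. The plan is to read off, atom by atom, which clauses discharge $G_k$, and then check that these facts are exactly the premises of \rn{EStep-Fail}.

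\textbf{Step 1: identify the clauses used after $G_k$.} Each step after $G_k$ removes one atom and adds none, since the size drops by at least one and no step can remove two atoms at once. So each step uses a clause of $\transmap{D}$ with no predicate atoms in its body. Inspecting the translation, the only such clauses are:
\begin{itemize}
  \item $F_k(b,t,\seq x)\Leftarrow b=\TRUE\land \seq s=\seq\NIL$, generated from $f_k(\seq x)=\FAIL$;
  \item $F_k(b,t,\seq x)\Leftarrow b=\FALSE\land \seq s=\seq\NIL$, generated from $f_k(\seq x)=()$;
  \item the discarding clauses $F_k(\FALSE,t,\seq x)\Leftarrow \seq s=\seq\NIL$.
\end{itemize}
The base clauses for $\SORTED$ and $\MERGE$ cannot occur, because goals translated from configurations contain only function predicates. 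Clauses whose body mentions $\SORTED$ or $\MERGE$ also have a function atom in the body, so they cannot decrease the goal size. Note that a constraint such as $r=(t_s,x)::y'$ lives in the constraint part, not the atom part.

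\textbf{Step 2: read off the constraints on $\EConf=(\Threads,\EQu,\TSMap,\ProphecyMap,\FailureMap)$.} Since $\transmap{\EConf}=G_k$, each atom $F_k(\FailureMap(k),\TSMap(k),\subst_\ProphecyMap\seq a_k)$ is discharged by one of the clauses above. Hence every sender argument of every thread has prophecy $\NIL$. Every $x\in\OutChannels{\EQu}$ occurs in some thread by well-formedness (condition 1), so $\ProphecyMap(x)=\NIL$ for all such $x$. Receiver prophecies are unconstrained, which matches \rn{EStep-Fail}.

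Moreover, $\FailureMap(k)=\TRUE$ is possible only via the $\FAIL$ clause, so a thread whose flag is $\TRUE$ is at a function defined as $\FAIL$. Threads not at $\FAIL$ therefore have flag $\FALSE$, and threads at $\FAIL$ may have either flag, exactly as \rn{EStep-Fail} allows. By the may-fail assumption, some thread $i$ has $\FailureMap(i)=\TRUE$, hence $f_i(\seq x)=\FAIL\in D$. Choosing this $i$ as the failing thread, $\FailureMap$ decomposes as $\SET{i\mapsto\TRUE}\uplus\cdots$ in the required form, and \rn{EStep-Fail} applies.

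\textbf{Main obstacle.} The delicate point is the prophecy condition. \rn{EStep-Fail} literally requires $\ProphecyMap$ to equal $\SET{x\mapsto\NIL\mid x\in\OutChannels{\EQu}}$, whereas $\ProphecyMap$ also contains receiver entries. I would read the premise as constraining only the sender entries (as in \autoref{lem:simulation-estep-fail}) and state this explicitly.

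A second check is needed for senders recorded in $\EQu$ but held by no thread. Condition 1 of well-formedness rules these out, but one should confirm that senders dropped at $()$ or at an earlier discarding step are removed from $\ProphecyMap$ and $\EQu$ consistently. If this fails, I would strengthen the invariant so that any sender appearing only in $\EQu$ has prophecy $\NIL$. That invariant holds because every earlier step that drops a sender forces its prophecy to be empty.
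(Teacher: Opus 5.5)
Your proof is correct and follows the paper's argument. After $G_k$ only body-free clauses for function predicates can be used. This forces every sender prophecy held by a thread to be $\NIL$ and every $\TRUE$ flag to belong to a thread at $\FAIL$, and may-fail then supplies the failing thread for \rn{EStep-Fail}; the paper reads that premise, as you do, as constraining only sender entries.

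Your fallback invariant is not needed, because well-formedness alone settles your worry about dropped senders. Condition 3 puts every sender of $\EQu$ in $\DOM{\ProphecyMap}$, and condition 1 then places it in some thread. Moreover, the extended semantics has no rule for $()$, and no discarding step occurs before $G_k$.
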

\begin{proof}
  Write $( \Threads, \EQu, \TSMap, \ProphecyMap, \FailureMap ) = \EConf$.
  By \autoref{lem:linearization-lemma}, each resolution step in $G_k \mSLDstep{\CHCSystem} \Goal{}$ strictly decreases the number of atoms in the goal. By the definition of \( \transmap{D}\), the only CHCs that can be used in such steps are the following:
  \[
    F(\TRUE, t, \seq{x}) \Leftarrow \seq{s} = \seq{\NIL}
  \]
  where $\seq{s} = \OutChannels{\seq{x}:\seq{\bt}}$ and $f( \seq{x}:\seq{\bt} ) = \FAIL \in D$, and
  \[
    F(\FALSE, t, \seq{x}) \Leftarrow \seq{s} = \seq{\NIL}
  \]
  where $\seq{s} = \OutChannels{\seq{x}:\seq{\bt}}$ and $f( \seq{x}:\seq{\bt} ) = M \in D$.

  We perform a case analysis on these CHCs.
\begin{itemize}
  \item Case $F(\TRUE, t, \seq{x}) \Leftarrow \seq{s} = \seq{\NIL}$

    In this case, there exists a thread identifier \( i \) such that $( i, f( \seq{a} ) ) \in \Threads$.
    By the constraints of this CHC, we obtain $\FailureMap(i) = \TRUE$ and $\ProphecyMap(s) = \NIL$ for each $s \in \OutChannels{\seq{ a }}$.
  \item Case $F(\FALSE, t, \seq{x}) \Leftarrow \seq{s} = \seq{\NIL}$

    In this case, there exists a thread identifier \( i \) such that $( i, f( \seq{a} ) ) \in \Threads$.
    By constraints of the CHC used in this case, we obtain $\FailureMap(i) = \FALSE$ and $\ProphecyMap(s) = \NIL$ for each $s \in \OutChannels{\seq{a}}$.
\end{itemize}
  By the above case analysis and the well-formedness of $\EConf$, we have $\ProphecyMap(s) = \NIL$ for each sender $s$ that occurs in $\Threads$. Moreover, for each $( i, f(\seq{a}) ) \in \Threads$, if $f(\seq{x}:\seq{\bt}) = \FAIL \in D$, then $\FailureMap(i) = \TRUE$ or $\FailureMap(i) = \FALSE$; otherwise, $\FailureMap(i) = \FALSE$. Since $\EConf$ may fail, there exists a thread identifier $i$ with $\FailureMap(i) = \TRUE$. Therefore, all the premises of \rn{EStep-Fail} are satisfied, and hence $\EConf \stepC{D} \FAIL$.
\end{proof}

\begin{theorem}[Contraposition of Completeness]
  Let $(D, f(\seq{a}))$ be a program such that $\transmap{(D, f(\seq{a}))} = \CHCSystem$. If $\CHCSystem$ is unsatisfiable, then $(\SET{(i, f(\seq{a}))}, \emptyset) \mstepC{D} \FAIL$.
\end{theorem}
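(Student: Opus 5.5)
Since the system is unsatisfiable, there is a refutation of it. Its only goal clause is \(\bot \Leftarrow F(\TRUE, t, \seq a)\), and the clauses of \(\transmap{D}\) are definite, so the refutation must start from that goal clause. Refutational completeness of constrained SLD resolution then yields a constant \(t_0\) with \(\Goal{F(\TRUE, t_0, \seq a)} \mSLDstep{\CHCSystem} \Goal{}\); only the atom \(F(\TRUE, t_0, \seq a)\) ever needs resolving. The plan is to turn this refutation into an extended run and then erase the extra components. First, apply \autoref{lem:linearization-lemma} to obtain a scheduled refutation \(G_0 \SLDstep{\CHCSystem} \cdots \SLDstep{\CHCSystem} G_m = \Goal{}\) with a splitting index \(k\). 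Before \(k\), steps resolve atoms in nondecreasing timestamp order; after \(k\), only the body-free discharging clauses are used.

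Next, choose the initial extended configuration \(\EConf_0 = (\SET{(i, f(\seq a))}, \emptyset, \SET{i \mapsto t_0}, \emptyset, \SET{i \mapsto \TRUE})\). It is well-formed, because the main call has only integer arguments, so no channel occurs in it. It may fail, and \(\transmap{\EConf_0} = G_0\).

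Then proceed by induction on \(j < k\), using \autoref{lem:simulation-of-sld-step-without-fail}. This gives extended configurations \(\EConf_{j+1}\) with \(\EConf_j \stepC{D} \EConf_{j+1}\), \(\transmap{\EConf_{j+1}} = G_{j+1}\), and \(\EConf_{j+1}\) may fail. Well-formedness is preserved at each step by \autoref{lem:well-formedness-invariant}.

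One gap needs care. The steps before \(k\) might still include a discharge step, i.e.\ a clause without body atoms, which \autoref{lem:simulation-of-sld-step-without-fail} does not treat. The linearization lemma says such steps are postponed past \(k\), so I would use that reading of condition 2 to exclude them. If needed, I would adjust the ordering inside the proof of \autoref{lem:linearization-lemma} so that every step before \(k\) uses a clause with a nonempty body. For \texttt{send}, \texttt{clone}, \texttt{new} and \texttt{spawn} the body has one or two atoms. An \texttt{if} or function-call step has exactly one body atom and keeps the goal size unchanged, consistent with \(|G_j| \le |G_{j+1}|\).

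At \(G_k\), \autoref{lem:simulation-of-sld-step-with-fail} gives \(\EConf_k \stepC{D} \FAIL\). Hence the extended run \(\EConf_0 \mstepC{D} \FAIL\) exists.

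The final step is erasure. Define \(|\cdot|\) by dropping thread identifiers, the maps \(\TSMap, \ProphecyMap, \FailureMap\), the timestamps in queues, and the third component of \(\EQu\). Check rule by rule that each \rn{EStep-X} step projects to the corresponding \rn{R-X} step of \autoref{fig:target-operational-semantics}, and that \rn{EStep-Fail} projects to \rn{R-Fail}. The extended premises only add constraints, so this check is routine. In particular, \rn{EStep-Recv} requires a nonempty queue, exactly as \rn{R-Recv} does, and \rn{EStep-Conc} corresponds to choosing a thread from the multiset. Since \(|\EConf_0| = (\SET{f(\seq a)}, \emptyset)\), we obtain \((\SET{f(\seq a)}, \emptyset) \mstepC{D} \FAIL\).

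I expect the main obstacle to be ensuring that the hypotheses of \autoref{lem:simulation-of-sld-step-without-fail} genuinely hold at every step. Its \texttt{recv} and \texttt{send} cases need the order in which timestamps are introduced to reflect causal order: a message must already be queued when its receive step is resolved. This relies on condition 1 of \autoref{lem:linearization-lemma}, together with timestamp uniqueness coming from \(\SORTED\). I would double-check that ties between equal timestamps are harmless there, since \(t \le t_s\) allows equality but \(t_s < t_r\) is strict.
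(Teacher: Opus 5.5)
Your proposal is correct and follows the paper's proof step for step. Both obtain a refutation of $F(\TRUE, t_0, \seq a)$, schedule it with \autoref{lem:linearization-lemma}, and start from the same initial extended configuration. Both iterate \autoref{lem:simulation-of-sld-step-without-fail} together with \autoref{lem:well-formedness-invariant} up to $G_k$, finish with \autoref{lem:simulation-of-sld-step-with-fail}, and erase the extra components. The worry in your ``gap'' paragraph is already settled by condition~2 of \autoref{lem:linearization-lemma}: a body-free step strictly shrinks the goal, so it cannot occur before $k$. Your rule-by-rule erasure check just spells out what the paper calls ``removing the additional information''.
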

\begin{proof}
  Assume that $\CHCSystem$ is unsatisfiable.
  By \autoref{lem:linearization-lemma}, there exist goals $G_0, \ldots, G_m$ and an index $k$ such that
  \[
    G_0 = \Goal{F(\TRUE, t_0, \seq{a})}, \quad G_0 \SLDstep{\CHCSystem} G_1 \SLDstep{\CHCSystem} \cdots \SLDstep{\CHCSystem} G_m = \Goal{}
  \]
  and satisfying the conditions in \autoref{lem:linearization-lemma}.

  Let $\EConf = (\SET{(i, f(\seq{a}))}, \emptyset, \SET{i \mapsto t_0}, \emptyset, \SET{i \mapsto \TRUE})$. Then \( \transmap \EConf = G_0 \). Since $\EConf$ is well-formed and may fail, by Lemmas \ref{lem:well-formedness-invariant}, \ref{lem:simulation-of-sld-step-without-fail} and \ref{lem:simulation-of-sld-step-with-fail}, we have $\EConf \mstepC{D} \FAIL$. By removing the additional information attached to extended configurations, we obtain $(\SET{(i, f(\seq{a}))}, \emptyset) \mstepC{D} \FAIL$.
\end{proof}

%% file: sections/EvaluationAppx.tex
\section{Benchmark Programs for the Experiments in \autoref{sec:implementation-evaluation}}
\label{appx:evaluation-examples}

\begin{example}
\label{ex:running-ex-e}
  The program of the benchmark \texttt{running-ex-e} is shown in \autoref{fig:multi-thread-send-e}. In this program, the loop that spawns sender threads iterates one more time than in \autoref{ex:running-ex}, which causes the assertion to fail when the number of messages received exceeds \(n\).
\begin{figure}[tbp]
\begin{verbatim}
fn msg_count(n: usize) {
    let (s, r) = channel();
    for i in 0..=n { // spawn n senders
        let s_clone = s.clone();
        thread::spawn(move || {
            s_clone.send(1).unwrap(); // each thread sends 1 to the channel
        });
    }
    let mut c = 0; // number of messages received
    loop {  // repeatedly receive and count messages
        let _v = r.recv().unwrap();
        c += 1;  // increments the message counter
        assert!(c <= n); // error if more than n messages are received
    }
}
\end{verbatim}
\caption{Program in \autoref{ex:running-ex} with bugs.}
\label{fig:multi-thread-send-e}
\end{figure}
\end{example}

\begin{example}
\label{ex:ack-error}
  The program of the benchmark \texttt{ack-e} is shown in \autoref{fig:ack-error}. In this program, the order of the values sent is reversed compared with \autoref{ex:dependency-threads}.
\begin{figure}[tbp]
\begin{verbatim}
fn main_with_bugs() {
  let (tx1, rx) = channel();
  let (ack_s, ack_r) = channel();
  let tx2 = tx1.clone();
  spawn(move || {
    tx1.send(2).unwrap();
  });
  spawn(move || {
    let x = ack_r.recv().unwrap();
    tx2.send(1).unwrap();
  });
  let v1 = rx.recv().unwrap();
  ack_s.send(1).unwrap();
  let v2 = rx.recv().unwrap();
  assert!(v1 == 1 && v2 == 2);
}
\end{verbatim}
  \caption{Program in \autoref{ex:dependency-threads} with bugs.}
\label{fig:ack-error}
\end{figure}
\end{example}

\begin{example}
\label{ex:multi-sends}
  The programs of the benchmarks \texttt{multi-sends} and \texttt{multi-sends-e} are shown in \autoref{fig:multi-sends}.
\begin{figure}[tbp]
\begin{multicols}{2}
\begin{verbatim}
fn main(n: usize) {
    let (s, r) = channel();
    spawn(move || {
        for i in 0..n {
            s.send(1).unwrap();
        }
    });
    let mut sum = 0;
    for i in 0..n {
        let v = r.recv().unwrap();
        sum += v;
    }
    assert!(sum == n);
}
\end{verbatim}
\begin{verbatim}
fn main_with_bugs(n: usize) {
    let (s, r) = channel();
    spawn(move || {
        for i in 1..n {
            s.send(1).unwrap();
        }
    });
    let mut sum = 0;
    for i in 1..n {
        let v = r.recv().unwrap();
        sum += v;
    }
    assert!(sum == n);
}
\end{verbatim}
\end{multicols}
\caption{Multiple sends program.}
\label{fig:multi-sends}
\end{figure}
  In the function \texttt{main}, one thread sends the value $1$ $n$ times, while another thread receives it $n$ times and asserts that their sum equals $n$. In the function \texttt{main\_with\_bugs}, although it should send and receive values $n$ times, it mistakenly does so only $n-1$ times.
\end{example}

\begin{example}
\label{ex:client-server}
  The program of the benchmark \texttt{client-server-e} is shown in \autoref{fig:client-server}. In the function \texttt{main\_with\_bugs}, although it should send values $n$ times, it mistakenly does so only $n-1$ times.
\begin{figure}[tbp]
\begin{verbatim}
fn main_with_bugs(n: usize) {
    let (s1, r1) = channel();
    let (s2, r2) = channel();
    spawn(move || {
        loop {
            let v = r2.recv().unwrap();
            s1.send(v + 1).unwrap();
        }
    });
    let mut sum = 0;
    for i in 1..n {
        s2.send(1).unwrap();
        let v = r1.recv().unwrap();
        sum += v;
    }
    assert!(sum == n * 2);
}
\end{verbatim}
\caption{Client-server program with bugs.}
\label{fig:client-server-bug}
\end{figure}
\end{example}

\begin{example}
\label{ex:functional-correctness}
  The programs of the benchmarks \texttt{calc-server} and \texttt{calc-server-e} are shown in \autoref{fig:functional-correctness} and \autoref{fig:functional-correctness-fail}. In the function \texttt{main}, a spawned thread acts as a server that performs addition when it receives $0$ as a command and subtraction when it receives $1$. The main thread sends $cmd$, $x$, and $y$ to the server in this order and asserts that the returned result is correct. In the function \texttt{main\_with\_bugs}, commands $0$ and $1$ are erroneously swapped.
\begin{figure}[tbp]
\begin{verbatim}
fn main(cmd: isize, x: isize, y: isize) {
    let (cmd_s, cmd_r) = channel();
    let (result_s, result_r) = channel();
    spawn(move || server(cmd_r, result_s));

    cmd_s.send(cmd).unwrap();
    cmd_s.send(x).unwrap();
    cmd_s.send(y).unwrap();  

    let result = result_r.recv().unwrap();
    if cmd == 0 {
        assert!(result == x + y);
    } else if cmd == 1 {
        assert!(result == x - y);
    }
    else {
        assert!(result == 0);
    }
}

fn server(cmd_r: Receiver<isize>, result_s: Sender<isize>) {
    loop {
        let cmd = cmd_r.recv().unwrap();
        let x = cmd_r.recv().unwrap();
        let y = cmd_r.recv().unwrap();
        if cmd == 0 {
            result_s.send(x + y).unwrap();
        } else if cmd == 1 {
            result_s.send(x - y).unwrap();
        } else {
            result_s.send(0).unwrap();
        }
    }
}
\end{verbatim}
\caption{More complex client-server program.}
\label{fig:functional-correctness}
\end{figure}

\begin{figure}[tbp]
\begin{verbatim}
fn main_with_bugs(cmd: isize, x: isize, y: isize) {
    let (cmd_s, cmd_r) = channel();
    let (result_s, result_r) = channel();
    spawn(move || server(cmd_r, result_s));

    cmd_s.send(cmd).unwrap();
    cmd_s.send(x).unwrap();
    cmd_s.send(y).unwrap();  

    let result = result_r.recv().unwrap();
    if cmd == 0 {
        assert!(result == x + y);
    } else if cmd == 1 {
        assert!(result == x - y);
    }
    else {
        assert!(result == 0);
    }
}

fn server(cmd_r: Receiver<isize>, result_s: Sender<isize>) {
    loop {
        let cmd = cmd_r.recv().unwrap();
        let x = cmd_r.recv().unwrap();
        let y = cmd_r.recv().unwrap();
        if cmd == 0 {
            result_s.send(x - y).unwrap();
        } else if cmd == 1 {
            result_s.send(x + y).unwrap();
        } else {
            result_s.send(0).unwrap();
        }
    }
}
\end{verbatim}
\caption{More complex client-server program with bugs.}
\label{fig:functional-correctness-fail}
\end{figure}
\end{example}

%% file: paper.bbl
\begin{thebibliography}{10}

\bibitem{abadi1991existence}
Mart{\'\i}n Abadi and Leslie Lamport.
\newblock The existence of refinement mappings.
\newblock {\em Theoretical Computer Science}, 82(2):253--284, 1991.

\bibitem{DBLP:conf/lics/AbdullaJ93}
Parosh~Aziz Abdulla and Bengt Jonsson.
\newblock Verifying programs with unreliable channels.
\newblock In {\em Proceedings of the Eighth Annual Symposium on Logic in
  Computer Science {(LICS} '93), Montreal, Canada, June 19-23, 1993}, pages
  160--170. {IEEE} Computer Society, 1993.
\newblock \href {https://doi.org/10.1109/LICS.1993.287591}
  {\path{doi:10.1109/LICS.1993.287591}}.

\bibitem{PrinciplesMC}
Christel Baier and Joost-Pieter Katoen.
\newblock {\em Principles of Model Checking}.
\newblock The MIT Press, 2008.

\bibitem{DBLP:conf/birthday/BeauxisPV08}
Romain Beauxis, Catuscia Palamidessi, and Frank~D. Valencia.
\newblock On the asynchronous nature of the asynchronous pi-calculus.
\newblock In Pierpaolo Degano, Rocco~De Nicola, and Jos{\'{e}} Meseguer,
  editors, {\em Concurrency, Graphs and Models, Essays Dedicated to Ugo
  Montanari on the Occasion of His 65th Birthday}, Lecture Notes in Computer
  Science, pages 473--492. Springer, 2008.
\newblock \href {https://doi.org/10.1007/978-3-540-68679-8\_29}
  {\path{doi:10.1007/978-3-540-68679-8\_29}}.

\bibitem{Bjorner15}
Nikolaj Bj{\o}rner, Arie Gurfinkel, Kenneth~L. McMillan, and Andrey
  Rybalchenko.
\newblock {Horn} clause solvers for program verification.
\newblock In {\em Fields of Logic and Computation {II} - Essays Dedicated to
  Yuri Gurevich on the Occasion of His 75th Birthday}, volume 9300 of {\em
  LNCS}, pages 24--51. Springer, 2015.
\newblock \href {https://doi.org/10.1007/978-3-319-23534-9_2}
  {\path{doi:10.1007/978-3-319-23534-9_2}}.

\bibitem{DBLP:journals/pacmpl/BurnOR18}
Toby~Cathcart Burn, C.{-}H.~Luke Ong, and Steven~J. Ramsay.
\newblock Higher-order constrained {Horn} clauses for verification.
\newblock {\em Proc. {ACM} Program. Lang.}, 2({POPL}):11:1--11:28, 2018.
\newblock \href {https://doi.org/10.1145/3158099} {\path{doi:10.1145/3158099}}.

\bibitem{DBLP:conf/aplas/Champion0S18}
Adrien Champion, Naoki Kobayashi, and Ryosuke Sato.
\newblock {HoIce}: An {ICE}-based non-linear {Horn} clause solver.
\newblock In Sukyoung Ryu, editor, {\em Programming Languages and Systems -
  16th Asian Symposium, {APLAS} 2018, Wellington, New Zealand, December 2-6,
  2018, Proceedings}, volume 11275 of {\em Lecture Notes in Computer Science},
  pages 146--156. Springer, 2018.
\newblock \href {https://doi.org/10.1007/978-3-030-02768-1\_8}
  {\path{doi:10.1007/978-3-030-02768-1\_8}}.

\bibitem{ClarkeModelChecking}
Edmund~M. Clarke, Orna Grumberg, and Doron~A. Peled.
\newblock {\em Model Checking}.
\newblock The MIT Press, 1999.

\bibitem{HandbookMC}
Edmund~M. Clarke, Thomas~A. Henzinger, Helmut Veith, and Roderick Bloem,
  editors.
\newblock {\em Handbook of Model Checking}.
\newblock Springer, 2018.

\bibitem{Angelis}
Emanuele {De Angelis}, Fabio Fioravanti, John~P. Gallagher, Manuel~V.
  Hermenegildo, Alberto Pettorossi, and Maurizio Proietti.
\newblock Analysis and transformation of constrained {Horn} clauses for program
  verification.
\newblock {\em Theory and Practice of Logic Programming}, 22(6):974–1042,
  2022.
\newblock \href {https://doi.org/10.1017/S1471068421000211}
  {\path{doi:10.1017/S1471068421000211}}.

\bibitem{denis2022creusot}
Xavier Denis, Jacques-Henri Jourdan, and Claude March{\'e}.
\newblock Creusot: A foundry for the deductive verification of {Rust} programs.
\newblock In {\em International Conference on Formal Engineering Methods},
  pages 90--105. Springer, 2022.

\bibitem{DBLP:conf/esop/FlanaganFQ02}
Cormac Flanagan, Stephen~N. Freund, and Shaz Qadeer.
\newblock Thread-modular verification for shared-memory programs.
\newblock In Daniel~Le M{\'{e}}tayer, editor, {\em Programming Languages and
  Systems, 11th European Symposium on Programming, {ESOP} 2002, held as Part of
  the Joint European Conference on Theory and Practice of Software, {ETAPS}
  2002, Grenoble, France, April 8-12, 2002, Proceedings}, Lecture Notes in
  Computer Science, pages 262--277. Springer, 2002.
\newblock \href {https://doi.org/10.1007/3-540-45927-8\_19}
  {\path{doi:10.1007/3-540-45927-8\_19}}.

\bibitem{travis2026verusbelt}
Travis Hance, Laila Elbeheiry, Yusuke Matsushita, and Derek Dreyer.
\newblock {VerusBelt}: A semantic foundation for {Verus}'s proof-oriented
  extensions to the {Rust} type system.
\newblock {\em Proc. ACM Program. Lang.}, 10(PLDI), June 2026.
\newblock \href {https://doi.org/10.1145/3808325} {\path{doi:10.1145/3808325}}.

\bibitem{DBLP:conf/cav/HenzingerJMQ03}
Thomas~A. Henzinger, Ranjit Jhala, Rupak Majumdar, and Shaz Qadeer.
\newblock Thread-modular abstraction refinement.
\newblock In Warren A.~Hunt Jr. and Fabio Somenzi, editors, {\em Computer Aided
  Verification, 15th International Conference, {CAV} 2003, Boulder, CO, USA,
  July 8-12, 2003, Proceedings}, Lecture Notes in Computer Science, pages
  262--274. Springer, 2003.
\newblock \href {https://doi.org/10.1007/978-3-540-45069-6\_27}
  {\path{doi:10.1007/978-3-540-45069-6\_27}}.

\bibitem{hojjat2018eldarica}
Hossein Hojjat and Philipp R{\"u}mmer.
\newblock The {ELDARICA} {Horn} solver.
\newblock In {\em 2018 Formal Methods in Computer Aided Design (FMCAD)}, pages
  1--7. IEEE, 2018.

\bibitem{huttel2016foundations}
Hans H{\"u}ttel, Ivan Lanese, Vasco~T Vasconcelos, Lu{\'\i}s Caires, Marco
  Carbone, Pierre-Malo Deni{\'e}lou, Dimitris Mostrous, Luca Padovani,
  Ant{\'o}nio Ravara, Emilio Tuosto, et~al.
\newblock Foundations of session types and behavioural contracts.
\newblock {\em ACM Computing Surveys (CSUR)}, 49(1):1--36, 2016.

\bibitem{jung2019future}
Ralf Jung, Rodolphe Lepigre, Gaurav Parthasarathy, Marianna Rapoport, Amin
  Timany, Derek Dreyer, and Bart Jacobs.
\newblock The future is ours: prophecy variables in separation logic.
\newblock {\em Proceedings of the ACM on Programming Languages}, 4(POPL):1--32,
  2019.

\bibitem{jung2015iris}
Ralf Jung, David Swasey, Filip Sieczkowski, Kasper Svendsen, Aaron Turon, Lars
  Birkedal, and Derek Dreyer.
\newblock Iris: Monoids and invariants as an orthogonal basis for concurrent
  reasoning.
\newblock In Sriram~K. Rajamani and David Walker, editors, {\em Proceedings of
  the 42nd Annual {ACM} {SIGPLAN-SIGACT} Symposium on Principles of Programming
  Languages, {POPL} 2015, Mumbai, India, January 15-17, 2015}, pages 637--650.
  {ACM}, 2015.
\newblock \href {https://doi.org/10.1145/2676726.2676980}
  {\path{doi:10.1145/2676726.2676980}}.

\bibitem{katsura2025automated}
Hiroyuki Katsura, Naoki Kobayashi, Ken Sakayori, and Ryosuke Sato.
\newblock Automated catamorphism synthesis for solving constrained {Horn}
  clauses over algebraic data types.
\newblock In {\em International Static Analysis Symposium}, pages 305--327.
  Springer, 2025.

\bibitem{kobayashi2003type}
Naoki Kobayashi.
\newblock Type systems for concurrent programs.
\newblock In {\em Formal Methods at the Crossroads. From Panacea to
  Foundational Support: 10th Anniversary Colloquium of UNU/IIST, the
  International Institute for Software Technology of The United Nations
  University, Lisbon, Portugal, March 18-20, 2002. Revised Papers}, pages
  439--453. Springer, 2003.

\bibitem{DBLP:conf/sas/0001NIU19}
Naoki Kobayashi, Takeshi Nishikawa, Atsushi Igarashi, and Hiroshi Unno.
\newblock Temporal verification of programs via first-order fixpoint logic.
\newblock In Bor{-}Yuh~Evan Chang, editor, {\em Static Analysis - 26th
  International Symposium, {SAS} 2019, Porto, Portugal, October 8-11, 2019,
  Proceedings}, volume 11822 of {\em Lecture Notes in Computer Science}, pages
  413--436. Springer, 2019.
\newblock \href {https://doi.org/10.1007/978-3-030-32304-2\_20}
  {\path{doi:10.1007/978-3-030-32304-2\_20}}.

\bibitem{DBLP:journals/pacmpl/KobayashiTST23}
Naoki Kobayashi, Kento Tanahashi, Ryosuke Sato, and Takeshi Tsukada.
\newblock {HFL(Z)} validity checking for automated program verification.
\newblock {\em Proc. {ACM} Program. Lang.}, 7({POPL}):154--184, 2023.
\newblock \href {https://doi.org/10.1145/3571199} {\path{doi:10.1145/3571199}}.

\bibitem{DBLP:conf/esop/0001TW18}
Naoki Kobayashi, Takeshi Tsukada, and Keiichi Watanabe.
\newblock Higher-order program verification via {HFL} model checking.
\newblock In Amal Ahmed, editor, {\em Programming Languages and Systems - 27th
  European Symposium on Programming, {ESOP} 2018, Held as Part of the European
  Joint Conferences on Theory and Practice of Software, {ETAPS} 2018,
  Thessaloniki, Greece, April 14-20, 2018, Proceedings}, Lecture Notes in
  Computer Science, pages 711--738. Springer, 2018.
\newblock \href {https://doi.org/10.1007/978-3-319-89884-1\_25}
  {\path{doi:10.1007/978-3-319-89884-1\_25}}.

\bibitem{DBLP:journals/fmsd/KomuravelliGC16}
Anvesh Komuravelli, Arie Gurfinkel, and Sagar Chaki.
\newblock {SMT}-based model checking for recursive programs.
\newblock {\em Formal Methods Syst. Des.}, 48(3):175--205, 2016.
\newblock URL: \url{https://doi.org/10.1007/s10703-016-0249-4}.

\bibitem{lattuada2023verus}
Andrea Lattuada, Travis Hance, Chanhee Cho, Matthias Brun, Isitha Subasinghe,
  Yi~Zhou, Jon Howell, Bryan Parno, and Chris Hawblitzel.
\newblock Verus: Verifying {Rust} programs using linear ghost types.
\newblock {\em Proc. {ACM} Program. Lang.}, 7({OOPSLA1}):286--315, 2023.
\newblock \href {https://doi.org/10.1145/3586037} {\path{doi:10.1145/3586037}}.

\bibitem{10.1007/978-3-540-74061-2_14}
Alexander Malkis, Andreas Podelski, and Andrey Rybalchenko.
\newblock Precise thread-modular verification.
\newblock In Hanne~Riis Nielson and Gilberto Fil{\'e}, editors, {\em Static
  Analysis}, pages 218--232, Berlin, Heidelberg, 2007. Springer Berlin
  Heidelberg.

\bibitem{matsushita2022rusthornbelt}
Yusuke Matsushita, Xavier Denis, Jacques{-}Henri Jourdan, and Derek Dreyer.
\newblock {RustHornBelt}: A semantic foundation for functional verification of
  {Rust} programs with unsafe code.
\newblock In Ranjit Jhala and Isil Dillig, editors, {\em {PLDI} '22: 43rd {ACM}
  {SIGPLAN} International Conference on Programming Language Design and
  Implementation, San Diego, CA, USA, June 13 - 17, 2022}, pages 841--856.
  {ACM}, 2022.
\newblock \href {https://doi.org/10.1145/3519939.3523704}
  {\path{doi:10.1145/3519939.3523704}}.

\bibitem{matsushita2025nola}
Yusuke Matsushita and Takeshi Tsukada.
\newblock Nola: Later-free ghost state for verifying termination in {Iris}.
\newblock {\em Proc. {ACM} Program. Lang.}, 9({PLDI}):98--124, 2025.
\newblock \href {https://doi.org/10.1145/3729250} {\path{doi:10.1145/3729250}}.

\bibitem{matsushita2021rusthorn}
Yusuke Matsushita, Takeshi Tsukada, and Naoki Kobayashi.
\newblock {RustHorn}: {CHC}-based verification for {Rust} programs.
\newblock {\em ACM Transactions on Programming Languages and Systems (TOPLAS)},
  43(4):1--54, 2021.

\bibitem{ogawa2025thrust}
Hiromi Ogawa, Taro Sekiyama, and Hiroshi Unno.
\newblock Thrust: {A} prophecy-based refinement type system for {Rust}.
\newblock {\em Proc. {ACM} Program. Lang.}, 9({PLDI}):2056--2080, 2025.
\newblock \href {https://doi.org/10.1145/3729333} {\path{doi:10.1145/3729333}}.

\bibitem{10.1007/978-3-032-22723-2_11}
Toby Ueno and Ankush Das.
\newblock Practical refinement session type inference.
\newblock In Robbert Krebbers, editor, {\em Programming Languages and Systems},
  pages 298--330, Cham, 2026. Springer Nature Switzerland.

\bibitem{DBLP:journals/pacmpl/UnnoTGK23}
Hiroshi Unno, Tachio Terauchi, Yu~Gu, and Eric Koskinen.
\newblock Modular primal-dual fixpoint logic solving for temporal verification.
\newblock {\em Proc. {ACM} Program. Lang.}, 7({POPL}):2111--2140, 2023.
\newblock \href {https://doi.org/10.1145/3571265} {\path{doi:10.1145/3571265}}.

\end{thebibliography}
